\documentclass[
aps,
prx,
10pt,
twocolumn,
groupedaddress,
longbibliography,
nofootinbib,
floatfix
]{revtex4-2}

\usepackage{amsmath,amssymb,amsthm,mathtools}
\usepackage{graphicx}
\usepackage{booktabs}
\usepackage{array}
\usepackage{bm}
\usepackage{microtype}
\usepackage{xcolor}
\usepackage{tikz}
\usepackage[ruled,noend]{algorithm2e}
\SetAlgoNlRelativeSize{0}
\SetAlgoSkip{}
\DontPrintSemicolon
\SetAlgorithmName{ALGORITHM}{ALGORITHM}{List of Algorithms}
\SetAlCapSty{}
\SetAlCapNameFnt{\footnotesize}
\SetAlCapFnt{\footnotesize}

\makeatletter
\renewcommand{\algocf@capseparator}{.~}
\makeatother
\usepackage{enumitem}
\usepackage{silence}
\usepackage{hyperref}
\hypersetup{colorlinks=true,linkcolor=blue,citecolor=blue,urlcolor=blue,hypertexnames=false}
\usepackage{orcidlink}
\newcommand{\nSemanticTotal}{160}

\newcommand{\nBBSeventyTwoPoints}{132}
\newcommand{\nBBNinetyPoints}{6}
\newcommand{\nBBOneOhEightPoints}{6}
\newcommand{\nBBOneFortyFourPoints}{16}
\newcommand{\baselineCorrelationN}{101}
\newcommand{\baselineSpearmanRho}{0.893}
\newcommand{\baselineSpearmanP}{3.62\times10^{-36}}
\newcommand{\bbSeventyTwoCrossMonoLER}{0.2532}
\newcommand{\bbSeventyTwoCrossMonoLERLo}{0.2470}
\newcommand{\bbSeventyTwoCrossMonoLERHi}{0.2595}
\newcommand{\bbSeventyTwoCrossBiLERHi}{0.000368}

\newcommand{\bbSeventyTwoCrossSupportCrossings}{15}
\newcommand{\bbSeventyTwoCrossMatching}{3}
\newcommand{\bbSeventyTwoCrossDeffMono}{3}
\newcommand{\bbSeventyTwoCrossDeffBi}{6}
\newcommand{\bbSeventyTwoPowerMonoLER}{0.2868}
\newcommand{\bbSeventyTwoPowerMonoLERLo}{0.2839}
\newcommand{\bbSeventyTwoPowerMonoLERHi}{0.2897}
\newcommand{\bbSeventyTwoPowerBiLER}{0.0581}
\newcommand{\bbSeventyTwoPowerBiLERLo}{0.0566}
\newcommand{\bbSeventyTwoPowerBiLERHi}{0.0596}
\newcommand{\bbSeventyTwoPowerLAFourLER}{0.2054}
\newcommand{\bbSeventyTwoPowerLAFourLERLo}{0.2025}
\newcommand{\bbSeventyTwoPowerLAFourLERHi}{0.2082}
\newcommand{\bbSeventyTwoPowerLERRatio}{4.938}
\newcommand{\bbSeventyTwoMonoExposure}{0.0246}
\newcommand{\bbSeventyTwoBiExposure}{0.0150}
\newcommand{\bbSeventyTwoPowerAggProbMono}{0.0522}
\newcommand{\bbSeventyTwoPowerAggProbBi}{0.0081}
\newcommand{\bbOneFortyFourPowerMonoLER}{0.3220}
\newcommand{\bbOneFortyFourPowerMonoLERLo}{0.3125}
\newcommand{\bbOneFortyFourPowerMonoLERHi}{0.3316}
\newcommand{\bbOneFortyFourPowerBiLER}{0.0031}
\newcommand{\bbOneFortyFourPowerBiLERLo}{0.0025}
\newcommand{\bbOneFortyFourPowerBiLERHi}{0.0038}
\newcommand{\bbOneFortyFourPowerLERRatio}{103.9}
\newcommand{\bbSeventyTwoPureLFamilyCount}{36}
\newcommand{\bbSeventyTwoLAMaxExposureMono}{0.0303}
\newcommand{\bbSeventyTwoLAMaxExposureLA}{0.0224}
\newcommand{\bbSeventyTwoLAMaxExposureBi}{0.0170}
\newcommand{\bbSeventyTwoLAImprovementPct}{26.11}
\newcommand{\bbSeventyTwoBiImprovementPct}{44.03}

\newtheorem{theorem}{Theorem}
\newtheorem{proposition}{Proposition}
\newtheorem{lemma}{Lemma}
\newtheorem{corollary}{Corollary}
\theoremstyle{definition}
\newtheorem{definition}{Definition}
\newtheorem{remark}{Remark}
\newtheorem{assumption}{Assumption}

\DeclareMathOperator{\supp}{supp}
\DeclareMathOperator{\wt}{wt}

\DeclareMathOperator{\kerop}{ker}

\DeclareMathOperator*{\argmin}{arg\,min}

\newcommand{\F}{\mathbb{F}_2}
\newcommand{\I}{\hat{\mathbb{I}}}
\newcommand{\Ew}{\mathcal{W}}
\newcommand{\Pgroup}{\mathbb{P}}

\newcommand{\RX}{\mathcal{R}_X}
\newcommand{\TL}{T_L}

\newcommand{\bb}[1]{\ensuremath{\left[\!\left[#1\right]\!\right]}}
\newcommand{\Hx}{\hat H_{\times}}
\newcommand{\Hxt}{\hat H_{\times}^{(t)}}

\begin{document}

\title{Geometry-induced correlated noise in qLDPC syndrome extraction}

\author{Angelo Di Bella\,\orcidlink{0009-0005-5593-2993}}
\affiliation{Cavendish Laboratory, University of Cambridge,\\JJ Thomson Ave, Cambridge CB3 0HE, United Kingdom}
\email{ad2395@cam.ac.uk}

\begin{abstract}
Routed geometry is a device-level choice in a fixed syndrome-extraction circuit. Two embeddings of the same code can set different physical separations between gate blocks active in the same time step, and these separations control the residual coupling between those blocks. We derive how this choice shapes the leading correlated-fault structure of the effective data channel, and we test the consequences at circuit level. Starting from a geometry-conditioned interaction Hamiltonian on disjoint blocks within one tick, we obtain a retained data channel of single and pair faults for bivariate-bicycle codes, with a truncation error controlled by the per-tick coupling strength. Two geometry metrics emerge. In the combinatorial limit, a matching argument on the logical support reduces the effective fault weight on that support. For strictly positive kernels, once every support pair contributes somewhere in the schedule, the induced support graph becomes complete. At that point the matching-number reduction is exhausted, and the embedding-dependent quantity is the total retained pair weight on the support, which we call the \emph{weighted exposure}. Circuit-level Monte Carlo on the \bb{72,12,6} and \bb{144,12,12} benchmarks shows that a biplanar layout, with the schedule split across two routing planes, suppresses the geometry penalty incurred by the monomial layout in a single plane. On the BB72 baseline set of \baselineCorrelationN{} operating points, the reference-support weighted exposure is strongly correlated with the observed logical error rate (Spearman $\rho_\mathrm{S}=\baselineSpearmanRho$) in the tested window. A logical-aware two-swap local search over single-layer embeddings on BB72 reduces the worst-case family exposure by \bbSeventyTwoLAImprovementPct\% and lowers the logical error rate across the tested power-law window.
\end{abstract}

\maketitle

\section{Introduction}
\label{sec:intro}

Quantum low-density parity-check (qLDPC) codes---stabilizer codes whose parity-check matrices have bounded row and column weight---combine sparse stabilizer structure with finite-rate families, practical syndrome-extraction protocols, and efficient decoders. Bivariate-bicycle (BB) codes are concrete memory benchmarks with explicit layer decompositions and finite-length data~\cite{calderbank_shor_1996,steane_1996,breuckmann_eberhardt_2021,bravyi_memory_2024}. Long-range-coupled BB experiments make routing geometry a device-level variable rather than a schematic choice~\cite{wang_demo_2025}. Existing architecture work usually optimizes connectivity, routing depth, or hardware integration. It does not, however, carry a fixed code family and extraction schedule through the full chain from routed geometry to correlated fault model to logical performance. Two embeddings of the same Tanner graph---the bipartite graph connecting data qubits to parity checks---can set different physical separations between the gate blocks active in any one time step. Those separations alter the rate of correlated pair events, the effective fault weight on logical operators, and the noise model the decoder is implicitly assuming, all without changing the stabilizer algebra.

Three independent strands of recent work motivate the question and supply the ingredients to answer it. The first is a circuit-centric view of fault tolerance, in which the static parity-check matrices alone do not determine logical performance: residual-error metrics for syndrome-extraction design~\cite{strikis_sec_2026} and spacetime-code constructions~\cite{aitchison_beri_2025,pesah_spacetime_2025} demonstrate that schedule and routing choices change which fault patterns dominate. The second is the long-range-noise tradition initiated by Aharonov, Kitaev, and Preskill (AKP)~\cite{akp_2006}, which gives the asymptotic criterion under which pairwise long-range Hamiltonian noise remains compatible with fault tolerance, and which has been complemented by surface-code crosstalk studies~\cite{zhou_crosstalk_2025} showing that correlated couplings shift logical thresholds in finite-size simulations. The third is device-level evidence that residual inter-block coupling profiles in superconducting hardware depend on physical layout, couplers, shielding, and calibration~\cite{pettersson_zz_2024,aguila_flux_2025}; together these microscopic analyses justify treating routed separation as a control variable on the correlated-fault budget. Recent qLDPC decoding work, including correlation-aware extensions of belief propagation with ordered statistics decoding (BP+OSD)~\cite{roffe_bp_osd_2020,ducrest_capbp_2024,hillmann_lsd_2025,mueller_relaybp_2025,maan_correlated_2026,sahay_matching_2026}, indicates that geometry-induced correlations can in principle be folded into production decoders, so the question of how much logical-performance signal a single fixed decoder still sees is well posed.

What this literature does not yet provide is a single chain that takes a fixed code family and extraction schedule, derives from a microscopic interaction model the resulting correlated-fault structure under different routed embeddings, and tests the difference at circuit level for a fixed decoder. The present work supplies that chain for bivariate-bicycle codes on a routed Calderbank--Shor--Steane (CSS) extraction circuit. Under the same-tick model developed below, in which the inter-block interaction term dominates, routed geometry changes the retained pair-fault structure, and in turn the effective fault weight on logical supports and the finite-coupling logical performance. Table~\ref{tab:related-work} locates the present work against the immediate precedents.

\begin{table}[!htbp]
\caption{Comparison with recent qLDPC architecture work. Each column indicates whether the reference treats routed geometry, correlated noise, a circuit-level analysis, and/or a geometry-derived metric.}\label{tab:related-work}
\footnotesize
\renewcommand{\arraystretch}{1.2}%
\begin{ruledtabular}
\begin{tabular}{ccccc}
Reference & Geometry & Corr.\ noise & Circuit & Metric \\
\colrule
\onlinecite{bravyi_memory_2024} & layer split & & threshold & \\
\onlinecite{berthusen_local_2025} & 2D-local & & threshold & connectivity \\
\onlinecite{mathews_multilayer_2025} & multilayer & & & route length \\
\onlinecite{wang_demo_2025} & hardware & device & experiment & \\
\onlinecite{strikis_sec_2026} & & & residual error & \\
\onlinecite{zhou_crosstalk_2025} & & crosstalk & $\Delta$threshold & \\
\onlinecite{maan_correlated_2026} & & corr.\ prior & decoder & \\
This work & routed & kernel model & Monte Carlo & exposure \\
\end{tabular}
\end{ruledtabular}
\end{table}

\begin{figure*}[!t]
\centering
\includegraphics[width=0.97\textwidth]{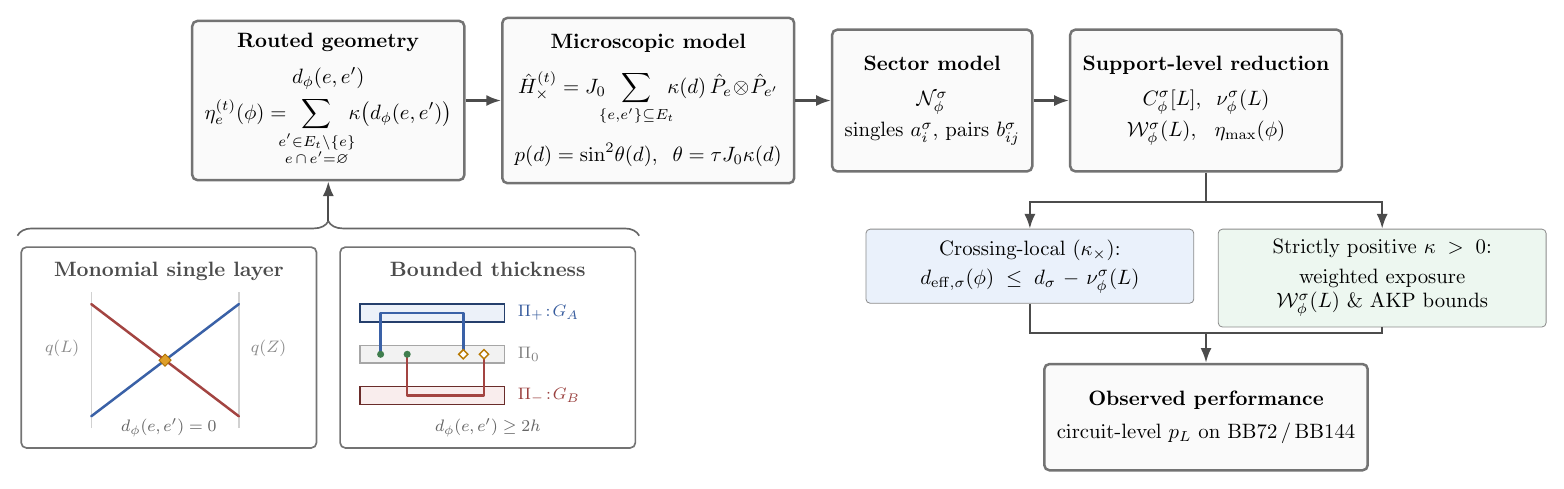}
\caption{Geometry-to-noise-to-performance pipeline. The left block contrasts the monomial single-plane layout with the biplanar bounded-thickness layout. Central boxes: microscopic Hamiltonian, twirl, retained channel, and support-graph objects. Right: circuit-level logical performance on BB72 and BB144. Branches beneath the support box distinguish the crossing-local diagnostic ($\kappa_\times$) from the strictly-positive-kernel regime ($\kappa>0$).}
\label{fig:pipeline}
\end{figure*}

The central question is whether routed geometry, on its own, changes the dominant correlated-fault structure and logical performance of a fixed code and syndrome-extraction schedule. Within the same-tick model studied below, it does. The analysis yields a combinatorial crossing diagnostic in the zero-separation limit and, under strictly positive distance-decay kernels, a weighted support-exposure metric that replaces the crossing count at finite coupling. The remainder of the paper derives these quantities and tests them at circuit level on BB72 and BB144.

We first formulate the same-tick model in the interaction-dominated regime and carry out the Pauli twirl on the resulting pair Hamiltonian. We then derive the retained data channel of single and pair faults that underlies the circuit-level sampling model used in the simulations.

Two geometry metrics arise from this channel. Under the assumption that every support vertex admits a retained single-location sector fault, the effective fault weight on a logical support equals the support size minus the matching number of the retained correlation graph induced on that support. The matching number is the size of a maximum set of retained pair-fault locations on the support with no shared qubits. For the worked BB72 weight-$6$ pure-$q(L)$ support---meaning a logical operator supported entirely on the $q(L)$ register---this effective weight drops from $6$ to $3$ in the monomial embedding and is unchanged in the biplanar embedding. Under any strictly positive kernel for which every support pair appears in at least one retained round with finite separation, the retained support graph saturates, the matching-number bound becomes embedding-independent, and weighted exposure---the total retained pair weight on the support---takes over as the finite-coupling discriminant between embeddings. For two-dimensional layouts with a regularized algebraic kernel of decay exponent $\alpha$, the same analysis yields the planar summability threshold $\alpha > 2$ and an AKP smallness criterion.

These results lead to a design objective for single-layer embeddings, where weighted exposure on the chosen logical family takes the place of routing depth. On the BB72 pure-$q(L)$ logical family, exhaustive enumeration of the \bbSeventyTwoPureLFamilyCount{} weight-$6$ supports reduces the maximum weighted exposure by \bbSeventyTwoLAImprovementPct\% relative to the monomial layout. BB72 is the main validation target, and BB144 provides a scaling check. Figure~\ref{fig:pipeline} summarizes the geometry-to-performance chain.

The paper's claims fall into four categories, tracked explicitly in Appendix~\ref{app:tables}. The microscopic-to-channel reduction, the support-graph identity for the effective fault weight, and the two-sided exposure bound are proved from Assumptions 1--3. The retention of only single and pair faults is controlled to fourth order in the per-tick coupling strength by the diamond-norm bound of Proposition~\ref{prop:truncation} and Theorem~\ref{thm:schedule-prop}. The retained channel form, the logical-aware objective, and the linearized single/pair sampling model are stated as modeling choices on which the finite-coupling analysis depends. The tracking of the logical error rate (LER) by weighted exposure, and the ordering of the monomial, biplanar, and logical-aware layouts, is empirical.

Sections~\ref{sec:model} and~\ref{sec:geom} introduce the microscopic model, the retained channel, the embedding families, and the kernel families. Section~\ref{sec:theory} extracts the two geometry metrics. Section~\ref{sec:logical-aware} turns the metrics into a single-layer design objective, and Section~\ref{sec:results} tests the resulting claims at circuit level. Appendices~\ref{app:two-block}--\ref{app:hardware} contain the general two-block decomposition, supplementary figures, diagnostics, and hardware-data comparisons.

\section{Microscopic model and retained effective channel}
\label{sec:model}

This section (i) introduces the open-system Hamiltonian and the proximity-kernel assumption that fixes the geometry-induced interaction, (ii) carries out the same-tick Pauli twirl on the resulting tick-level Hamiltonian, and (iii) reduces the twirled channel to the retained single-fault and pair-fault data channel used in the simulations.

We study repeated syndrome extraction for CSS codes~\cite{calderbank_shor_1996,steane_1996}, in which $X$-type and $Z$-type errors are corrected independently by separate parity-check matrices $H_X$ and $H_Z$ satisfying $H_X H_Z^\top=0$ over the binary field $\F$. A code encoding $k$ logical qubits into $n$ physical qubits with minimum logical-operator weight $d$ is denoted $\bb{n,k,d}$.

A BB code over the group algebra $\F[\mathbb{Z}_\ell\times\mathbb{Z}_m]$---formal $\F$-linear combinations of monomials $x^a y^b$ in the abelian group $\mathbb{Z}_\ell\times\mathbb{Z}_m$---has $n=2\ell m$ data qubits partitioned into four registers $q(X)$, $q(L)$, $q(R)$, and $q(Z)$ of size $M:=\ell m$. The two parity-check matrices are circulants $A$ and $B$, each determined by three monomial terms, and satisfy $AB^\top=BA^\top$~\cite{bravyi_memory_2024}.

The code-independent constructions below apply to any routed CSS schedule. The numerical study uses the depth-$8$ BB schedule of Ref.~\onlinecite{bravyi_memory_2024}, in which one syndrome cycle comprises three $A$-family and three $B$-family \textsc{cnot} rounds together with initialization and measurement. Of the four registers, $q(L)$ and $q(R)$ carry data qubits, $q(X)$ and $q(Z)$ carry syndrome ancillas, and all four have size $M=\ell m$. The sector-relevant rounds are the three $B$ rounds determined by the monomial terms of $B=B_1+B_2+B_3$. In the $X$ sector, these are the $q(L)\to q(Z)$ \textsc{cnot}s, and a same-tick geometry event in one of them lands, after Clifford propagation and ancilla elimination, on two $q(L)$ data qubits. In the $Z$ sector, they are the $q(X)\to q(R)$ \textsc{cnot}s, and the corresponding pair event lands on two $q(R)$ data qubits.

\subsection{Open system and routed geometry}

Throughout this section, we set $\hbar=1$. The microscopic starting point is the open-system Hamiltonian decomposition
\begin{equation}
\hat H(t)=\hat H_{\mathrm S}(t)+\hat H_{\mathrm B}+\hat H_{\mathrm{SB}}(t)+\hat\varepsilon(t),
\label{eq:open-system-main}
\end{equation}
where $\hat H_{\mathrm S}(t)$ implements the ideal syndrome-extraction schedule, $\hat H_{\mathrm B}$ is the bath Hamiltonian, $\hat H_{\mathrm{SB}}(t)$ is the system--bath coupling, and $\hat\varepsilon(t)$ collects coherent system-only imperfections. The geometry-induced coherent contribution is denoted by $\Hx(t)$ and treated as part of $\hat\varepsilon(t)$, so routed separation controls a distinct part of the coherent error budget.

An embedding $\phi$ fixes qubit positions and a routed curve for each active two-qubit gate block. All in-plane distances are measured in units of the placement pitch, defined as the center-to-center spacing between adjacent qubit sites on the chip. A bounded-thickness layout uses finitely many parallel routing planes separated by a fixed layer spacing, and same-tick pair couplings depend only on the routed closest approach.

\begin{assumption}[Tick structure and routed separations]
A tick is a time window of duration $\tau$ during which a set $E_t$ of disjoint two-qubit gates is executed, together with any accompanying single-qubit gates, idles, measurements, or resets. Each element $e\in E_t$ is an unordered pair of qubit indices, denoting the two qubits acted on by the gate; disjointness means $e\cap e'=\varnothing$ for distinct $e,e'\in E_t$. For a routed embedding $\phi$, the separation between simultaneously active gate blocks is the finite geometric quantity $d_{\phi}(e,e')\ge 0$.
\end{assumption}

Any Hermitian perturbation acting on two disjoint gate blocks has a unique decomposition into a scalar, two single-block operators, and an inter-block interaction with vanishing partial traces (Appendix~\ref{app:two-block}). Under a Pauli twirl, only the inter-block interaction generates correlated pair faults at leading order in the per-pair coupling phase, while the single-block components contribute at strictly higher order and fold into the independent single-block baseline. Residual inter-block couplings on superconducting-qubit platforms are predominantly of this interaction (ZZ-type) form~\cite{pettersson_zz_2024,aguila_flux_2025,zhou_crosstalk_2025}. The model below therefore keeps only the inter-block interaction and treats the single-block components as part of the independent single-block noise.

\begin{assumption}[Proximity kernel and inter-block interaction]
The geometry-induced perturbation between simultaneously active disjoint blocks $e,e'\in E_t$ is modeled by the interaction component of the general two-block decomposition (Appendix~\ref{app:two-block}). A dimensionless, nonnegative, monotonically nonincreasing proximity kernel $\kappa$ with $\kappa(0)=1$ and a coupling scale $J_0>0$ with units of inverse time specify the microscopic Hamiltonian
\begin{equation}
\Hx(e,e')=J_0\,\kappa\left(d_{\phi}(e,e')\right)\,\hat P_e\otimes\hat P_{e'},
\label{eq:pair-ham-main}
\end{equation}
where $\hat P_e$ and $\hat P_{e'}$ are traceless Hermitian involutions ($\hat P^2=\I$, $\hat P^{\dagger}=\hat P$) on the two active gate blocks, each determined by the gate type and block role in the schedule. The tensor product $\hat P_e\otimes\hat P_{e'}$ is itself a traceless Hermitian involution on the joint block space.
\end{assumption}

In the implemented BB schedule, each \textsc{cnot} acts on one data qubit and one ancilla qubit. The block Pauli $\hat P_e$ on a \textsc{cnot} block is the factor of the two-qubit Pauli that propagates, through the remaining Clifford gates and ancilla elimination, to a nontrivial data fault in the chosen CSS sector. For the $X$ sector, this is $\hat P_e = X_{\mathrm{data}}\otimes I_{\mathrm{anc}}$ on a $q(L)\!\to\!q(Z)$ \textsc{cnot} block, which propagates to an $X$ fault on the $q(L)$ data qubit; for the $Z$ sector, it is $\hat P_e = Z_{\mathrm{anc}}\otimes Z_{\mathrm{data}}$ on a $q(X)\!\to\!q(R)$ block, which propagates to a $Z$ fault on the $q(R)$ data qubit. These are the two choices used in Theorem~\ref{thm:schedule-prop}.

Routed separation sets the pair-coupling strength $J_0\kappa(d_\phi(e,e'))$, while the kernel profile $\kappa$ is left as a free parameter family since the microscopic coupling depends on the architecture (layout, couplers, shielding, and calibration).

Summing Eq.~\eqref{eq:pair-ham-main} over all unordered simultaneously active block pairs in a tick gives the tickwise geometry Hamiltonian
\begin{equation}
\Hxt
:=
J_0
\sum_{\{e,e'\}\,\subseteq \,E_t}
\kappa\left(d_{\phi}(e,e')\right)\,
\hat P_e\otimes\hat P_{e'}.
\label{eq:tick-ham-main}
\end{equation}
Let $A_t := \{\{e,e'\}\mid e,e'\in E_t,\,e\neq e'\}$ denote the set of unordered simultaneously active gate-block pairs in tick $t$, and for each pair $a=\{e,e'\}\in A_t$ set $\hat S_a := \hat P_e\otimes\hat P_{e'}$. The per-pair phase accumulated over the tick duration $\tau$ is
\begin{equation}
\theta_a := \tau J_0\,\kappa\!\left(d_\phi(e,e')\right),
\label{eq:theta-a}
\end{equation}
and the tickwise coupling norm---an $\ell^2$ aggregate over same-tick pairs---is
\begin{equation}
\Theta_t^2 := \sum_{a\,\in A_t}\theta_a^2.
\label{eq:Theta-t}
\end{equation}
As shown below, $\Theta_t$ controls the rate at which multiple pair events coincide within the same tick. Because the gate blocks in $E_t$ are disjoint, the same-tick generators commute, $\left[\hat S_a,\hat S_b\right]=0$ for all $a,b\in A_t$, and the same-tick unitary factorizes as
\begin{equation}
\hat U_t = \prod_{a\,\in A_t} e^{-i\theta_a \hat S_a} = \prod_{a\,\in A_t}\!\left(\cos\theta_a\,\I - i\sin\theta_a\,\hat S_a\right).
\label{eq:tick-factorize}
\end{equation}
We write $\mathcal U_t(\hat\rho) := \hat U_t\,\hat\rho\,\hat U_t^{\dagger}$ for the channel it induces.
Each generator squares to the identity ($\hat S_a^2=\I$). Expanding the product in Eq.~\eqref{eq:tick-factorize} over subsets $B\subseteq A_t$ gives the parity expansion
\begin{equation}
\hat U_t = \sum_{B\,\subseteq\, A_t} \alpha_B(\theta)\,\hat S_B,
\label{eq:parity-expansion}
\end{equation}
with Pauli monomial $\hat S_B := \prod_{a\in B}\hat S_a$ and coefficients
\begin{equation}
\alpha_\varnothing := \prod_{a\,\in A_t}\cos\theta_a,
\label{eq:alpha-empty}
\end{equation}
and, for $B\neq\varnothing$,
\begin{equation}
\alpha_B := (-i)^{|B|}\prod_{a\,\in\, B}\sin\theta_a\prod_{a\,\notin\, B}\cos\theta_a.
\label{eq:alpha-B}
\end{equation}
Each $\hat S_B$ is a product of commuting involutions and is therefore a Pauli monomial on the active blocks. Two distinct subsets $B\neq B'$ may yield the same monomial $\hat S_B=\hat S_{B'}$ if block-Pauli factors cancel. For a Pauli $\hat Q\in\Pgroup_k$ on the $k$ qubits spanned by the active blocks, let $\mathcal Q(\hat\rho):=\hat Q\hat\rho\,\hat Q$ be the corresponding Pauli conjugation channel. The Pauli twirl $\mathcal T$ averages any channel $\mathcal C$ over the Pauli group by conjugation,
\begin{equation}
\mathcal T[\mathcal C]:=\frac{1}{4^k}\sum_{\hat Q\,\in\,\Pgroup_k}\mathcal Q\circ\mathcal C\circ\mathcal Q,
\label{eq:pauli-twirl-def}
\end{equation}
with $\Pgroup_k=\{I,X,Y,Z\}^{\otimes k}$ the $k$-qubit Pauli group modulo phases. Square brackets distinguish the twirl acting on a channel from the resulting channel acting on a state. Applied to the full-tick unitary, $\mathcal T$ groups the monomial collisions above as
\begin{equation}
\mathcal T[\mathcal U_t](\hat\rho) = \sum_{\hat Q}\left|\sum_{\substack{B\,\subseteq\, A_t\\\hat S_B\,=\,\hat Q}} \alpha_B(\theta)\right|^2\hat Q\hat\rho\,\hat Q,
\label{eq:tick-twirl}
\end{equation}
where the outer sum runs over distinct Pauli monomials $\hat Q$ on the active blocks. At leading order, each single-pair term ($|B|=1$) produces a distinct monomial $\hat S_{\{a\}}$, and the corresponding single-pair probability is
\begin{align}
\left|\alpha_{\{a\}}\right|^2 &= \sin^2\theta_a\prod_{b\,\neq\, a}\cos^2\theta_b \\
&= \sin^2\theta_a + O(\Theta_t^4).
\end{align}
All multi-pair terms and cross-subset interference contribute at $O(\Theta_t^4)$. The retained model is the leading-order restriction of this expansion.

The total same-tick \emph{block exposure} of gate block $e$ is
\begin{equation}
\eta_e^{(t)}(\phi)  := \sum_{\substack{e'\in \,E_t \\ e'\neq \,e}}\kappa\left(d_{\phi}(e,e')\right).
\label{eq:eta-main}
\end{equation}
We use block exposure for AKP comparisons and layout audits; the distinct \emph{support exposure} $\Ew_\phi^\sigma(L)$, defined in Sec.~\ref{sec:theory}, aggregates retained pair weights over a logical support.

\subsection{Pauli twirl and two-block corollary}

Where convenient, we write the per-pair phase as a function of routed separation, $\theta(d):=\tau J_0\kappa(d)$, so that $\theta_a=\theta(d_\phi(e,e'))$ for $a=\{e,e'\}\in A_t$ [cf.\ Eq.~\eqref{eq:theta-a}]. The pair unitary on a single simultaneously active pair $e,e'\in E_t$ is then
\begin{equation}
\hat U_{e,e'}=\exp\left[-i\theta(d)\,\hat P_e\otimes\hat P_{e'}\right].
\label{eq:block-unitary}
\end{equation}
Let $\mathcal U_{e,e'}(\hat\rho):=\hat U_{e,e'}\hat\rho\,\hat U_{e,e'}^{\dagger}$ denote the corresponding unitary channel, and recall the Pauli twirl $\mathcal T$ of Eq.~\eqref{eq:pauli-twirl-def}.

\begin{theorem}[General single-block Pauli twirl]\footnote{Under randomized compiling~\cite{wallman_emerson_2016}, the averaged channel over Pauli frames equals $\mathcal T[\mathcal C]$, so the twirl describes the realized average channel rather than an approximation.}
\label{thm:general-twirl}
Let $\hat U=e^{-i\theta \hat P}$ on $k$ qubits, with $\hat P$ a Hermitian involution ($\hat P^2=\I$, $\hat P^{\dagger}=\hat P$), and write $\mathcal U(\hat\rho):=\hat U\hat\rho\,\hat U^\dagger$ for the unitary channel it induces. Then
\begin{equation}
\mathcal T[\mathcal U](\hat\rho)=\sum_{\hat R\,\in\,\Pgroup_k}p_{\hat R}\,\hat R\hat\rho\hat R
\label{eq:general-twirl}
\end{equation}
where
\begin{equation}
    p_{\hat R}=\frac{1}{4^k}\left|\operatorname{tr}\hat R\hat U\right|^2.
\end{equation}
\end{theorem}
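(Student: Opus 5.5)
The plan is to expand $\hat U$ in the Pauli basis and then use the fact that twirling kills every off-diagonal term. Since $\Pgroup_k$ (modulo phases) is an orthogonal basis of operators on $k$ qubits with $\operatorname{tr}(\hat R\hat S)=2^k\delta_{\hat R\hat S}$, we can write
\begin{equation*}
\hat U=\sum_{\hat R\in\Pgroup_k}c_{\hat R}\hat R,\qquad c_{\hat R}=2^{-k}\operatorname{tr}(\hat R\hat U).
\end{equation*}
Here $\hat R$ is Hermitian, so no adjoint is needed. The channel then becomes
\begin{equation*}
\mathcal U(\hat\rho)=\sum_{\hat R,\hat S}c_{\hat R}\bar c_{\hat S}\,\hat R\hat\rho\hat S.
\end{equation*}
The hypothesis $\hat P^2=\I$ enters only through $\hat U=\cos\theta\,\I-i\sin\theta\,\hat P$. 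This makes the coefficients explicit, but the argument itself works for any unitary.

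Next, apply $\mathcal T$ term by term. For Paulis $\hat Q,\hat R$ we have $\hat Q\hat R\hat Q=\lambda_{\hat Q\hat R}\hat R$ with $\lambda_{\hat Q\hat R}=\pm1$ according to commutation. Each cross term is therefore multiplied by $4^{-k}\sum_{\hat Q}\lambda_{\hat Q\hat R}\lambda_{\hat Q\hat S}$. The map $\hat Q\mapsto\lambda_{\hat Q\hat R}\lambda_{\hat Q\hat S}$ is the symplectic character associated with $\hat R\hat S$. It is trivial exactly when $\hat R=\hat S$ modulo phase, because the symplectic form is nondegenerate. Character orthogonality on $\mathbb{F}_2^{2k}$ then gives
\begin{equation*}
4^{-k}\sum_{\hat Q}\lambda_{\hat Q\hat R}\lambda_{\hat Q\hat S}=\delta_{\hat R\hat S}.
\end{equation*}
Only the diagonal terms survive, leaving
\begin{equation*}
\mathcal T[\mathcal U](\hat\rho)=\sum_{\hat R}|c_{\hat R}|^2\,\hat R\hat\rho\hat R,
\qquad
|c_{\hat R}|^2=4^{-k}|\operatorname{tr}\hat R\hat U|^2,
\end{equation*}
which is the claimed $p_{\hat R}$.

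The main obstacle is establishing the character orthogonality carefully. I would argue it as follows. If $\hat R\neq\hat S$ up to phase, there is some Pauli $\hat Q_0$ that anticommutes with $\hat R\hat S$. Multiplying by $\hat Q_0$ is a bijection of $\Pgroup_k$ that flips the sign of every summand, so the sum equals its own negative and vanishes. Phases are harmless here: conjugation by $\hat Q$ is insensitive to the phase of $\hat Q$, and the coefficients $c_{\hat R}$ are defined relative to a fixed phase representative of $\hat R$, which cancels in $|c_{\hat R}|^2$.

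As a consistency check, $\sum_{\hat R}p_{\hat R}=2^{-k}\operatorname{tr}(\hat U^\dagger\hat U)=1$ by Parseval. For $\hat P$ a Pauli, this gives $p_{\I}=\cos^2\theta$ and $p_{\hat P}=\sin^2\theta$, matching the single-pair weights in Eq.~\eqref{eq:tick-twirl}.
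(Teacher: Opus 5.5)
Your proof is correct, and it follows a different route from the paper's. The paper's proof takes the identity $\mathcal T[\mathcal U](\hat\rho)=\sum_{\hat R}4^{-k}|\operatorname{tr}\hat R\hat U|^2\,\hat R\hat\rho\hat R$ as given and only evaluates the weights. It substitutes $\hat U=\cos\theta\,\I-i\sin\theta\,\hat P$ into $\operatorname{tr}\hat R\hat U$ to obtain $p_{\I}=\cos^2\theta+4^{-k}\sin^2\theta\,|\operatorname{tr}\hat P|^2$ and $p_{\hat R}=4^{-k}\sin^2\theta\,|\operatorname{tr}\hat R\hat P|^2$ for $\hat R\neq\I$ [Eqs.~\eqref{eq:pI}--\eqref{eq:pR}], and then appeals to normalization in the traceless case.

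You instead prove the structural claim itself. You expand $\hat U$ in the Pauli basis and show that every cross term $\hat R\hat\rho\hat S$ with $\hat R\neq\hat S$ cancels by character orthogonality on $\mathbb{F}_2^{2k}$. You also correctly note that the involution hypothesis plays no role in this step. Your route therefore actually establishes the displayed formula, and for arbitrary unitaries. Your Parseval check also justifies the normalization step that the paper's traceless case relies on.

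The paper's route buys explicit coefficients for a general Hermitian involution that need not be a Pauli, including the $|\operatorname{tr}\hat P|^2$ correction to $p_{\I}$, which is what Corollary~\ref{cor:twirl} uses. Your closing check covers only the case where $\hat P$ is a Pauli. One more line would recover the paper's general expressions: substitute $\hat U=\cos\theta\,\I-i\sin\theta\,\hat P$ into $c_{\hat R}$ and use that $\operatorname{tr}\hat P$ is real.
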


\begin{proof}
Expanding $\hat U=\cos\theta\,\I-i\sin\theta\,\hat P$ and taking the trace against $\hat R$ gives
\begin{equation}
\operatorname{tr}\hat R\hat U=\cos\theta\,\operatorname{tr}\hat R-i\sin\theta\,\operatorname{tr}\hat R\hat P.
\end{equation}
For $\hat R=\I$, $\operatorname{tr}\hat R=2^k$, so
\begin{equation}
p_{\I}=\cos^2\theta+\frac{\sin^2\theta}{4^k}\left|\operatorname{tr}\hat P\right|^2.
\label{eq:pI}
\end{equation}
For $\hat R\neq\I$, $\operatorname{tr}\hat R=0$, so
\begin{equation}
p_{\hat R}=\frac{\sin^2\theta}{4^k}\left|\operatorname{tr}\hat R\hat P\right|^2.
\label{eq:pR}
\end{equation}
If in addition $\hat P$ is traceless, $p_{\I}=\cos^2\theta$ and
\begin{equation}
\sum_{\hat R\,\neq\,\I}p_{\hat R}=\sin^2\theta
\end{equation}
by normalization.
\end{proof}

Since $\hat P_e\otimes\hat P_{e'}$ is a traceless Hermitian involution on the joint block space (Assumption~2), Theorem~\ref{thm:general-twirl} applies directly. Define the \emph{twirled pair-fault probability}
\begin{equation}
p(d):=\sin^2\theta(d),
\label{eq:p-def}
\end{equation}
i.e., the probability that the correlated $\hat P_e\otimes\hat P_{e'}$ fault occurs on a simultaneously active pair at routed separation $d$.

\begin{corollary}[Interaction two-block twirl]
\label{cor:twirl}
If $\hat P_e$ and $\hat P_{e'}$ are traceless Pauli operators on disjoint blocks, then
\begin{equation}
\mathcal T[\mathcal U_{e,e'}]=\bigl(1-p(d)\bigr)\,\mathrm{Id}+p(d)\,\mathcal P_{e,e'},
\label{eq:twirled-block}
\end{equation}
where $\mathcal P_{e,e'}(\hat\rho)=(\hat P_e\otimes\hat P_{e'})\,\hat\rho\,(\hat P_e\otimes\hat P_{e'})$ is the correlated pair-fault channel.
\end{corollary}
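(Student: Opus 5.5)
The plan is to apply Theorem~\ref{thm:general-twirl} directly to $\hat U_{e,e'}$, taking $\hat P:=\hat P_e\otimes\hat P_{e'}$ on the $k$ qubits of the two disjoint blocks and $\theta:=\theta(d)$. The first step is to check that the hypotheses hold. Since $\hat P_e$ and $\hat P_{e'}$ are Paulis on disjoint supports, their tensor product is again a Pauli operator on the joint block space, so it is a Hermitian involution. Its trace factorizes as $\operatorname{tr}\hat P=\operatorname{tr}\hat P_e\,\operatorname{tr}\hat P_{e'}=0$, so it is traceless. (The corollary asks only that each factor be traceless, which is enough.) Theorem~\ref{thm:general-twirl} then gives $p_{\I}=\cos^2\theta=1-p(d)$ and $p_{\hat R}=\sin^2\theta\,|\operatorname{tr}\hat R\hat P|^2/4^k$ for $\hat R\neq\I$.

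The second step identifies which $\hat R$ carry weight. Both $\hat R$ and $\hat P$ are Paulis modulo phase, so we use orthogonality of Paulis under the Hilbert--Schmidt inner product: $\operatorname{tr}\hat R\hat P$ vanishes unless $\hat R$ equals $\hat P$ up to a phase, and in that case $|\operatorname{tr}\hat R\hat P|=2^k$. Hence exactly one nonidentity term survives, $p_{\hat P}=\sin^2\theta=p(d)$. As a consistency check, this agrees with the normalization statement $\sum_{\hat R\neq\I}p_{\hat R}=\sin^2\theta$ in the theorem.

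Finally, we substitute into Eq.~\eqref{eq:general-twirl}. This gives $\mathcal T[\mathcal U_{e,e'}](\hat\rho)=(1-p(d))\hat\rho+p(d)(\hat P_e\otimes\hat P_{e'})\hat\rho(\hat P_e\otimes\hat P_{e'})$, which is Eq.~\eqref{eq:twirled-block}. The phase ambiguity disappears because conjugation by $\hat R$ is insensitive to an overall phase.

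The only point needing care is the identification $\hat P\in\Pgroup_k$ up to phase. The general Theorem~\ref{thm:general-twirl} allows arbitrary involutions, and for those the nonidentity weight spreads over many $\hat R$. The reduction to a single fault term therefore relies specifically on the corollary's hypothesis that $\hat P_e$ and $\hat P_{e'}$ are Paulis, not merely traceless involutions. I would state this dependence explicitly, since it is the step that fails for the more general Assumption~2 operators.
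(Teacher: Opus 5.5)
Your proposal is correct and follows the paper's proof. You apply Theorem~\ref{thm:general-twirl} to the traceless Hermitian involution $\hat P_e\otimes\hat P_{e'}$ to get $p_{\I}=\cos^2\theta$, then use Hilbert--Schmidt orthogonality of Paulis to eliminate every nonidentity weight except $p_{\hat P_e\otimes\hat P_{e'}}=\sin^2\theta$. Your closing remark is also right: the single-fault form needs the Pauli hypothesis, not just the traceless-involution condition of Assumption~2, and the paper's one-line proof uses this only implicitly.
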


\begin{proof}
$\hat P_e\otimes\hat P_{e'}$ is traceless and satisfies $(\hat P_e\otimes\hat P_{e'})^2=\I$, so Theorem~\ref{thm:general-twirl} gives $p_{\I}=\cos^2\theta$ and $p_{\hat P_e\otimes\hat P_{e'}}=\sin^2\theta$. All other Pauli weights vanish by orthogonality.
\end{proof}

Equation~\eqref{eq:twirled-block} is the leading geometry-induced effect at the block level. Single-block faults arise only from the local baseline noise, rather than the inter-block coupling. The data-level model used in the simulations is obtained from this block-level channel below by propagating it through the syndrome cycle, eliminating the ancilla qubits, and restricting to the chosen sector.

Three levels of geometry-dependent quantity appear in the derivation that follows:
\begin{enumerate}[label=(\roman*),leftmargin=2em]
\item the microscopic Hamiltonian amplitude $J_0\kappa(d)$, or equivalently the dimensionless phase $\theta(d)$ [Eq.~\eqref{eq:theta-a}];
\item the twirled pair-fault probability $p(d)=\sin^2\theta(d)$ [Eq.~\eqref{eq:p-def}];
\item the retained sector coefficient $q_\kappa(d)$ [Eq.~\eqref{eq:q-kappa-def} below].
\end{enumerate}
The first is a property of the Hamiltonian before twirling. The second is the twirled pair-fault probability at the block level. The third absorbs the full Clifford propagation of a single pair event through one syndrome-extraction round, ancilla elimination, and sector restriction. Because the extraction circuit is Clifford and the twirled channel is diagonal in the Pauli basis, $q_\kappa(d)$ is a finite computation evaluated numerically for each kernel; it is fixed by the schedule and the microscopic phases, and it is not a closed-form function of $p(d)$ alone. The AKP summability statements below concern the microscopic amplitudes in (i) before twirling, whereas the stabilizer simulations use the retained coefficients in (iii) after propagation and truncation. In the production dataset, we set $\tau=1$---and, for the regularized algebraic kernel, $r_0=1$---so the plotted coupling $J_0\tau$ is dimensionless.

\subsection{Retained sector model}

Fix a CSS sector $\sigma\in\{X,Z\}$: the $X$ sector concerns $X$-type data faults detected by $Z$-check measurements, and the $Z$ sector concerns the converse. After propagation through one syndrome-extraction round, elimination of the ancilla qubits (by tracing over their measurement outcomes), and restriction to the chosen sector, we retain the nonzero single-data and pair-data contributions and write the effective model as
\begin{widetext}
\begin{equation}
\mathcal N^{\sigma}_{\phi,\le 2}=\left(1-\sum_i a_i^{\sigma}-\sum_{i\,<\,j} b_{ij}^{\sigma}\right)\mathrm{Id}+\sum_i a_i^{\sigma}\,\mathcal P_i^{\sigma}+\sum_{i\,<\,j} b_{ij}^{\sigma}\,\mathcal P_{ij}^{\sigma}.
\label{eq:retained-channel}
\end{equation}
\end{widetext}
Here $\mathrm{Id}(\hat\rho):=\hat\rho$ is the identity channel, $\mathcal P_i^{\sigma}$ applies the sector-relevant single-qubit Pauli on data location $i$, and $\mathcal P_{ij}^{\sigma}$ applies the corresponding pair operator on data locations $i$ and $j$. All coefficients $a_i^{\sigma}\ge 0$, $b_{ij}^{\sigma}\ge 0$ are nonnegative retained Pauli probabilities summing to at most $1$. Equation~\eqref{eq:retained-channel} defines the full retained \emph{geometry-induced} channel after one-round propagation, ancilla elimination, and sector restriction; the complete extraction map is obtained by composing this channel with the baseline local circuit noise of Sec.~\ref{sec:syndrome}. The $b_{ij}^{\sigma}$ are the pair weights used directly as graph edges in Sec.~\ref{sec:theory}.

Write $q_\kappa(d)$ for the \emph{retained sector coefficient} carried by a single sector-relevant same-round pair event at routed separation $d$; that is, the contribution of one such event to $b_{ij}^{\sigma}$ under Corollary~\ref{cor:linearized} below. This is the geometry-dependent factor applied to every sector-relevant same-round pair contribution. At leading order in the tickwise coupling norm,
\begin{equation}
q_\kappa(d) = \sin^2\!\theta(d) + O(\Theta_t^4) = p(d) + O(\Theta_t^4),
\label{eq:q-kappa-def}
\end{equation}
so $q_\kappa$ and $p$ agree to $O(\Theta_t^2)$ and differ only through multi-pair corrections.

The truncation to single and pair terms in Eq.~\eqref{eq:retained-channel} is controlled by the following bound on the discarded higher-weight Pauli mass.

\begin{proposition}[Controlled low-body truncation]
\label{prop:truncation}
Let
\begin{equation}
\mathcal N_\phi^\sigma := \sum_P p_P\,\mathcal U_P
\end{equation}
be the full Pauli-twirled data channel after circuit propagation, ancilla elimination, and sector restriction, with $\mathcal U_P(\hat\rho)=\hat P\hat\rho\hat P$ and $P$ ranging over sector-relevant data Paulis. Let $\mathcal N_{\phi,\le 2}^\sigma$ be the retained channel of Eq.~\eqref{eq:retained-channel}, obtained from $\mathcal N_\phi^\sigma$ by absorbing all weight-${\ge 3}$ Pauli mass into the identity term, and let
\begin{equation}
    M_{\ge 3}:=\sum_{\wt P\,\ge\, 3}p_P.
\end{equation}
Then, in the diamond norm $\lVert\,\cdot\,\rVert_\diamond$,
\begin{equation}
\bigl\lVert \mathcal N_\phi^\sigma - \mathcal N_{\phi,\le 2}^\sigma \bigr\rVert_\diamond\, \le\, 2\,M_{\ge 3}.
\label{eq:truncation-bound}
\end{equation}
\end{proposition}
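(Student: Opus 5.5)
The plan is to write the difference of the two channels explicitly and bound it with the triangle inequality in the diamond norm, using that every Pauli conjugation channel is unitary and so has unit diamond norm. By construction, $\mathcal N_{\phi,\le 2}^\sigma$ keeps every Pauli term of weight at most two with its coefficient $p_P$ unchanged, discards the terms with $\wt P\ge 3$, and moves their total mass $M_{\ge 3}$ onto the identity so that the map stays trace preserving. I will read this off Eq.~\eqref{eq:retained-channel}, identifying $a_i^\sigma$ and $b_{ij}^\sigma$ with the corresponding $p_P$. The identity coefficient is then $p_{\I}+M_{\ge 3}$, which matches the normalization $1-\sum_i a_i^\sigma-\sum_{i<j}b_{ij}^\sigma$ because $\sum_P p_P=1$.

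Subtracting, all weight-$\le 2$ terms cancel exactly, and we obtain
\begin{equation*}
\mathcal N_\phi^\sigma-\mathcal N_{\phi,\le 2}^\sigma=\sum_{\wt P\,\ge\,3}p_P\,\mathcal U_P-M_{\ge 3}\,\mathrm{Id}=\sum_{\wt P\,\ge\,3}p_P\bigl(\mathcal U_P-\mathrm{Id}\bigr).
\end{equation*}
The diamond norm is subadditive and absolutely homogeneous, and each $p_P\ge 0$. Hence the left-hand side is bounded by $\sum_{\wt P\ge 3}p_P\,\lVert\mathcal U_P-\mathrm{Id}\rVert_\diamond$.

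Next I bound each term. Both $\mathcal U_P$ and $\mathrm{Id}$ are completely positive and trace preserving, so their stabilized versions $\mathcal U_P\otimes\mathrm{Id}_{\mathrm{ref}}$ and $\mathrm{Id}\otimes\mathrm{Id}_{\mathrm{ref}}$ map any input of unit trace norm to an output of trace norm at most one. The triangle inequality therefore gives $\lVert\mathcal U_P-\mathrm{Id}\rVert_\diamond\le 2$. Summing over the discarded terms yields $2M_{\ge 3}$, which is Eq.~\eqref{eq:truncation-bound}. If desired, one can note that the bound is tight for a single discarded Pauli, since $\mathcal U_P$ and $\mathrm{Id}$ are perfectly distinguishable using a maximally entangled input.

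The only real obstacle is bookkeeping rather than analysis. I must check that the retained channel is exactly this absorb-into-identity construction, and not a renormalized or re-weighted version. In particular, the weight-$\le 2$ coefficients $a_i^\sigma$ and $b_{ij}^\sigma$ must coincide with the corresponding $p_P$ of $\mathcal N_\phi^\sigma$ rather than with the leading-order approximations $q_\kappa(d)$. If they were instead the $O(\Theta_t^2)$ approximations, an additional triangle-inequality term, $2\sum_{\wt P\le 2}|p_P-\tilde p_P|$, would appear. The proposition's hypothesis ("obtained from $\mathcal N_\phi^\sigma$ by absorbing") rules this out, so I will state that identification explicitly before running the estimate above.
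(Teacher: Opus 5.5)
Your proposal is correct and follows the paper's proof: you write the difference as $\sum_{\wt P\ge 3}p_P(\mathcal U_P-\mathrm{Id})$, apply the triangle inequality, and bound each $\lVert\mathcal U_P-\mathrm{Id}\rVert_\diamond\le 2$ because both maps are channels. Your two additions are correct and go beyond what the paper states. The bookkeeping point is that $a_i^\sigma$ and $b_{ij}^\sigma$ must equal the exact $p_P$, not their linearized $q_\kappa$ values, and the tightness point is that a single discarded Pauli attains the bound via a maximally entangled input.
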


\begin{proof}
The difference channel is
\begin{equation*}
\sum_{\wt P\,\ge\, 3}p_P\left(\mathcal U_P-\mathrm{Id}\right).
\end{equation*}
By the triangle inequality,
\begin{equation}
\bigl\lVert \mathcal N_\phi^\sigma - \mathcal N_{\phi,\le 2}^\sigma \bigr\rVert_\diamond
\,\le\sum_{\wt P\,\ge\, 3} p_P\,\lVert \mathcal U_P - \mathrm{Id}\rVert_\diamond.
\end{equation}
Each $\mathcal U_P$ and $\mathrm{Id}$ is a quantum channel, so $\lVert \mathcal U_P-\mathrm{Id}\rVert_\diamond \le 2$, which gives the bound.
\end{proof}

Since each single-pair event in the BB schedule propagates to a weight-$2$ data Pauli, the discarded mass $M_{\ge 3}$ is controlled by multi-pair coincidences and can be bounded more tightly.

\begin{theorem}[Schedule-level weight-$2$ propagation]
\label{thm:schedule-prop}
In the implemented depth-$8$ BB extraction schedule, every single-pair geometry event $\hat P_e\otimes\hat P_{e'}$ in a sector-relevant $B$ round propagates through the remaining Clifford gates and ancilla elimination to a weight-$2$ data Pauli. Consequently, the discarded weight-$\ge 3$ Pauli mass satisfies
\begin{equation}
M_{\ge 3}^{\sigma}(\phi) = O(\Theta_t^4) \qquad (\Theta_t\to 0).
\label{eq:schedule-remainder}
\end{equation}
\end{theorem}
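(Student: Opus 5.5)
The argument splits into a combinatorial propagation statement about the schedule and an order-counting statement about the twirled tick channel, and I would handle them in that order.

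\emph{Propagation of a single pair event.} Take the $X$ sector. In a sector-relevant $B$ round the block Pauli is $\hat P_e=X_{\mathrm{data}}\otimes I_{\mathrm{anc}}$ on a $q(L)\!\to\!q(Z)$ \textsc{cnot}. For the Clifford conjugation I would use the standard rules: an $X$ on a \textsc{cnot} control propagates to $X$ on control and target, and an $X$ on the target stays put. An $X$ injected on a $q(L)$ data qubit therefore spreads only onto $q(Z)$ ancillas through the remaining $B$-family \textsc{cnot}s of the depth-$8$ schedule. In those \textsc{cnot}s $q(L)$ is always the control, and the $A$-family rounds act on $q(L)$ only as targets from $q(X)$, which never copies $X$ back onto other data. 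The $q(Z)$ ancillas are measured in the $Z$ basis and then eliminated. Their $X$ components only flip syndrome bits, so they leave no residual data Pauli. Each $\hat P_e$ thus becomes a weight-one $X$ on the data qubit of block $e$.

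Disjointness in Assumption~1 implies that the two \textsc{cnot}s in one tick touch different $q(L)$ qubits. The product $\hat P_e\otimes\hat P_{e'}$ therefore propagates to a genuinely weight-$2$ data Pauli and not to the identity. The $Z$ sector is the mirror case. There $\hat P_e=Z_{\mathrm{anc}}\otimes Z_{\mathrm{data}}$ on a $q(X)\!\to\!q(R)$ block. The $Z$ on the target $q(R)$ copies back only onto $q(X)$ controls through later $B$ rounds. The $Z_{\mathrm{anc}}$ on $q(X)$ either commutes with the $X$-basis measurement up to a syndrome flip or is absorbed at elimination. The result is a single $Z$ on the $q(R)$ qubit. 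I would verify this round by round against the explicit gate ordering in Ref.~\onlinecite{bravyi_memory_2024}. This is the main obstacle: the weight-one claim depends on the exact position of the idles and the interleaving of $A$ and $B$ rounds, so it has to be checked for each of the three $B$ rounds separately, not read off generically.

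\emph{Order counting for the discarded mass.} Start from the twirled tick channel Eq.~\eqref{eq:tick-twirl} with coefficients Eqs.~\eqref{eq:alpha-empty}--\eqref{eq:alpha-B}. A subset $B$ with $|B|=1$ maps to a weight-$2$ data Pauli by the propagation step, and $B=\varnothing$ maps to the identity. Weight $\ge 3$ therefore needs $|B|\ge 2$, although such subsets may also cancel down to low weight. For $|B|\ge2$ we have $|\alpha_B|\le\prod_{a\in B}|\sin\theta_a|\le\prod_{a\in B}|\theta_a|$. After Clifford propagation, $M_{\ge3}$ is bounded by the total mass of monomials that receive contributions only from subsets with $|B|\ge2$. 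Cross terms between a $|B|\le1$ subset and a $|B|\ge2$ subset in the modulus squared could only land on weight $\le2$ Paulis. The reason is that the outer sum in Eq.~\eqref{eq:tick-twirl} is indexed by the propagated Pauli, and the $|B|\le1$ terms all propagate to weight $\le2$.

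Cauchy--Schwarz then gives $\sum_{\hat Q}\bigl|\sum_{B\to\hat Q,|B|\ge2}\alpha_B\bigr|^2\le\sum_{|B|\ge2}|\alpha_B|^2$. The justification is that distinct subsets give orthogonal monomials before any collision, and when collisions do occur I would bound them by the unitarity normalization $\sum_B|\alpha_B|^2=1$ together with orthogonality of the $\hat S_B$ up to relabeling. This sum equals $1-\prod_a\cos^2\theta_a-\sum_a\sin^2\theta_a\prod_{b\ne a}\cos^2\theta_b$. Expanding it yields $\le\sum_{a<b}\theta_a^2\theta_b^2+O(\Theta_t^6)\le\tfrac12\Theta_t^4$. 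Summing over the finitely many ticks in one cycle, Eq.~\eqref{eq:schedule-remainder} follows, and combining with Proposition~\ref{prop:truncation} gives the diamond-norm control.
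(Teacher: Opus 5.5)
Your propagation argument is the same as the paper's, and your remark that disjointness forces the two data qubits to be distinct is a useful addition. The identity $\sum_{|B|\ge2}|\alpha_B|^2=\Pr[N\ge2]$ also matches the paper. The gap is in the interference step. The inequality $\sum_{\hat Q}\bigl|\sum_{B\to\hat Q,\,|B|\ge2}\alpha_B\bigr|^2\le\sum_{|B|\ge2}|\alpha_B|^2$ is not Cauchy--Schwarz, and it is false.

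Because the $\hat P_e$ sit on disjoint blocks, $\hat S_B=\prod_{e}\hat P_e^{\deg_B(e)}$, where $\deg_B(e)$ is the number of pairs in $B$ containing $e$. So $\hat S_B=\hat S_{B'}$ exactly when $B\triangle B'$ has all degrees even, and subsets that collide in this way can carry aligned phases. Take four blocks with all six pairs at angle $\theta$. The monomial $\hat P_1\hat P_2\hat P_3\hat P_4$ is reached by the three perfect matchings, each with $\alpha_B\approx(-i)^2\theta^2=-\theta^2$. Its twirl weight is therefore $\approx 9\theta^4$, against an incoherent sum of $3\theta^4$.

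With $n\ge 6$ equally coupled blocks, the weight-$4$ mass alone is $\approx 9\binom{n}{4}\theta^4$. This already exceeds both $\sum_{|B|\ge2}|\alpha_B|^2\approx\binom{\binom{n}{2}}{2}\theta^4$ and $\tfrac12\Theta_t^4$, with the ratio tending to $3$. So your explicit constant $\tfrac12$ is also wrong. Unitarity only gives $\sum_{\hat Q}|\sum_{B\to\hat Q}\alpha_B|^2=\sum_B|\alpha_B|^2=1$: interference cancels over all monomials jointly, not within the weight-$\ge3$ sector. Separately, your final step ``$\le\tfrac12\Theta_t^4$'' silently drops the $O(\Theta_t^6)$ term. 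The paper gets an exact bound instead, from $\mathbf 1_{N\ge 2}\le\binom{N}{2}$.

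What you are missing is the collision multiplicity. The paper applies Cauchy--Schwarz as it actually reads: $|\sum_{B\to\hat Q}\alpha_B|^2\le m_t\sum_{B\to\hat Q}|\alpha_B|^2$, where $m_t$ is the maximal number of $|B|\ge 2$ subsets sharing a monomial. This $m_t$ is finite for the fixed schedule and is absorbed into the $O(\Theta_t^4)$ constant. Be aware that this constant is the size of a cycle space, of order $2^{\binom{n_b}{2}-n_b+1}$ for $n_b$ mutually coupled blocks.

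If you want a constant that does not depend on the schedule, diagonalize the commuting $\hat P_e$. Then $\hat U_t$ acts as $f(s)=e^{-ig(s)}$ on $s\in\{\pm1\}^{E_t}$, with $g=\sum_{a=\{e,e'\}}\theta_a s_e s_{e'}$ and $\lVert g\rVert_2=\Theta_t$. The twirl weight of $\prod_{e\in T}\hat P_e$ is the squared Fourier coefficient $|\hat f(T)|^2$. Since $1-ig$ has degree at most $2$, you get $M_{\ge3}\le\lVert e^{-ig}-1+ig\rVert_2^2\le\tfrac14\mathbb E[g^4]\le\tfrac{81}{4}\Theta_t^4$ by the Bonami lemma.
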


\begin{proof}
By Corollary~\ref{cor:twirl}, a single-pair event in a $B$ round applies the sector-relevant Pauli $\hat P_e\otimes\hat P_{e'}$ to the two data qubits of the interacting gate blocks. In the $X$ sector, $\hat P_e$ acts as $X$ on a $q(L)$ data qubit (the control of the $q(L)\to q(Z)$ \textsc{cnot}). Clifford conjugation by subsequent \textsc{cnot}s in the depth-$8$ schedule maps $X$ on a \textsc{cnot} control to $X$ on both control and target; the target is an ancilla qubit ($q(Z)$) that is measured and eliminated. The data-level weight therefore remains $2$. The $Z$-sector argument is analogous, with the roles of $q(L)$ and $q(R)$ exchanged.

Since every single-pair event propagates to weight $\le 2$, weight-$\ge 3$ contributions require two or more pair events in the same tick. Set $x_a:=\sin^2\theta_a$, let $Y_a\sim\mathrm{Bernoulli}(x_a)$ be independent, and write $N:=\sum_a Y_a$. From the parity expansion of Eq.~\eqref{eq:tick-twirl},
\begin{equation}
|\alpha_B|^2=\prod_{a\,\in\, B}x_a\prod_{a\,\notin\, B}(1-x_a).
\end{equation}
Using $\mathbf 1_{N\,\ge\, 2}\,\le\,\binom{N}{2}$ and $\sum_a x_a\le\sum_a\theta_a^2=\Theta_t^2$,
\begin{align}
\Pr[N\ge 2]\,&\le\,\mathbb E\!\left[\tbinom{N}{2}\right]\\[0.33em]
&=\, \sum_{a\,<\,b}x_a x_b\\[0.3em]
&\le\, \frac{1}{2}\Theta_t^4.
\end{align}
The full-tick twirl in Eq.~\eqref{eq:tick-twirl} groups subsets by monomial collision, so the squared-amplitude sum over distinct monomials picks up a Cauchy--Schwarz multiplicity factor
\begin{equation}
m_t:=\max_{\hat Q}\bigl|\{B\subseteq A_t\mid \hat S_B=\hat Q,\,|B|\ge 2\}\bigr|.
\end{equation}
The depth-$8$ BB schedule has finitely many tick types and each tick contains finitely many active blocks, so $m_\ast:=\sup_t m_t<\infty$. This finite multiplicity absorbs into the $O(\Theta_t^4)$ prefactor of Eq.~\eqref{eq:schedule-remainder}.
\end{proof}

\begin{remark}
For any finite coupling window $\Theta_t\le\Theta_\ast$, Eq.~\eqref{eq:schedule-remainder} holds with an explicit constant $C_\sigma(\Theta_\ast)$ absorbing the proof's factor $\frac{1}{2}\exp\Theta_\ast^2$ and the schedule-dependent multiplicity $m_\ast$. This constant depends on the specific BB schedule instance and is not claimed to be uniform across code sizes.
\end{remark}

\begin{figure*}[!tbp]
\centering
\includegraphics[width=0.94\textwidth]{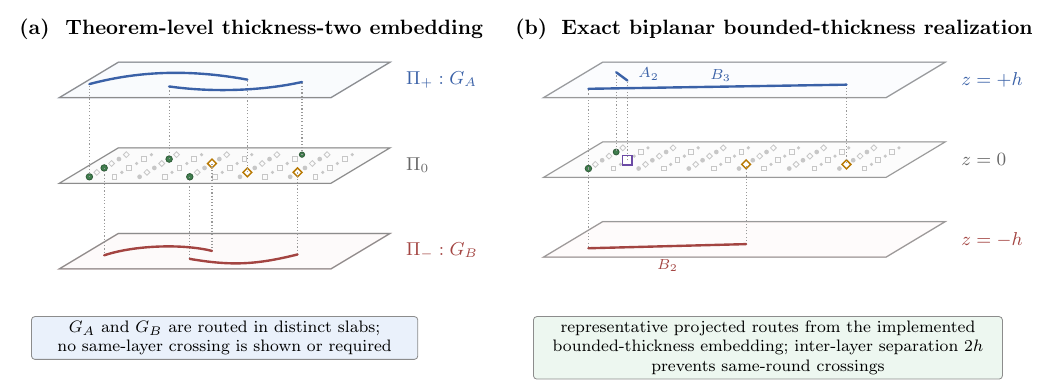}
\caption{Theorem-level and implemented bounded-thickness embeddings. (a) Schematic thickness-two cartoon. All qubit sites lie in the base plane $\Pi_0$ (shown as the toric register grid), while the relevant same-round route families $G_A$ and $G_B$ are assigned to distinct routing slabs $\Pi_+$ and $\Pi_-$. Dotted lines indicate layer access and smooth curves show schematic non-crossing in-plane routes. No same-layer crossing is shown or required. (b) Biplanar bounded-thickness realization used in the numerical study. Solid lines show representative in-plane traverses, and dotted lines show the layer access paths connecting $\Pi_0$ to the routing slabs. The two blue traverses in $\Pi_+$ ($B_3$ and $A_2$, both in $G_A$) connect qubits at different toric-grid depths, so they project to different positions within the slab. The inter-layer separation $2h$ prevents same-round crossings across $G_A$ and $G_B$. This panel visualizes the realized bounded-thickness routing and does not reconstruct the hardware geometry of Ref.~\onlinecite{bravyi_memory_2024}.}
\label{fig:biplanar-vs-implementation}
\end{figure*}

\begin{corollary}[Linearized retained coefficients]
\label{cor:linearized}
Under the conditions of Theorem~\ref{thm:schedule-prop}, the retained pair coefficients satisfy
\begin{equation}
b_{ij}^{\sigma}(\phi) = \sum_{a\,\in A_t} M_{ij,a}^{\sigma}\,\sin^2\theta_a + O\!\left(\Theta_t^4\right),
\end{equation}
where $M_{ij,a}^{\sigma}\in\{0,1\}$ is the schedule-determined indicator that the single-pair event $a$ propagates to the data-pair $(i,j)$ in sector $\sigma$.
\end{corollary}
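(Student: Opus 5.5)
The plan is to read off $b_{ij}^{\sigma}$ directly from the full-tick twirl, Eq.~\eqref{eq:tick-twirl}, after pushing each monomial through the Clifford remainder of the cycle. Fix a tick $t$ in a sector-relevant $B$ round. Because the extraction circuit after tick $t$ is Clifford, the ancillas are eliminated by tracing over their measurement outcomes, and the twirled channel is Pauli-diagonal, each block monomial $\hat Q=\hat S_B$ is carried to a single sector-relevant data Pauli $\pi(\hat Q)$. Its probability weight is unchanged. The data channel $\mathcal N_\phi^\sigma$ is therefore the pushforward of the weights $\bigl|\sum_{\hat S_B=\hat Q}\alpha_B\bigr|^2$ under $\pi$. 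By definition, $b_{ij}^{\sigma}$ is the total mass that lands on the weight-$2$ Pauli $\mathcal P_{ij}^{\sigma}$.

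I would split that mass by $|B|$, in three steps.

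\emph{Step 1: the $|B|=1$ contributions.} By Theorem~\ref{thm:schedule-prop}, each single-pair event $a$ maps to exactly one weight-$2$ data Pauli. Define $M_{ij,a}^{\sigma}=1$ precisely when that Pauli is $\mathcal P_{ij}^{\sigma}$, and $0$ otherwise. This indicator is determined by the schedule alone.

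\emph{Step 2: the corresponding weight.} Distinct single pairs give distinct monomials $\hat S_{\{a\}}$, and a singleton never collides with a subset of size $\ge 2$ at the monomial level (or such collisions are counted in Step 3). The squared amplitude is then $|\alpha_{\{a\}}|^2=\sin^2\theta_a\prod_{b\neq a}\cos^2\theta_b$. Using $1-\prod_b(1-x_b)\le\sum_b x_b\le\Theta_t^2$, this equals $\sin^2\theta_a+O(\theta_a^2\Theta_t^2)$. Summed over $a$, the error is at most $\Theta_t^4$.

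\emph{Step 3: the $|B|\ge 2$ contributions.} Any weight-$2$ Pauli can also be reached from $|B|\ge2$ subsets, for instance when two pair events cancel on shared data qubits or interfere inside a collision group. All of this mass is bounded by the same estimate as in the proof of Theorem~\ref{thm:schedule-prop}. The multiplicity factor $m_\ast$ together with Cauchy--Schwarz gives a total of at most $m_\ast\Pr[N\ge2]\le \tfrac12 m_\ast\Theta_t^4$.

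\emph{Step 4: the interference terms.} One more class needs care: cross terms $2\operatorname{Re}\,\alpha_{\{a\}}\overline{\alpha_B}$, where a singleton and a larger subset share a monomial. These are $O(\theta_a\prod_{b\in B}\theta_b)$ with $|B|\ge 3$, since a singleton can only collide with an odd-parity product of involutions. They are therefore $O(\Theta_t^4)$. I expect this bookkeeping of collisions to be the main obstacle. It is cleanest to note that, if such a collision exists, the error is still controlled by $|\sum_{\hat S_B=\hat Q}\alpha_B|^2-|\alpha_{\{a\}}|^2$, which is bounded by $(m_\ast+1)$ times the $|B|\ge2$ mass plus the cross term.

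Collecting Steps 1–4 gives the stated linear formula for $b_{ij}^{\sigma}$ with an $O(\Theta_t^4)$ remainder, whose constant depends only on $m_\ast$ and the finite coupling window. The finitely many ticks in the cycle add only finitely many such remainders, so the order of the error is unchanged.
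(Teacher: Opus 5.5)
Your route matches the paper's implicit argument: singleton weights from Eq.~\eqref{eq:tick-twirl} expanded to $\sin^2\theta_a+O(\Theta_t^4)$, $M_{ij,a}^{\sigma}$ read off from the weight-$2$ propagation of Theorem~\ref{thm:schedule-prop}, and multi-pair mass controlled via $m_\ast$. However, Step~4, which you rightly flag as the crux, rests on a false claim, and so does the aside in Step~2.

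Since the $\hat S_a$ commute, square to $\I$, and each $\hat P_e$ is a nontrivial Pauli on its own block, $\hat S_B=\hat S_{\{a\}}$ holds exactly when every block has even degree in the edge set $B\triangle\{a\}$. A triangle qualifies: $\hat S_{\{e_1,e_3\}}\hat S_{\{e_3,e_2\}}=\hat P_{e_1}\otimes\hat P_{e_2}=\hat S_{\{e_1,e_2\}}$. This is precisely your own Step~3 example of two pair events cancelling on a shared block. Because $A_t$ contains every pair of same-tick blocks, every singleton collides with $|B|=2$ subsets whenever $|E_t|\ge 3$, which holds in every BB round.

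For such $B$ the cross term $2\operatorname{Re}\alpha_{\{a\}}\overline{\alpha_B}$ is a priori of size $2\theta_a\theta_b\theta_c$, i.e.\ cubic. Cauchy--Schwarz over the collision group controls only the squared multi-pair amplitude. Applied to the interference, it gives $|\alpha_{\{a\}}|\,\bigl|\sum_{|B|\ge 2}\alpha_B\bigr|=O(\Theta_t^3)$. Your fallback bound (``$(m_\ast+1)$ times the $|B|\ge 2$ mass plus the cross term'') is circular, since the cross term is exactly what needs bounding. Once the parity claim is dropped, your Steps~3--4 only yield an $O(\Theta_t^3)$ remainder.

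The missing idea is the phase structure of Eq.~\eqref{eq:alpha-B}. Each $\alpha_B$ is $(-i)^{|B|}$ times a real number, and collisions are exact operator identities $\hat S_B=\hat Q$ with no extra phase. So within the collision group of $\hat Q=\hat S_{\{a\}}$, the odd-$|B|$ terms (including $\{a\}$) are purely imaginary and the even-$|B|$ terms are real. Hence
\begin{equation*}
\Bigl|\sum_{\hat S_B=\hat Q}\alpha_B\Bigr|^2=\bigl|\alpha_{\{a\}}+R_{\mathrm{odd}}\bigr|^2+\bigl|R_{\mathrm{even}}\bigr|^2,
\end{equation*}
where $R_{\mathrm{odd}}$ sums over odd $|B|\ge 3$ and $R_{\mathrm{even}}$ over even $|B|\ge 2$. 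All opposite-parity interference, in particular the singleton versus length-two-path terms, cancels identically.

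Then $|R_{\mathrm{even}}|^2\le m_\ast\Pr[N\ge 2]\le\tfrac12 m_\ast\Theta_t^4$, exactly as in your Step~3. For $\Theta_t\le 1$ you also have $|R_{\mathrm{odd}}|\le m_\ast\Theta_t^3$, so the surviving cross term is at most $2m_\ast\theta_a\Theta_t^3=O(\Theta_t^4)$. Substituting this for your parity claim makes Steps~1--4 a complete proof. The paper itself only asserts that ``cross-subset interference'' is $O(\Theta_t^4)$; this phase argument is what justifies it.
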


For the full depth-$8$ cycle with $R$ relevant rounds, the roundwise geometry increments compose as
\begin{widetext}
\begin{equation}
\Delta\mathcal{N}_{\mathrm{cycle}}^{\sigma} = \sum_r \Delta\mathcal{N}_{r,\le 2}^{\sigma}
+ O\!\left(\sum_r \Theta_r^4 + \sum_{r\,<\,s}\Theta_r^2\Theta_s^2\right),
\label{eq:cycle-compose}
\end{equation}
\end{widetext}
so the full-cycle retained model inherits the roundwise $O(\Theta^4)$ control. From this point onward, the geometry enters the finite-code analysis only through the retained coefficients $a_i^\sigma$ and $b_{ij}^\sigma$, and through the weighted correlation graph they induce.

The baseline data channel at $\theta=0$ already contains higher-weight Pauli components from the stabilizer projection, handled by the decoder. The geometry-induced \emph{incremental} channel
\begin{equation}
\Delta\mathcal N_\phi^\sigma(\theta):=\mathcal N_\phi^\sigma(\theta)-\mathcal N_\phi^\sigma(0)
\end{equation}
need not introduce new higher-weight mass; Appendix~\ref{app:microscopic} (Fig.~\ref{fig:weight-audit}) verifies on a minimal one-round subcircuit that $\Delta M_{\ge 3}(\theta)<0$, so the geometry correction generates only weight-$1$ and weight-$2$ mass in that setting.

\section{Embeddings, kernels, and the BB72 reference support}
\label{sec:geom}

The retained channel of Eq.~\eqref{eq:retained-channel} depends on a routed embedding $\phi$ (Assumption~1) and on a proximity kernel $\kappa$ (Assumption~2). This section fixes the two embedding families, the three kernel families, and the worked BB72 support used throughout the rest of the paper.

\begin{figure*}[!tbp]
\centering
\includegraphics[width=0.92\textwidth]{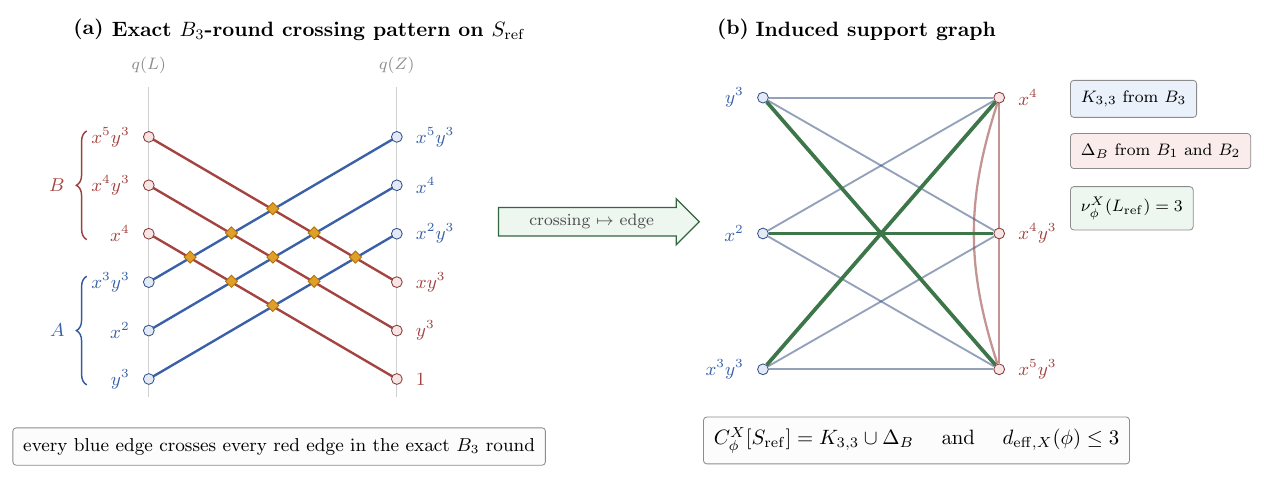}
\caption{BB72 reference support $S_{\mathrm{ref}}$ and the passage from crossings to the retained support graph. (a) $B_3$-round crossing picture on $S_{\mathrm{ref}}$ in the four-column monomial embedding; each same-round blue--red crossing pair induces an edge of the support graph. (b) Support graph after collecting the $B_3$ contributions and adding the $\Delta_B$ triangle from rounds $B_1$ and $B_2$. The highlighted matching has size $\nu_{\phi}^{X}(L_{\mathrm{ref}})=3$.}
\label{fig:bridge}
\end{figure*}

\subsection{Embedding families}
\label{sec:embeddings}

We compare two embedding families throughout. The first is a deterministic four-column monomial layout in which the $q(X)$, $q(L)$, $q(R)$, and $q(Z)$ registers occupy parallel columns and same-round gates are routed as straight segments. The second is a \emph{biplanar bounded-thickness} layout that implements the layer partition $G_A=\{A_2,A_3,B_3\}$ and $G_B=\{A_1,B_1,B_2\}$ from Ref.~\onlinecite{bravyi_memory_2024} and routes the relevant same-round edges without crossings within each layer. Figure~\ref{fig:biplanar-vs-implementation} contrasts the slab picture used in the theorem with the routed realization used in the numerics. The layer split and the no-same-round-crossing property are imported from Ref.~\onlinecite{bravyi_memory_2024} without reconstructing the full hardware geometry.

In the biplanar implementation, each routed edge receives an infinitesimal lane offset $\varepsilon_i\propto h/n$ within its layer, so that the 3D polylines are distinct. These offsets regularize the geometry engine and vanish in the large-code limit. The bounded-thickness separation between the two layers is the inter-layer distance $2h$. Under any strictly positive kernel, same-layer pairs at separation $\varepsilon_i\to 0$ still contribute $\kappa(0)$, so replacing the infinitesimal offsets with any fixed within-layer clearance changes per-pair kernel values but preserves the embedding hierarchy.

\subsection{Proximity kernels}
\label{sec:kernels}

Three kernel families appear in the paper. The \emph{crossing kernel} retains only zero-separation pairs,
\begin{equation}
\kappa_{\times}(d)=\mathbf 1_{d\,=\,0},
\label{eq:crossing-kernel}
\end{equation}
and isolates the combinatorial same-round crossing mechanism. The \emph{regularized algebraic kernel}
\begin{equation}
\kappa_{\alpha,r_0}(d)=\left(1+\frac{d}{r_0}\right)^{-\alpha}
\label{eq:powerlaw}
\end{equation}
has decay exponent $\alpha>0$ and regularization length $r_0>0$, and is the canonical two-dimensional distance-decay profile used to describe superconducting-qubit crosstalk~\cite{barrett_flux_2023}. The \emph{exponential kernel}
\begin{equation}
\kappa_{\xi}(d)=e^{-d/\xi}
\label{eq:exp-kernel}
\end{equation}
has decay length $\xi>0$ and describes screened or short-range residual couplings~\cite{kosen_crosstalk_2024}. Under $\kappa_\times$ the retained correlation graph is purely combinatorial. Under any strictly positive kernel, every finite same-round separation contributes a nonzero pair weight, and weighted exposure replaces crossing count as the discriminator between embeddings (Sec.~\ref{sec:theory}).

\subsection{Crossing criterion for four-column embeddings}
\label{sec:crossing-criterion}

In a straight-line four-column embedding, let $\sigma_L,\sigma_Z\in S_M$ be the permutations that assign vertical positions to the $q(L)$ and $q(Z)$ rows, with $S_M$ the symmetric group on $M=\ell m$ indices. The relevant same-round $q(L)\to q(Z)$ edge between data row $\alpha$ and its round-$r$ target $B_r\alpha$ (the image of $\alpha$ under the monomial action of the $r$-th term of $B$) crosses the one between $\beta$ and $B_r\beta$ if and only if
\begin{equation}
\left(\sigma_L(\alpha)-\sigma_L(\beta)\right)
\left(\sigma_Z(B_r\alpha)-\sigma_Z(B_r\beta)\right)<0.
\label{eq:cross-criterion}
\end{equation}
This is the order-reversal criterion for segments joining two parallel lines. The biplanar embedding has no such crossings by construction, because the relevant $B_1$, $B_2$, and $B_3$ rounds are routed in distinct layers and the same-round routes are planar within each layer.

\subsection{BB72 reference support}
\label{sec:ref-support}

BB72 is the smallest published BB memory benchmark in Ref.~\onlinecite{bravyi_memory_2024}. Throughout the rest of the paper, the pure-$q(L)$ support
\begin{equation}
S_{\mathrm{ref}}=\{3,12,21,24,27,33\}
\label{eq:ref-support}
\end{equation}
serves as a concrete minimum-weight $X$-sector logical support on BB72, on which all subsequent worked calculations are evaluated. We write $L_{\mathrm{ref}}$ for the corresponding $X$-sector logical operator, so that $\supp L_{\mathrm{ref}} =S_{\mathrm{ref}}$. Figure~\ref{fig:bridge} shows the passage from the monomial-embedding crossing pattern on $S_{\mathrm{ref}}$ to the retained support graph. The matching-number and weighted-exposure evaluations on $S_{\mathrm{ref}}$ are in Sec.~\ref{sec:theory}. Appendix~\ref{app:geometry} gives the toric-base placement.

\section{Effective distance, weighted exposure, and the AKP-type compatibility}
\label{sec:theory}

To turn the retained pair structure of Eq.~\eqref{eq:retained-channel} into a finite-code metric, we encode the nonzero retained pair coefficients as edges of a weighted graph on the sector data qubits. The two geometry metrics used in the rest of the paper read off directly from this graph. For a fixed embedding $\phi$ and sector $\sigma$, the \emph{weighted correlation graph} is
\begin{equation}
C_{\phi}^{\sigma}=(V,E,w),
\end{equation}
whose vertices $V$ are the sector data qubits, whose edges $E$ are the retained pair locations with nonzero coefficient in Eq.~\eqref{eq:retained-channel}, and whose edge weights $w_{ij}:=b_{ij}^{\sigma}$ are the retained pair probabilities. A \emph{logical operator} is a Pauli operator that commutes with every stabilizer but is not itself a stabilizer; its \emph{support} is $S=\supp L$, the set of qubits on which it acts nontrivially. Given a sector logical operator $L$, write $C_{\phi}^{\sigma}[S]$ for the subgraph of $C_{\phi}^{\sigma}$ induced on $S$. A \emph{matching} in a graph is a set of edges no two of which share a vertex.

\begin{definition}[Matching number and weighted exposure]
For a logical support $S$, the \emph{matching number}
\begin{equation}
\nu_{\phi}^{\sigma}(L)=\nu\Bigl(C_{\phi}^{\sigma}[S]\Bigr)
\end{equation}
is the size of a maximum matching in the support-induced subgraph, and the \emph{weighted exposure}
\begin{equation}
\Ew_{\phi}^{\sigma}(L)=\sum_{\{i,j\}\,\subseteq\, S} w_{ij}
\label{eq:exposure-def}
\end{equation}
is the total retained pair weight on the support.
\end{definition}

We now relate the matching number to the effective fault weight by treating each retained single- or pair-data location as a single elementary event.

\begin{definition}[Effective fault weight on a support]
Let $L$ be a sector logical operator with support $S$. The effective fault weight of $L$ on $S$, written $w^{\sigma}_{\mathrm{eff},\phi}(L)$, is the minimum number of retained elementary fault locations whose product acts as $L$ on $S$, counting one for each retained single location and one for each retained pair location.
\end{definition}

\subsection{Adversarial effective-distance reduction}

Let $d_{\sigma}$ denote the code distance in sector $\sigma$---the minimum weight of a nontrivial sector-$\sigma$ logical operator---and let $\mathcal L_{\sigma,d_\sigma}$ denote the set of sector-$\sigma$ logical operators of minimum weight $d_\sigma$. Define the sector effective distance
\begin{equation}
d^{\sigma}_{\mathrm{eff}}(\phi):=\min_L w^{\sigma}_{\mathrm{eff},\phi}(L),
\end{equation}
where the minimum is over all sector logical operators.

\begin{theorem}[Exact effective fault weight on a support]
\label{thm:deff}
Let $L$ be a sector logical operator with support $S$, and let $\nu_{\phi}^{\sigma}(L)$ be the matching number of $C_{\phi}^{\sigma}[S]$. Assume that every vertex of $S$ admits a retained single-location sector fault of the correct Pauli type. Then
\begin{equation}
w^{\sigma}_{\mathrm{eff},\phi}(L)= |S|-\nu_{\phi}^{\sigma}(L).
\label{eq:weff-bound}
\end{equation}
Consequently,
\begin{equation}
d^{\sigma}_{\mathrm{eff}}(\phi)\,\le\, d_{\sigma}-\max_{L\,\in\,\mathcal L_{\sigma,d_{\sigma}}}\nu_{\phi}^{\sigma}(L).
\label{eq:deff-bound}
\end{equation}
Equality holds in Eq.~\eqref{eq:deff-bound} if the minimizer of $w_{\mathrm{eff}}$ is attained among minimum-weight logicals, which is the case in all BB72 instances studied here.
\end{theorem}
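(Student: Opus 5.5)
The identity $w^{\sigma}_{\mathrm{eff},\phi}(L)=|S|-\nu_{\phi}^{\sigma}(L)$ will follow from two matching inequalities. In the sector of interest every retained location acts as a product of same-type Paulis (a single $\mathcal P_i^\sigma$ or a pair $\mathcal P_{ij}^\sigma$). So I will work over $\F$ and identify a collection of locations with the mod-2 sum of their supports. The requirement that the product ``acts as $L$ on $S$'' then becomes a parity condition: each vertex of $S$ is covered an odd number of times, and, reading the definition with the support fixed, no location leaves $S$.

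For the upper bound, take a maximum matching $\mathcal M$ of $C_{\phi}^{\sigma}[S]$, so $|\mathcal M|=\nu_{\phi}^{\sigma}(L)$. Each edge of $\mathcal M$ is a retained pair location inside $S$. The $|S|-2\nu$ vertices left unmatched are each covered by their retained single location, which exists by the assumption of the theorem. These locations cover $S$ exactly once, so their product is $L$ on $S$, and the count is $\nu+(|S|-2\nu)=|S|-\nu$.

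For the lower bound, take any admissible collection of $s$ single locations and $t$ pair locations whose product acts as $L$ on $S$. First I reduce it. Any location used an even number of times cancels mod 2, and removing such repeats lowers the count, so I may assume the locations are distinct. The main obstacle comes next: pair edges may share vertices, for example a path of edges covering its interior vertices twice, or may stick out of $S$. I will take a minimal admissible collection and show it can be converted into a matching plus singles without increasing its size. Suppose a vertex $v\in S$ lies on two chosen pair edges, $\{v,a\}$ and $\{v,b\}$. Replace these two edges by the single locations at $a$ and at $b$, which exist when $a,b\in S$. This keeps the parity at every vertex and does not increase the count. If the incidence at $v$ becomes even, I add the single location at $v$ only when parity demands it; in that case a degree-3 or higher vertex is reduced with net change $\le 0$. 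Edges reaching outside $S$ must have their outside endpoints paired off evenly. Under the fixed-support reading such edges are excluded outright. Otherwise I will argue they cost at least as much as singles: an outside vertex cancelled by two edges $\{i,x\},\{x,j\}$ is replaced by the singles $i,j$. After these reductions the pair edges form a matching $\mathcal M'$ in $C_{\phi}^{\sigma}[S]$ and the remaining vertices of $S$ are singles. Hence $s+t\ge |\mathcal M'|+(|S|-2|\mathcal M'|)=|S|-|\mathcal M'|\ge|S|-\nu$.

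The corollary for the effective distance is then immediate. For any $L\in\mathcal L_{\sigma,d_\sigma}$ we have $d^{\sigma}_{\mathrm{eff}}\le w^{\sigma}_{\mathrm{eff},\phi}(L)=d_\sigma-\nu_{\phi}^{\sigma}(L)$, and minimizing the right side over such $L$ gives Eq.~\eqref{eq:deff-bound}. If the global minimizer of $w_{\mathrm{eff}}$ lies in $\mathcal L_{\sigma,d_\sigma}$, the minimum equals $\min_L(d_\sigma-\nu)$ and the bound is tight. For BB72 I would cite the enumeration of weight-$6$ logicals as a computational check rather than prove it.
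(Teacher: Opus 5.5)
Your proposal is correct. The upper bound and the passage to the effective-distance bound coincide with the paper's. Your lower bound, however, takes a different route.

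The paper never modifies a realization. It fixes an arbitrary edge set $F$ in $C_{\phi}^{\sigma}[S]$ and notes that the cheapest completion by singles costs $|F|+|S|-o(F)$, where $o(F)$ counts the odd-degree vertices of $(S,F)$. It then bounds $o(F)-|F|$ component by component: $o_r\le v_r$ and $e_r\ge v_r-1$ give $o_r-e_r\le 1$. So $o(F)-|F|$ is at most the number $t$ of nontrivial components, and one edge per component forms a matching, whence $t\le\nu$.

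You instead run an exchange argument. Two pair edges $\{v,a\},\{v,b\}$ at a shared vertex become singles at $a$ and $b$. This preserves the parity at every vertex and does not raise the count, while removing two pair edges. The process therefore terminates, a point you should state explicitly, in a matching $\mathcal M'$ plus singles of cost $|S|-|\mathcal M'|$.

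Your route buys a constructive normal form: every realization can be rewritten at no extra cost as a matching plus singles, which is exactly the shape of the optimizer in the upper bound. The paper's route buys brevity, with no bookkeeping about termination or cancellation of duplicated singles. It also makes transparent that the lower bound is purely combinatorial and does not use the single-location hypothesis, which is needed only for achievability. Your swaps invoke singles at $a$ and $b$, although your final count inequality would survive without them.

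Two small tidy-ups. First, the clause about adding a single at $v$ is vacuous: deleting two edges at $v$ changes its coverage by $2$, so its parity never needs repair.

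Second, your treatment of edges leaving $S$ through pairs $\{i,x\},\{x,j\}$ fails when an outside vertex is adjacent to another outside vertex, where no single is guaranteed. It is simpler to replace each edge with exactly one endpoint in $S$ by the single at that endpoint, and to delete every location disjoint from $S$. This preserves parity on $S$, clears the complement of $S$ under either reading of the definition, and does not raise the count. The paper sidesteps the issue by taking $F$ inside $C_{\phi}^{\sigma}[S]$ from the start, which is your fixed-support reading.
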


\begin{proof}[Proof of Theorem~\ref{thm:deff}]
\emph{Upper bound.} Let $M$ be a maximum matching in $C_{\phi}^{\sigma}[S]$, so $|M|=\nu_{\phi}^{\sigma}(L)$. Use one retained pair fault for each edge in $M$ and one single-qubit fault for each unmatched vertex. The total is
\begin{equation}
|M|+\bigl(|S|-2|M|\bigr)=|S|-\nu.
\end{equation}

\emph{Lower bound.} Let $F$ be any set of retained pair edges used to realize $L$ on $S$. Every support vertex must be covered an odd number of times by the union of pair and single faults. Vertices of odd degree in $(S,F)$ require no additional single fault; all others require one. The minimum number of singles is therefore $|S|-o(F)$, where $o(F)$ is the number of odd-degree vertices in the subgraph $(S,F)$, and the total cost is $|F|+|S|-o(F)$. Decompose $(S,F)$ into nontrivial connected components $C_1,\,\ldots,\,C_t$ with $v_r$ vertices and $e_r$ edges each. In each component $o_r\le v_r$ and $e_r\ge v_r-1$ (connectivity), so $o_r-e_r\le 1$. Summing gives $o(F)-|F|\le t$. Selecting one edge from each component yields a matching of size $t$, hence $t\le\nu$. Therefore,
\begin{align}
|F|+|S|-o(F)&=|S|-\bigl(o(F)-|F|\bigr)\\
&\ge|S|-\nu.
\end{align}

Combining the two bounds gives $w^{\sigma}_{\mathrm{eff},\phi}(L)=|S|-\nu$. Restricting to minimum-weight logicals and taking the minimum yields Eq.~\eqref{eq:deff-bound}.

\end{proof}

\begin{remark}
The single-location assumption holds in the implemented BB circuit model, where each data qubit undergoes single-qubit depolarizing errors with rate $p$ per tick.
\end{remark}

Each matched support pair removes one elementary fault location from the adversarial count, so more crossings on the support mean a lower effective distance.

\begin{corollary}
\label{cor:complete-weff}
If $C_{\phi}^{\sigma}[S]$ is the complete graph on $S$, then $w^{\sigma}_{\mathrm{eff},\phi}(L)=\lceil|S|/2\rceil$.
\end{corollary}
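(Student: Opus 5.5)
The plan is to apply Theorem~\ref{thm:deff} directly: under its single-location hypothesis, $w^{\sigma}_{\mathrm{eff},\phi}(L)=|S|-\nu_{\phi}^{\sigma}(L)$. The corollary then reduces to a purely graph-theoretic fact, namely that the matching number of the complete graph $K_{|S|}$ is $\lfloor|S|/2\rfloor$. I will take the single-location hypothesis as inherited from Theorem~\ref{thm:deff}. The accompanying remark notes that it holds in the implemented BB circuit model.

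The first step is to compute $\nu(K_s)$ for $s=|S|$. For the upper bound, each matching edge consumes two distinct vertices, so any matching has at most $\lfloor s/2\rfloor$ edges. For the lower bound, list the vertices of $S$ in any order as $v_1,\dots,v_s$ and take the edges $\{v_1,v_2\},\{v_3,v_4\},\dots$. Every such edge is present because the graph is complete. These edges are pairwise disjoint, and there are $\lfloor s/2\rfloor$ of them. Hence $\nu_{\phi}^{\sigma}(L)=\lfloor s/2\rfloor$.

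The second step is the arithmetic identity $s-\lfloor s/2\rfloor=\lceil s/2\rceil$, checked separately for even and odd $s$. Substituting into Eq.~\eqref{eq:weff-bound} gives $w^{\sigma}_{\mathrm{eff},\phi}(L)=\lceil|S|/2\rceil$. For the worked BB72 support with $|S|=6$, this value is $3$, which matches the monomial value quoted in the introduction.

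I do not expect a genuine obstacle. The only point needing care is that the corollary implicitly relies on the hypothesis of Theorem~\ref{thm:deff}, since without retained single faults an odd $|S|$ could not be realized. I will state this reliance explicitly.
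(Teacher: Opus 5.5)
Your proposal is correct and follows the paper's intended argument: the corollary is stated without separate proof as an immediate consequence of Theorem~\ref{thm:deff}, obtained by substituting $\nu(K_{|S|})=\lfloor |S|/2\rfloor$ into $w^{\sigma}_{\mathrm{eff},\phi}(L)=|S|-\nu^{\sigma}_{\phi}(L)$. You are also right that the corollary tacitly inherits the theorem's single-location hypothesis, which the paper leaves implicit; that hypothesis is in fact only needed for the upper bound when $|S|$ is odd.
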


Under the crossing kernel, Theorem~\ref{thm:deff} becomes a combinatorial statement: projecting the same-round crossings on the support directly lowers the effective-distance bound.

\subsection{Support saturation and weighted exposure for positive kernels}

\begin{proposition}[Support saturation for strictly positive kernels]
\label{prop:saturation-main}
Fix a sector logical operator $L$ with support $S$. Suppose that for every unordered pair $\{i,j\}\subseteq S$ there exists at least one retained round in which the corresponding routed gate blocks are simultaneously active, disjoint, and separated by a finite distance, and that the retained pair coefficient is strictly positive for every such finite same-round separation. Then the support-induced graph $C_{\phi}^{\sigma}[S]$ is complete. Consequently, once $|S|$ is fixed, the matching number depends only on support size and no longer distinguishes embeddings.
\end{proposition}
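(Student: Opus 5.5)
The plan is to verify directly that every pair $\{i,j\}\subseteq S$ is an edge of $C_\phi^\sigma$, using the linearized form of the retained pair coefficients. We then read off the matching number from Corollary~\ref{cor:complete-weff}. The argument is essentially bookkeeping, but it needs one nonnegativity observation.

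First, fix $\{i,j\}\subseteq S$. By hypothesis, there is a retained round $r$, a tick $t$ in it, and a block pair $a=\{e,e'\}\in A_t$ with the following properties: $e,e'$ are simultaneously active and disjoint, they are separated by $d=d_\phi(e,e')<\infty$, and the single-pair event $\hat S_a$ propagates to the data pair $(i,j)$. This gives $M^{\sigma}_{ij,a}=1$ in Corollary~\ref{cor:linearized}, and by Theorem~\ref{thm:schedule-prop} the propagated Pauli is exactly the weight-$2$ operator on $i,j$.

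Second, show $b_{ij}^\sigma>0$. The cleanest route avoids the $O(\Theta_t^4)$ remainder. The retained coefficients are obtained from the twirled channel of Eq.~\eqref{eq:tick-twirl}, which is a mixture of Pauli conjugations with nonnegative weights $|\sum_B\alpha_B|^2$. The single-pair subset $B=\{a\}$ alone contributes the amplitude $\alpha_{\{a\}}$ to the monomial $\hat S_{\{a\}}$, with $|\alpha_{\{a\}}|^2=\sin^2\theta_a\prod_{b\neq a}\cos^2\theta_b$.

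Here the hypothesis enters: the retained coefficient at the finite separation $d$ is strictly positive. In the leading-order model this is $q_\kappa(d)>0$, which follows from $\kappa(d)>0$ with $\theta_a\in(0,\pi/2)$ in the small-coupling window. Since $b_{ij}^\sigma$ is a sum of nonnegative contributions over rounds and pair events, and at least one of them equals $q_\kappa(d)>0$, we get $b_{ij}^\sigma\ge q_\kappa(d)>0$. Hence $\{i,j\}\in E$. The step from ``one positive contribution'' to ``positive total'' uses only nonnegativity of the retained Pauli probabilities stated after Eq.~\eqref{eq:retained-channel}. For several relevant rounds, Eq.~\eqref{eq:cycle-compose} applies, with the retained model taken as the sum of roundwise increments.

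Third, since $\{i,j\}$ was arbitrary, $C_\phi^\sigma[S]$ is complete. The matching number of $K_{|S|}$ is $\lfloor |S|/2\rfloor$, and Corollary~\ref{cor:complete-weff} gives $w_{\mathrm{eff}}=\lceil|S|/2\rceil$. Both depend only on $|S|$, so the matching number cannot distinguish embeddings with the same support.

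The main obstacle is the positivity step. Interference in Eq.~\eqref{eq:tick-twirl} could in principle cancel amplitude when distinct subsets collide on the same monomial, so $b^\sigma_{ij}$ is not literally a sum of $\sin^2$ terms at finite coupling. I would resolve this by leaning on the hypothesis as stated, namely strict positivity of the \emph{retained} coefficient for each finite same-round separation, together with nonnegativity of the other retained contributions. If that were not granted, I would fall back on Corollary~\ref{cor:linearized} and restrict to $\Theta_t$ small enough that $\sin^2\theta_a$ dominates the $O(\Theta_t^4)$ remainder.
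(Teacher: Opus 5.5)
Your proposal is correct and follows the paper's own argument: each support pair receives one strictly positive same-round contribution, and the remaining retained contributions are nonnegative (the paper uses this implicitly when it concludes that the aggregated $w_{ij}$ is positive). So every pair is an edge, $C_\phi^\sigma[S]=K_{|S|}$, and the matching number is $\lfloor |S|/2\rfloor$. Your caveat about monomial collisions in the full-tick twirl is fair, and you resolve it the same way the paper does, by invoking the stated strict-positivity hypothesis on the retained coefficient itself.
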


\begin{proof}[Proof of Proposition~\ref{prop:saturation-main}]
By assumption, for every unordered pair $\{i,j\}\subseteq S$, there is at least one retained same-round gate-block pair whose propagated coefficient contributes a strictly positive pair weight to the retained sector model. Hence, the aggregated retained coefficient $w_{ij}$ is strictly positive for every unordered support pair, so every unordered pair of vertices is an edge of $C_{\phi}^{\sigma}[S]$. Therefore $C_{\phi}^{\sigma}[S]$ is the complete graph on $S$, and its matching number depends only on $|S|$.
\end{proof}

The saturation hypothesis is a condition on the schedule rather than the embedding. We verify it directly on the BB72 pure-$q(L)$ weight-$6$ family of Sec.~\ref{sec:logical-aware}: for every support in the family, the three retained $B$ rounds together cover all $\binom{6}{2}=15$ support pairs, so under any strictly positive kernel the support graph is complete. The sector-relevant rounds on the BB144 weight-$12$ supports used in the scaling check cover the support pairs in the same way.

Proposition~\ref{prop:saturation-main} and Corollary~\ref{cor:complete-weff} imply that once the support graph saturates, every embedding has effective fault weight $\lceil|S|/2\rceil$ on that support. Finite-coupling discrimination then depends on weighted exposure rather than adversarial distance. To turn that statement into a risk bound, we need to specify how retained pair events co-occur. The standard stochastic noise model in BB memory simulations treats each elementary fault location as an independent Bernoulli trial~\cite{bravyi_memory_2024}, and we apply the same assumption to the retained pair locations.

\begin{assumption}[Independent retained pair events]
\label{asm:indep}
Within a single syndrome-extraction cycle, the retained pair indicators $Y_{ij}\sim\mathrm{Bernoulli}(w_{ij})$ on distinct support pairs $\{i,j\}\subseteq S$ are independent.
\end{assumption}

Assumption~\ref{asm:indep} is the only step beyond the controlled truncation of Theorem~\ref{thm:schedule-prop} on which the finite-coupling analysis below relies. Multi-pair coincidences in the same tick enter the full-tick twirl at $O(\Theta_t^4)$ and are counted as part of the discarded mass $M_{\ge 3}$; Assumption~\ref{asm:indep} is the declaration that we do not resum those higher-weight contributions back into the retained pair-event statistics.

For conciseness, we abbreviate the weighted exposure on the support of $L$ by $\Ew:=\Ew_\phi^\sigma(L)$ for the remainder of this subsection.

\begin{proposition}[Exposure as the first-order pair-risk parameter]
\label{prop:union}
Under Assumption~\ref{asm:indep}, the probability that at least one retained pair event occurs on the support of $L$ satisfies
\begin{equation}
\Pr\!\left[\sum_{\{i,j\}\,\subseteq\, S} Y_{ij}\ge 1\right]=1-\!\prod_{\{i,j\}\,\subseteq\, S}\left(1-w_{ij}\right),
\label{eq:exact-pair-risk}
\end{equation}
and is bounded by
\begin{equation}
1-e^{-\Ew}\,\le\,\Pr\!\left[\sum_{\{i,j\}\,\subseteq\, S} Y_{ij}\ge 1\right]\,\le\,\min\{1,\,\Ew\}.
\label{eq:two-sided}
\end{equation}
\end{proposition}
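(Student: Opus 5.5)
The exact expression follows from complementation and Assumption~\ref{asm:indep}. The event $\sum_{\{i,j\}\subseteq S}Y_{ij}\ge 1$ is the complement of the event that every $Y_{ij}=0$. By independence of the indicators across distinct support pairs, the latter has probability $\prod_{\{i,j\}\subseteq S}(1-w_{ij})$. This gives Eq.~\eqref{eq:exact-pair-risk} directly. The only bookkeeping point is that each $w_{ij}=b_{ij}^{\sigma}\in[0,1]$, which holds because the retained coefficients in Eq.~\eqref{eq:retained-channel} are nonnegative probabilities summing to at most $1$; so each Bernoulli law is well defined.

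For the upper bound in Eq.~\eqref{eq:two-sided} I will use the union bound, which does not even need independence:
\begin{equation*}
\Pr\Bigl[\bigcup_{\{i,j\}}\{Y_{ij}=1\}\Bigr]\le\sum_{\{i,j\}\subseteq S}w_{ij}=\Ew .
\end{equation*}
Combining this with the trivial bound by $1$ gives $\min\{1,\Ew\}$. Equivalently, I can obtain it from the exact product via the elementary inequality $1-\prod_k(1-x_k)\le\sum_k x_k$ for $x_k\in[0,1]$, proved by induction on the number of factors.

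For the lower bound I will use $1-x\le e^{-x}$ for all real $x$, applied factorwise to the nonnegative quantities $1-w_{ij}$. Multiplying these inequalities is legitimate because all factors are nonnegative, and it gives $\prod(1-w_{ij})\le\exp(-\sum w_{ij})=e^{-\Ew}$. Subtracting from $1$ then yields $1-e^{-\Ew}\le\Pr[\cdot]$.

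No step is a genuine obstacle. The only care needed is to note that the factors are nonnegative, so that the product of the termwise inequalities preserves direction. I will also note that the lower bound is the one place where Assumption~\ref{asm:indep} is essential: positive correlations among the pair events could push the probability below $1-e^{-\Ew}$, whereas the upper bound holds for any dependence structure.
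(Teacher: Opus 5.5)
Your proposal is correct and matches the paper's proof: the exact formula is the complement of the all-clear probability $\prod(1-w_{ij})$ under Assumption~\ref{asm:indep}, the upper bound is the union bound, and the lower bound comes from multiplying $1-w_{ij}\le e^{-w_{ij}}$ over the nonnegative factors. Your extra bookkeeping is accurate and a little more careful than the paper's version: you check that $w_{ij}\in[0,1]$, and you note that the union bound holds under any dependence while the lower bound requires independence.
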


\begin{proof}
Let $Y_{ij}\sim\mathrm{Bernoulli}(w_{ij})$ be the retained pair indicators on support pairs $\{i,j\}\subseteq S$, independent under Assumption~\ref{asm:indep}. Equation~\eqref{eq:exact-pair-risk} is then the complement of the all-clear probability $\prod(1-w_{ij})$. The upper bound in Eq.~\eqref{eq:two-sided} is the union bound. The lower bound follows from $1-w_{ij}\le e^{-w_{ij}}$ and multiplication. In the weak-pair regime $\Ew\ll 1$, expanding $1-\prod(1-w_{ij})=\sum w_{ij}+O\bigl((\sum w_{ij})^2\bigr)$ gives a first-order estimate $\Ew+O(\Ew^2)$.
\end{proof}

Under the retained independent-pair model, weighted exposure is therefore the first finite-coupling quantity that separates embeddings once support graphs saturate: lower total pair weight on a logical support means lower pair-event risk on that support.

\begin{corollary}[Certified improvement under the logical-aware objective]
\label{cor:certified-Jkappa}
For a reference family $\RX$ of logical supports, define
\begin{equation}
J_{\kappa}(\phi;\,\RX):=\max_{L\,\in\,\RX}\,\Ew_{\phi}^{\sigma}(L)
\end{equation}
and the worst-case pair-event probability over $\RX$,
\begin{equation}
B_{\kappa}(\phi;\,\RX) :=
\max_{L\,\in\,\RX}\,\Pr\!\left[\sum_{\{i,j\}\,\subseteq\,\supp(L)} Y_{ij}\ge 1\right].
\end{equation}
Then $B_{\kappa}(\phi;\,\RX)\le J_{\kappa}(\phi;\,\RX)$. In particular, if $J_{\kappa}(\phi';\,\RX)<J_{\kappa}(\phi;\,\RX)$, then the upper bound on the worst-case pair-event burden over $\RX$ is strictly smaller for $\phi'$ than for $\phi$.
\end{corollary}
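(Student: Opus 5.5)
The plan is to derive Corollary~\ref{cor:certified-Jkappa} directly from the upper half of the two-sided bound in Proposition~\ref{prop:union}, applied support by support, and then take maxima over $\RX$. Fix an embedding $\phi$, and for each $L\in\RX$ write $P_\phi(L)$ for the pair-event probability on $\supp(L)$. By Eq.~\eqref{eq:two-sided},
\begin{equation*}
P_\phi(L)\le\min\{1,\Ew_\phi^\sigma(L)\}\le\Ew_\phi^\sigma(L)\le J_\kappa(\phi;\RX).
\end{equation*}
The last inequality holds because $J_\kappa$ is the maximum of the exposures over the family. This bound is uniform in $L$, so taking the maximum over $L\in\RX$ on the left gives $B_\kappa(\phi;\RX)\le J_\kappa(\phi;\RX)$. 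The maxima exist because $\RX$ is a finite family of supports; for BB72 it is the \bbSeventyTwoPureLFamilyCount{} pure-$q(L)$ supports, and in general any reference family of logical supports on a finite code is finite.

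For the second statement, I will interpret ``upper bound on the worst-case pair-event burden'' as the certified quantity $J_\kappa(\cdot;\RX)$ itself. Applying the first part separately to $\phi$ and to $\phi'$ gives $B_\kappa(\phi';\RX)\le J_\kappa(\phi';\RX)<J_\kappa(\phi;\RX)$, so the certificate for $\phi'$ is strictly smaller. I will state explicitly that this does \emph{not} imply $B_\kappa(\phi')<B_\kappa(\phi)$. The true burdens are only bounded from above, and the lower bound $1-e^{-J}$ from Eq.~\eqref{eq:two-sided} does not pair the two embeddings tightly enough to reverse the inequality. If a sharper comparison were wanted, one could add the weak-pair regime remark $B_\kappa=J_\kappa+O(J_\kappa^2)$, but that falls outside the corollary as stated.

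There is essentially no technical obstacle. The only care needed is bookkeeping, in three places:
\begin{enumerate}[label=(\roman*),leftmargin=2em]
\item Assumption~\ref{asm:indep} must hold separately for each support in $\RX$, since Proposition~\ref{prop:union} is invoked once per $L$.
\item The weights $w_{ij}$ are the retained coefficients $b_{ij}^\sigma(\phi)$ for the same kernel $\kappa$ that defines $J_\kappa$.
\item The union-bound half of Eq.~\eqref{eq:two-sided} does not actually need independence, so the inequality $B_\kappa\le J_\kappa$ holds even without Assumption~\ref{asm:indep}. I would note this as a robustness remark.
\end{enumerate}
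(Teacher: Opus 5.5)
Your proposal is correct and matches the paper's proof: apply the union-bound half of Proposition~\ref{prop:union} to each support in $\RX$, then take the maximum over the family. Your side remarks are accurate refinements that the paper leaves implicit: the second claim compares only the certificate $J_\kappa$, not $B_\kappa$ itself, and the union bound uses only the marginals $\Pr[Y_{ij}=1]=w_{ij}$, not independence.
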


\begin{proof}
Apply Proposition~\ref{prop:union} supportwise and take the maximum over $L\in\RX$.
\end{proof}

\subsection{Crossing-kernel evaluation on the BB72 reference support}

For the BB72 reference support $S_{\mathrm{ref}}$ of Sec.~\ref{sec:ref-support}, the crossing criterion Eq.~\eqref{eq:cross-criterion} yields \bbSeventyTwoCrossSupportCrossings{} crossing edges on $S_{\mathrm{ref}}$ across the three relevant $B$ rounds in the monomial embedding, with maximum matching number $\nu_\phi^X(L_{\mathrm{ref}})=\bbSeventyTwoCrossMatching$. Theorem~\ref{thm:deff} then gives
\begin{equation}
w_{\mathrm{eff}}(L_{\mathrm{ref}})\le 6-3=3.
\end{equation}
In the biplanar implementation, the crossing kernel induces no edges on $S_{\mathrm{ref}}$, so Theorem~\ref{thm:deff} leaves the bound at $w_{\mathrm{eff}}(L_{\mathrm{ref}})\le 6$.

\subsection{Weighted exposure under positive kernels}

For the worked BB72 support, the algebraic audit gives
\begin{equation}
\Ew^X_{\phi_{\mathrm{mono}}}(L_{\mathrm{ref}})=\bbSeventyTwoMonoExposure{},
\qquad
\Ew^X_{\phi_{\mathrm{bi}}}(L_{\mathrm{ref}})=\bbSeventyTwoBiExposure{}
\end{equation}
for the regularized algebraic kernel with $\alpha=3$, $r_0=1$, and $J_0\tau=0.04$. The matching numbers agree, but the monomial exposure is larger, so the weighted exposure distinguishes the embeddings where the matching numbers do not.

The biplanar embedding also admits a simple analytic upper bound. Let $\delta_A$ and $\delta_B$ denote the minimum separations between support pairs in the same round, in the layers carrying $B_3$ and $B_{1,2}$ respectively. For the worked BB72 support,
\begin{equation}
\Ew^X_{\phi_{\mathrm{bi}}}(L_{\mathrm{ref}})
\le 15\left(2q_{\kappa}(\delta_B)+q_{\kappa}(\delta_A)\right),
\label{eq:bi-surrogate-proof}
\end{equation}
with $q_\kappa(\cdot)$ the retained sector coefficient of Eq.~\eqref{eq:q-kappa-def}. The prefactor charges each of the $\binom{6}{2}=15$ support pairs with the worst-case layer separation in every round, whereas a given pair contributes in only one of the three relevant $B$ rounds (two in layer $G_B$ and one in $G_A$); the bound therefore trades tightness for a closed analytic form.

\subsection{Two-dimensional summability and the AKP-type compatibility}

The summability and AKP statements in this subsection concern the amplitude-level microscopic Hamiltonian of Eq.~\eqref{eq:tick-ham-main} before twirling, and give asymptotic compatibility conditions for regularized algebraic kernels.

\begin{lemma}[Two-dimensional summability for regularized power laws]
\label{lem:summability}
Assume a planar layout with minimum pairwise separation $a>0$. Then for any data location $i$ and any $\alpha>2$,
\begin{equation}
\sum_{j\neq i}\left(1+\frac{d_{ij}}{r_0}\right)^{-\alpha}
\le 24\,\zeta\!\left(\alpha-1\right)\left(\frac{r_0}{a}\right)^{\alpha},
\label{eq:summability}
\end{equation}
where $\zeta$ is the Riemann zeta function.
\end{lemma}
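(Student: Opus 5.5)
The plan is a shell decomposition around location $i$: a packing bound on the number of locations in each shell, combined with a pointwise bound on the kernel in that shell. The constant $24$ and the factor $\zeta(\alpha-1)$ should both come out of this directly.

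First, partition the other locations by distance from $i$. Since the minimum separation is $a$, every $j\neq i$ has $d_{ij}\ge a$. So each such $j$ lies in exactly one shell
\[
\mathcal A_k:=\{j:\ ka\le d_{ij}<(k+1)a\},\qquad k\ge 1.
\]

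Second, count the locations in each shell by a packing (area) argument. The open disks of radius $a/2$ centred at distinct locations are pairwise disjoint, by the minimum-separation hypothesis. For $j\in\mathcal A_k$, the disk around $j$ lies inside the annulus of radii $ka-a/2$ and $(k+1)a+a/2$ about $i$. That annulus has area
\[
\pi a^2\bigl[(k+\tfrac32)^2-(k-\tfrac12)^2\bigr]=\pi a^2(4k+2).
\]
Dividing by the area $\pi a^2/4$ of each disk gives
\[
|\mathcal A_k|\le 16k+8\le 24k \qquad (k\ge 1).
\]
This is where the constant $24$ comes from. This step is the only genuinely geometric one, and I expect it to be the main point requiring care: one has to ensure the enlarged annulus contains each disk and that the linear-in-$k$ count is valid already at $k=1$. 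Both hold because the inner radius $ka-a/2$ is positive for $k\ge1$.

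Third, bound the kernel pointwise on each shell. For $j\in\mathcal A_k$ we have $d_{ij}\ge ka$. Dropping the $1$ in the regularized kernel (which only increases it) gives
\[
\Bigl(1+\tfrac{d_{ij}}{r_0}\Bigr)^{-\alpha}\le \Bigl(\tfrac{r_0}{d_{ij}}\Bigr)^{\alpha}\le \Bigl(\tfrac{r_0}{a}\Bigr)^{\alpha}k^{-\alpha}.
\]
Summing over shells then yields
\[
\sum_{j\ne i}\Bigl(1+\tfrac{d_{ij}}{r_0}\Bigr)^{-\alpha}\le \sum_{k\ge1}24k\Bigl(\tfrac{r_0}{a}\Bigr)^{\alpha}k^{-\alpha}=24\,\zeta(\alpha-1)\Bigl(\tfrac{r_0}{a}\Bigr)^{\alpha}.
\]
The series converges exactly when $\alpha-1>1$, i.e.\ $\alpha>2$, which explains the planar threshold in the statement.
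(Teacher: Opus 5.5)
Your proposal is correct and follows the paper's proof essentially step for step. Both use shells of width $a$ around $i$, pack disjoint radius-$a/2$ disks into the fattened annulus of area $\pi(4k+2)a^2$ to get $|\mathcal A_k|\le 16k+8\le 24k$, and sum the shellwise kernel bound into $24\,\zeta(\alpha-1)(r_0/a)^{\alpha}$. You spell out one step the paper leaves implicit, namely dropping the $1$ to get $(1+d_{ij}/r_0)^{-\alpha}\le (r_0/a)^{\alpha}k^{-\alpha}$.
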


\begin{proof}
Center annuli of width $a$ around $i$ and let $\mathcal{A}_n$ be the annulus with inner radius $na$ and outer radius $(n+1)a$. Placing non-overlapping disks of radius $a/2$ at each data location, and noting that a disk centered in $\mathcal{A}_n$ lies in the fattened annulus from $(n-\tfrac12)a$ to $(n+\tfrac32)a$ (area $\pi(4n+2)a^2$), gives at most $16n+8\le 24n$ points in $\mathcal{A}_n$ for $n\ge 1$. Hence
\begin{equation}
\sum_{j\neq i}\left(1+\frac{d_{ij}}{r_0}\right)^{-\alpha}
\le 24\left(\frac{r_0}{a}\right)^{\alpha}\sum_{n=1}^{\infty}n^{1-\alpha},
\end{equation}
and the final series equals $\zeta(\alpha-1)$, which converges if and only if $\alpha>2$. A tighter packing argument replaces the prefactor $24$ with $8$; convergence itself uses only the $O(n)$ shell count.
\end{proof}

The exponent $\alpha=2$ is the planar threshold for summability of algebraic couplings. Lemma~\ref{lem:summability} is stated for data locations rather than routed gate blocks; it carries over to the block-level quantity $\eta_e^{(t)}(\phi)$ of Eq.~\eqref{eq:eta-main} under the standard bounded-density routing assumption, namely that each routed block is associated with a bounded number of data locations and has a minimum closest-approach distance $a$ from every other simultaneously active block. Under this assumption, define the worst-case tickwise block exposure
\begin{equation}
\eta_{\max}(\phi):=\sup_t\max_{e\,\in\, E_t}\eta_e^{(t)}(\phi),
\label{eq:eta-max}
\end{equation}
with $\eta_e^{(t)}(\phi)$ the same-tick block exposure of Eq.~\eqref{eq:eta-main}. For a regularized algebraic kernel, Lemma~\ref{lem:summability} gives
\begin{equation}
\eta_{\max}(\phi)\le 24\,\zeta\!\left(\alpha-1\right)\left(\frac{r_0}{a}\right)^{\alpha},
\end{equation}
up to a bounded-density constant.

\begin{theorem}[AKP pairwise long-range criterion, restated from Ref.~\onlinecite{akp_2006}]
There exists $\eta_0>0$ such that fault-tolerant simulation is possible whenever the microscopic Hamiltonian can be written as a sum of pair terms obeying
\begin{equation}
\sup_t\max_i\sum_{j\,\neq\, i}\lVert H_{ij}(t)\rVert\,\tau\,<\,\eta_0,
\end{equation}
with $\lVert\cdot\rVert$ the operator norm.
\end{theorem}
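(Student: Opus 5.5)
Since this is a restatement of the Aharonov--Kitaev--Preskill result~\cite{akp_2006}, the plan is to reconstruct the skeleton of their argument rather than derive a new proof, and to specialize it to the tick structure of Assumption~1. The approach is the standard Hamiltonian-noise threshold argument. We write the joint system--bath evolution over one tick in the interaction picture with respect to the ideal schedule $\hat H_{\mathrm S}(t)$, treat the pair terms $H_{ij}(t)$ as the perturbation, and expand the tick propagator as a Dyson series. Each term in the expansion is then indexed by a set of \emph{fault locations}, namely the pairs $\{i,j\}$ at which an interaction vertex occurs. The goal is to show that the total norm weight of all terms containing a prescribed set of $r$ faulty locations is bounded by $\eta^{r}$ for some $\eta$ proportional to $\sup_t\max_i\sum_{j\neq i}\lVert H_{ij}(t)\rVert\tau$. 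This is a \emph{local} (exponentially decaying) fault model of the type required by the concatenated threshold theorem.

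First, for one tick I will bound the norm of the sum of Dyson terms whose vertex set covers a given set $\mathcal F$ of qubits. Each vertex carries a factor $\lVert H_{ij}\rVert\tau$. A connected cluster of $r$ vertices rooted at a qubit $i$ can be counted by a tree (cluster) expansion, and summing over the tree's neighbors at each step costs a factor of at most $\max_i\sum_j\lVert H_{ij}\rVert\tau$. Together with the $1/r!$ from time ordering, this gives a convergent bound of the form $(c\,\eta_{\mathrm{akp}})^{|\mathcal F|}$ once $\eta_{\mathrm{akp}}<\eta_0$. Second, I will compose ticks. Because each tick's expansion factorizes across disjoint clusters, the full circuit becomes a sum over fault paths whose amplitude norm obeys the same local bound. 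Third, I will feed this local-fault estimate into the level-reduction argument for concatenated codes. A level-$1$ rectangle fails only if at least two faults lie inside it, so its failure amplitude is bounded by $A\,\eta^2$ with $A$ a combinatorial constant. Recursion then yields doubly exponential suppression when $A\eta<1$, and this fixes $\eta_0$.

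The main obstacle is the first step. With long-range pair terms, a single qubit couples to unboundedly many partners, so the naive count of fault paths grows with system size. This is exactly why the hypothesis is a row-sum $\sum_{j\neq i}\lVert H_{ij}\rVert\tau$ rather than a per-pair bound. Two further issues must be handled: the nonlocal correlations must be shown not to spoil the independence-like factorization that the level reduction needs, and the bath must be dealt with. For the first, I would follow AKP's device of absorbing the bath into a norm bound that is uniform over bath states. For the second, I would try a Kirkwood-type tree bound to control the connected-cluster sum. In the present paper the hypothesis is then checked via the bound on $\eta_{\max}(\phi)$ from Lemma~\ref{lem:summability}, with $\lVert H_{ij}\rVert = J_0\kappa(d_{ij})$, which requires $\alpha>2$ and sufficiently small $J_0\tau\,\zeta(\alpha-1)(r_0/a)^\alpha$.
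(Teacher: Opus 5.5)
The paper gives no proof of this theorem. It imports the statement from Ref.~\onlinecite{akp_2006} and uses it only through the block-level check $J_0\tau\,\eta_{\max}(\phi)<\eta_0^{\mathrm{blk}}$, which your last sentence essentially reproduces. Your outline has the right architecture: expand the Hamiltonian, prove a local-noise bound on sums of fault paths, then do level reduction. Its central estimate, however, is mis-scaled.

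Write $\eta:=\sup_t\max_i\sum_{j\neq i}\lVert H_{ij}(t)\rVert\tau$. A single interaction vertex on $\{i,j\}$ makes \emph{both} locations faulty with amplitude $\lVert H_{ij}\rVert\tau$, and this can equal $\eta$. For example, take two qubits with $H_{12}=hX_1X_2$. The sum of all Dyson terms touching both qubits is $e^{-ih\tau X_1X_2}-\I$, whose norm is $2\lvert\sin(h\tau/2)\rvert\approx\eta$, not $O(\eta^2)$. So no bound of the form $(c\eta)^{|\mathcal F|}$ over sets $\mathcal F$ of faulty qubit locations can hold. Your own tree count shows this once tracked: a cluster with $r$ vertices covers up to $r+1$ qubits but carries only $r$ factors of $\eta$.

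If you instead count faults per interaction vertex, where $\eta^r$ is plausible, the level-reduction step breaks. One vertex with both endpoints in the same rectangle is already a weight-two fault, so rectangle failure is $O(A\eta)$, not $A\eta^2$. What the row-sum hypothesis actually delivers is per-location local noise of strength $\varepsilon=O(\sqrt{\eta})$, each pair amplitude being shared by its two endpoints. The recursion $\varepsilon^{(1)}\le A\varepsilon^2=O(A\eta)$ then still gives a threshold $\eta_0\propto 1/A$. The theorem survives, but the step you single out as the main one is false as stated.

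The bound to aim for is $(C\eta)^{|\mathcal F|/2}$, and proving it needs one device your sketch leaves implicit: the unconstrained part of the evolution must be resummed exactly into a norm-one propagator. Bounding it term by term costs $\exp(\tau\sum_{\mathrm{all\ pairs}}\lVert H_{ij}\rVert)$, which is exponential in the number of qubits. Your disjoint-cluster factorization achieves this resummation only when the pair terms act on the system alone. Once they also act on a bath, as in the system--bath setting you start from, clusters with disjoint qubit supports need not commute, and the propagator does not factor.

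The clean route has three steps:
\begin{enumerate}
\item Pass to the interaction picture with respect to the ideal evolution, the bath Hamiltonian, and all noise terms not touching $\mathcal F$. This is a unitary on system and bath jointly, so operator norms handle the bath with no separate uniformity argument.
\item Dyson-expand only the $\mathcal F$-touching terms, whose total norm is at most $|\mathcal F|\eta/\tau$.
\item Require every location of $\mathcal F$ to be covered, and split each cover into a matching inside $\mathcal F$ plus edges leaving $\mathcal F$.
\end{enumerate}
Edges leaving $\mathcal F$ are summed with the row sum, and each intra-$\mathcal F$ edge uses $\lVert H_{ij}\rVert\tau\le\eta$ to pay for two locations at once. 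This gives $(C\eta)^{|\mathcal F|/2}$, and it is precisely where the row sum, rather than a per-pair bound, enters.
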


Each pair term in Eq.~\eqref{eq:tick-ham-main} has operator norm $J_0\kappa(d_\phi(e,e'))$, so the aggregate coupling per gate block is bounded by $J_0\eta_e^{(t)}(\phi)$. A sufficient AKP-type condition at the gate-block level is therefore
\begin{equation}
J_0\tau\,\eta_{\max}(\phi)\,<\,\eta_0^{\mathrm{blk}},
\end{equation}
with $\eta_0^{\mathrm{blk}}$ the AKP threshold constant renormalized to constant-size gate blocks. The condition applies to $\Hxt$ before twirling. Lemma~\ref{lem:summability} assumes an idealized layout with fixed positive minimum separation $a>0$; the finite biplanar layouts of the BB72 and BB144 numerics are evaluated directly by the geometry engine.

\section{Logical-aware single-layer design program}
\label{sec:logical-aware}

The positive-kernel results of the previous section identify weighted exposure as the finite-coupling quantity that distinguishes embeddings. To turn that into a design problem, we restrict to logical operators whose support lies entirely on the $q(L)$ register---the \emph{pure-$q(L)$} operators---which on BB72 can be enumerated exhaustively.

We allow permutations $\sigma_L,\sigma_Z\in S_M$ of the $q(L)$ and $q(Z)$ row orders, where $S_M$ is the symmetric group on $M$ indices introduced in Sec.~\ref{sec:model}. A transposition $\tau_{ij}\in S_M$ swaps indices $i$ and $j$. For BB codes written in the standard CSS matrix form
\begin{equation}
H_X=\left[A\;B\right],\qquad H_Z=\left[B^\top\;A^\top\right],
\end{equation}
consider $X$-type operators whose $q(L)\oplus q(R)$ support is represented by $[u\;v]^\top\in\F^{2M}$, where $u\in\F^M$ acts on $q(L)$ and $v\in\F^M$ acts on $q(R)$. An operator with $v=0$ is called \emph{pure-$q(L)$}.

\begin{proposition}[Exact pure-$q(L)$ quotient]
\label{prop:pureL-main}
A pure-$q(L)$ operator is a valid $X$-sector logical operator if and only if $u\in\kerop B^\top$, and its trivial pure-$q(L)$ representatives form the subgroup
\begin{equation}
\TL=\{A^\top\lambda\mid B^\top\lambda=0\}.
\end{equation}
Hence pure-$q(L)$ $X$ logical classes are parameterized by the quotient
\begin{equation}
\kerop(B^\top)/\TL.
\label{eq:pureL-quotient-main}
\end{equation}
\end{proposition}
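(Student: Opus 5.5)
The plan is to use the standard CSS characterization of logical operators, namely that an $X$-type operator $x$ is logical iff $H_Z x=0$, and trivial iff $x$ lies in the row space of $H_X$, and then specialize both conditions to the pure-$q(L)$ slice $v=0$.

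\emph{Commutation.} For $x=[u\;v]^\top$ we have $H_Z x = B^\top u + A^\top v$. Setting $v=0$ reduces this to $B^\top u$. So $x$ commutes with every $Z$-check exactly when $u\in\kerop B^\top$. This already gives the first claim, with ``valid logical'' read as ``commutes with all stabilizers''. Whether the class is nontrivial is then settled by the quotient.

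\emph{Triviality.} An $X$-type operator is a stabilizer iff $x=H_X^\top\lambda$ for some $\lambda\in\F^M$. Since $H_X=[A\;B]$, we get $H_X^\top\lambda=[A^\top\lambda\;\;B^\top\lambda]^\top$. Requiring the $q(R)$ block to vanish forces $B^\top\lambda=0$, and the $q(L)$ part is then $A^\top\lambda$. So the trivial pure-$q(L)$ operators are exactly $\TL=\{A^\top\lambda\mid B^\top\lambda=0\}$. Two facts need checking:
\begin{itemize}
\item $\TL$ is a subgroup, since it is the image of the linear subspace $\kerop B^\top$ under the linear map $A^\top$.
\item $\TL\subseteq\kerop B^\top$. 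Here one uses $AB^\top=BA^\top$ and the commutativity of the group algebra $\F[\mathbb{Z}_\ell\times\mathbb{Z}_m]$: $B^\top A^\top\lambda=A^\top B^\top\lambda=0$. Equivalently, this is the $H_XH_Z^\top=0$ relation restricted to the slice.
\end{itemize}

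\emph{Quotient.} Two pure-$q(L)$ logical representatives $u,u'$ lie in the same class \emph{among pure-$q(L)$ operators} iff $u+u'$ is a pure-$q(L)$ stabilizer, that is, iff $u+u'\in\TL$. Hence the pure-$q(L)$ logical classes are the cosets of $\TL$ in $\kerop B^\top$, which is the quotient \eqref{eq:pureL-quotient-main}. The nontrivial ones are the nonzero cosets.

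\emph{Main obstacle.} The subtle step is how to interpret ``classes''. Two pure-$q(L)$ operators could differ by a stabilizer $H_X^\top\lambda$ that has nonzero $q(R)$ part only if their sum also had nonzero $q(R)$ part. That is impossible, since both have $v=0$. So equivalence under the full stabilizer group coincides with equivalence modulo $\TL$ on this slice, and the quotient injects into the full logical group $\kerop H_Z/\operatorname{im}H_X^\top$. I will state this injectivity explicitly so that the parameterization is faithful. The remaining care point is the transpose identity $A^\top B^\top=B^\top A^\top$, which follows because circulant matrices over an abelian group commute.
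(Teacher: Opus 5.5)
Your proposal is correct and follows the paper's proof: both reduce $H_Z[u\;0]^\top=0$ to $B^\top u=0$, and both force $B^\top\lambda=0$ in $[u\;0]^\top=H_X^\top\lambda$, while your checks that $\TL\subseteq\kerop B^\top$ and that the quotient injects into the full logical group make explicit what the paper leaves implicit. One small slip: the containment uses $B^\top A^\top=A^\top B^\top$, i.e.\ $AB=BA$ (equivalently $H_ZH_X^\top=0$), not $AB^\top=BA^\top$, which is a different identity that already fails for the BB72 polynomials $A=x^3+y+y^2$, $B=y^3+x+x^2$.
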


\begin{proof}[Proof of Proposition~\ref{prop:pureL-main}]
The operator commutes with every $Z$ stabilizer if and only if
\begin{equation}
H_Z[u\;0]^\top=B^\top u=0,
\end{equation}
so $u\in\kerop B^\top$. It is trivial as a logical operator if and only if $[u\;0]^\top$ lies in the column space of $H_X^\top$, that is, if and only if there exists $\lambda\in\F^M$ such that
\begin{equation}
\begin{bmatrix}u\\0\end{bmatrix}=H_X^\top\lambda=\begin{bmatrix}A^\top\lambda\\B^\top\lambda\end{bmatrix}.
\end{equation}
The pure-$q(L)$ requirement forces $B^\top\lambda=0$, so the trivial pure-$q(L)$ representatives are $u=A^\top\lambda$ with $B^\top\lambda=0$. The quotient follows.
\end{proof}

A naive quotient by the column space of $A^\top$ would lose the constraint $B^\top\lambda=0$, so the full quotient $\kerop(B^\top)/\TL$ is required.

\subsection{BB72 family}

For BB72, the exhaustive audit used in this work gives
\begin{equation}
\dim \kerop B^\top=12,
\qquad
\dim \TL=6,
\end{equation}
so the pure-$q(L)$ class space has dimension $6$. Exhaustive enumeration yields \bbSeventyTwoPureLFamilyCount{} weight-$6$ pure-$q(L)$ supports, which form the reference family $\RX$ for the BB72 logical-aware objective. For BB108, the literature code parameters are \bb{108,8,10}, but the minimum weight among pure-$q(L)$ representatives is $12$; we keep those two quantities separate throughout.

\subsection{Logical-aware objectives and finite termination}

Given a reference family $\RX$ of minimum-weight pure-$q(L)$ $X$ logical supports, define the logical-aware objective for single-layer embeddings by
\begin{equation}
J_{\kappa}(\phi;\,\RX)=\max_{L\,\in\,\RX}\Ew_{\phi}^{X}(L).
\label{eq:Jkappa}
\end{equation}
For the crossing kernel, one may likewise define
\begin{equation}
J_{\times}(\phi;\,\RX)=\max_{L\,\in\,\RX}\nu_{\phi}^{X}(L),
\end{equation}
but in the positive-kernel regime relevant here, $J_{\kappa}$ carries more information. By Corollary~\ref{cor:certified-Jkappa}, lowering $J_{\kappa}$ lowers a strict upper bound on pair-event incidence on each support over the entire reference family.

Let $\Phi_4$ denote the finite set of admissible four-column row-permutation embeddings. The logical-aware design target is then
\begin{equation}
\phi_{\mathrm{LA}}^{\star}\in \argmin_{\phi\,\in\,\Phi_4} J_{\kappa}(\phi;\,\RX).
\label{eq:exact-la-opt}
\end{equation}
The production search does not solve Eq.~\eqref{eq:exact-la-opt} globally. It uses multirestart simulated annealing only to generate warm starts, and then applies deterministic best-improving two-swap descent on $J_\kappa$ until no improving transposition remains.

\begin{proposition}[Finite termination of deterministic swap descent]
\label{thm:swap-descent}
The deterministic logical-aware two-swap descent terminates after finitely many accepted moves and outputs a two-swap local minimum of $J_{\kappa}$ on $\Phi_4$.
\end{proposition}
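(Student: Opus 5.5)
The argument is a standard strict-descent-on-a-finite-set argument. First, $\Phi_4$ is finite: an admissible four-column embedding is determined by the pair $(\sigma_L,\sigma_Z)\in S_M\times S_M$, so $|\Phi_4|\le (M!)^2$. Next, $J_\kappa(\cdot\,;\RX)$ is a well-defined real-valued function on $\Phi_4$. For fixed $\phi$, each weight $w_{ij}=b^X_{ij}(\phi)$ is a deterministic function of the routed separations $d_\phi(e,e')$ computed by the geometry engine. The family $\RX$ is finite (the \bbSeventyTwoPureLFamilyCount{} supports), so $J_\kappa(\phi;\RX)$ is a maximum of finitely many finite sums of nonnegative reals.

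Second, I will fix the descent rule precisely. From the current $\phi=(\sigma_L,\sigma_Z)$, the two-swap neighborhood $N(\phi)$ consists of $(\tau_{ij}\sigma_L,\sigma_Z)$ and $(\sigma_L,\tau_{ij}\sigma_Z)$ over all transpositions $\tau_{ij}$. This set is finite, with at most $2\binom{M}{2}$ elements. A move is accepted only if it strictly decreases the objective, $J_\kappa(\phi')<J_\kappa(\phi)$. Among improving moves, the best one is taken, with ties broken by a fixed deterministic order so the iteration is a genuine map. The algorithm halts exactly when no $\phi'\in N(\phi)$ satisfies $J_\kappa(\phi')<J_\kappa(\phi)$.

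Third comes the termination step. Along the accepted iterates $\phi_0,\phi_1,\dots$, the values $J_\kappa(\phi_k)$ are strictly decreasing. Hence no embedding is visited twice, since a repeat would force $J_\kappa(\phi_k)<J_\kappa(\phi_k)$. The iterates are therefore distinct elements of the finite set $\Phi_4$, and the number of accepted moves is at most $|\Phi_4|-1$. Each iteration itself is finite because it only evaluates $J_\kappa$ on the finite set $N(\phi)$. When the algorithm stops, the halting condition is literally the definition of a two-swap local minimum: $J_\kappa(\phi)\le J_\kappa(\phi')$ for all $\phi'\in N(\phi)$. The simulated-annealing warm start plays no role, since any starting point in $\Phi_4$ works.

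The only delicate point is the \emph{strict} acceptance criterion. In floating point, near-ties could in principle cause cycling if moves with $J_\kappa(\phi')\le J_\kappa(\phi)$ were accepted. I will therefore state that acceptance requires strict improvement, or improvement by a fixed tolerance $\epsilon>0$. Under a tolerance, termination also follows from the stronger bound of $\lceil J_\kappa(\phi_0)/\epsilon\rceil$ on the number of accepted moves, because $J_\kappa\ge 0$. I will also note that the output is not claimed to solve Eq.~\eqref{eq:exact-la-opt} globally, which is consistent with the statement.
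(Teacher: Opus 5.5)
Your proof is correct and follows the paper's argument. The paper's steps are the same as yours: $\Phi_4$ is finite as a product of permutation groups, every accepted move strictly decreases $J_\kappa$ so no state is revisited, and the stopping rule is exactly the definition of a two-swap local minimum. The one caveat concerns your optional tolerance variant: accepting only improvements larger than $\epsilon$ yields only an $\epsilon$-approximate local minimum, so the strict-improvement rule is the version that matches the statement.
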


\begin{proof}
The state space $\Phi_4$ is finite because it is a finite product of permutation groups. Every accepted move strictly decreases the real-valued objective $J_{\kappa}$, so no state can be revisited. Hence, the descent terminates after finitely many accepted moves. The stopping rule is that no transposition of either $\sigma_L$ or $\sigma_Z$ lowers $J_{\kappa}$ further; therefore, the returned embedding is a two-swap local minimum.
\end{proof}

\begin{algorithm}[!htbp]
\caption{Logical-aware embedding search}\label{alg:la-search}
\KwIn{BB matrices $(A,B)$; reference family $\RX\subseteq 2^{[M]}$; kernel $\kappa$; restarts $n_r$}
\KwOut{Two-swap local minimizer $(\sigma_L^\star,\sigma_Z^\star)\in S_M\times S_M$; value $J_\kappa(\sigma_L^\star,\sigma_Z^\star)$}
\BlankLine
$J_{\kappa}(\sigma_L,\sigma_Z)\gets\max_{L\in\RX}\Ew^{X}_{\phi(\sigma_L,\sigma_Z)}(L)$\;
$\mathcal V_2(\sigma_L,\sigma_Z)\gets\{(\sigma_L\!\circ\!\tau_{ij},\sigma_Z)\}_{i<j}\cup\{(\sigma_L,\sigma_Z\!\circ\!\tau_{ij})\}_{i<j}$\;
\BlankLine
\For{$r=1,\ldots,n_r$}{
$(\sigma_L^{(r)},\sigma_Z^{(r)})\gets$ uniform random element of $S_M\times S_M$;
Run simulated annealing on $\mathcal V_2$ from $(\sigma_L^{(r)},\sigma_Z^{(r)})$;
}
$(\sigma_L^\star,\sigma_Z^\star)\gets\argmin_{1\le r\le n_r} J_\kappa(\sigma_L^{(r)},\sigma_Z^{(r)})$\;
\BlankLine
\While{\textnormal{\textbf{true}}}{
$s^\star\gets\argmin_{s\in\mathcal V_2(\sigma_L^\star,\sigma_Z^\star)}J_\kappa(s)$\;
\lIf{$J_\kappa(s^\star)\ge J_\kappa(\sigma_L^\star,\sigma_Z^\star)$}{\textbf{break}}
$(\sigma_L^\star,\sigma_Z^\star)\gets s^\star$\;
}
\end{algorithm}

Algorithm~\ref{alg:la-search} is the search procedure used in the numerical study. Simulated annealing supplies the warm start, and Proposition~\ref{thm:swap-descent} applies to the deterministic descent stage. In the implementation, every objective evaluation uses routed-segment distances from the geometry engine without approximation, and every crossing-local diagnostic computes the matching number with a general-graph maximum-matching routine rather than a bipartite shortcut. On the BB72 weight-$6$ pure-$q(L)$ family, the deterministic audit yields
\begin{equation}
J_\kappa(\phi;\,\RX)=\max_{L\in\RX}\Ew_\phi^X(L)=\begin{cases}
\bbSeventyTwoLAMaxExposureMono{} & \phi=\phi_{\mathrm{mono}},\\
\bbSeventyTwoLAMaxExposureLA{} & \phi=\phi_{\mathrm{LA}},\\
\bbSeventyTwoLAMaxExposureBi{} & \phi=\phi_{\mathrm{bi}}.
\end{cases}
\end{equation}
The logical-aware embedding therefore reduces the monomial value by \bbSeventyTwoLAImprovementPct\%, while the biplanar reference reduces it by \bbSeventyTwoBiImprovementPct\%.

By Corollary~\ref{cor:certified-Jkappa}, a lower $J_{\kappa}$ gives a tighter upper bound on pair-event incidence on each support over the chosen logical family. The numerical tests below track the same ordering in logical error rate.

Appendix~\ref{app:decoder} develops a refinement $C_D(\phi)$ that sharpens $J_\kappa$ at first order by incorporating the decoder's sensitivity to individual pair locations. A pilot estimate at one operating point confirms the same embedding ordering with a sharper monomial-to-biplanar ratio ($3.2\times$ for $C_D$ versus $1.8\times$ for $J_\kappa$). The optimization target throughout the rest of this paper remains $J_\kappa$, because $J_\kappa$ is intrinsic to the retained model and does not depend on a decoder.

\section{Computational validation}
\label{sec:results}

This section uses two geometry metrics. The reference-support exposure $\Ew_\phi^X(L_{\mathrm{ref}})$ on the fixed BB72 support $S_{\mathrm{ref}}$ (Sec.~\ref{sec:ref-support}) drives the matching-number test and the BB72 distance-decay and scatter tests. The family-wise objective $J_\kappa(\phi;\RX)=\max_{L\in\RX}\Ew_\phi^X(L)$ (Sec.~\ref{sec:logical-aware}) drives the many-layout and logical-aware tests. The first compares embeddings on a single support, while the second evaluates worst-case behavior over the optimized logical family.

The computational program has a single primary validation target, BB72, and a smaller secondary check, BB144. BB90 and BB108 appear only as supporting slices in Appendix~\ref{app:numerics}.

\subsection{Simulation protocol}
\label{sec:syndrome}

The circuit-level noise model applies a two-qubit depolarizing channel (rate $p_{\mathrm{cnot}}$) after every active \textsc{cnot}, a one-qubit depolarizing channel (rate $p_{\mathrm{idle}}$) on every idle qubit, single-Pauli preparation errors (rate $p_{\mathrm{prep}}$), and single-Pauli measurement errors (rate $p_{\mathrm{meas}}$). Geometry-induced pair channels are controlled separately by $(J_0,\tau,\kappa)$ and act on the sector-relevant data pairs of Sec.~\ref{sec:model}.

Each production sample runs the full depth-$8$ BB cycle of Ref.~\onlinecite{bravyi_memory_2024} with this noise placement. Single and pair data faults are injected at the rates given by the linearized retained coefficients of Corollary~\ref{cor:linearized}: for each simultaneously active pair of sector-relevant gate blocks at separation $d$, a correlated Pauli fault is applied on both data qubits with probability $q_\kappa^{\mathrm{lead}}(d):=\sin^2\tau J_0\kappa(d)$, which equals the single-pair retained coefficient up to $O(\Theta_t^4)$ corrections from multi-pair coincidences (Theorem~\ref{thm:schedule-prop}). Circuit sampling uses Stim~\cite{gidney_stim_2021}; repeated raw runs at identical operating points are merged before plotting. In all sweeps the local rates are tied,
\begin{equation}
p_{\mathrm{cnot}}=p_{\mathrm{idle}}=p_{\mathrm{prep}}=p_{\mathrm{meas}}=:p,
\end{equation}
so $p$ denotes a common circuit-level error rate. Embeddings are compared at fixed $p$. The main BB72 and BB144 line sweeps use $5{,}000$ shots per raw run, and merged operating points accumulate higher counts (up to $15{,}000$ where pilot and window runs overlap). All reported confidence intervals (CIs) are 95\% binomial. Appendix~\ref{app:numerics} includes an untied-rate robustness check confirming that the hierarchy is not specific to this insertion rule.

The baseline decoder is BP+OSD as in Ref.~\onlinecite{roffe_bp_osd_2020}. The source archive also contains a correlation-aware BP+OSD prototype, but it targets the retained single-and-pair model rather than the full circuit-level channel of the multi-round simulation, so it is not a matched comparison at circuit level. Appendix~\ref{app:decoder} records the retained-model decoder constructions.

After merging repeated runs at identical control parameters, the dataset contains $\nSemanticTotal{}$ operating points: $\nBBSeventyTwoPoints{}$ on BB72, $\nBBOneFortyFourPoints{}$ on BB144, and $\nBBNinetyPoints{}$ and $\nBBOneOhEightPoints{}$ on BB90 and BB108. The main text reports the decoded $X$ sector analyzed in Sec.~\ref{sec:theory}. The number of syndrome cycles is fixed to the benchmark distance of the simulated code, namely $6$ for BB72 and $12$ for BB144. When an operating point has zero observed logical failures, the line-sweep plots place an open marker at the 95\% upper bound rather than at zero on the logarithmic axis.

\subsection{Matching-number claim on BB72}
\label{subsec:crossing}

Under the crossing kernel, Theorem~\ref{thm:deff} predicts that the matching-number mechanism on the logical support lowers the sector effective distance bound. The analytical evaluation on $S_{\mathrm{ref}}$ in Sec.~\ref{sec:theory} gives $w_{\mathrm{eff}}\le \bbSeventyTwoCrossDeffMono{}$ for the monomial embedding and $w_{\mathrm{eff}}\le \bbSeventyTwoCrossDeffBi{}$ for the biplanar embedding (cf.\ Fig.~\ref{fig:bridge}).

Figure~\ref{fig:bb72-crossing} shows the corresponding crossing-kernel logical-error-rate sweep on BB72. At $J_0\tau=0.04$ and $p=10^{-3}$, the monomial embedding has logical error rate \bbSeventyTwoCrossMonoLER{} with 95\% CI [\bbSeventyTwoCrossMonoLERLo{}, \bbSeventyTwoCrossMonoLERHi{}], while the biplanar embedding has zero failures in the present sample, corresponding to a 95\% upper confidence bound of \bbSeventyTwoCrossBiLERHi{}. Under this diagnostic kernel, the matching mechanism of Theorem~\ref{thm:deff} produces a logical penalty consistent with the halved effective-distance bound.

\begin{figure}[!htbp]
\centering
\includegraphics[width=\columnwidth]{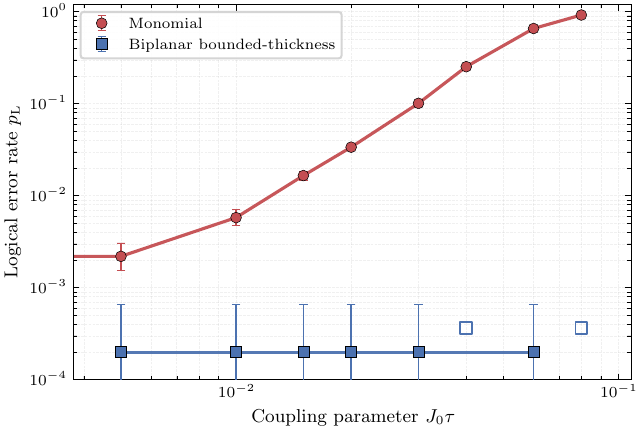}
\caption{BB72 crossing-kernel diagnostic sweep in the $X$ sector at $p=10^{-3}$ and $6$ cycles. Error bars are 95\% CIs; open markers denote zero-failure operating points, plotted at the 95\% upper bound.}
\label{fig:bb72-crossing}
\end{figure}

\subsection{Exposure-ordering claim on BB72}
\label{subsec:exposure-ordering}

Under any strictly positive kernel, Propositions~\ref{prop:saturation-main} and~\ref{prop:union} identify weighted exposure as the leading discriminator of \emph{pair-event burden} between embeddings. The three tests below ask whether the same quantity also tracks the logical error rate, varying in turn the coupling and decay exponent on the two reference embeddings, the kernel family across all sampled operating points, and the embedding itself across a random ensemble of single-layer layouts.

\paragraph*{Distance-decay sweeps.}
The power-law kernel is the main distance-decay profile here, because it preserves strictly positive geometry dependence and remains analytically summable for $\alpha>2$. Figure~\ref{fig:bb72-power} shows the two BB72 sweeps: coupling strength at fixed $\alpha=3$, and decay exponent at fixed $J_0\tau=0.04$. The monomial embedding has the larger logical error rate across the sweep. At the reference operating point $(J_0\tau,\alpha,p)=(0.04,3,10^{-3})$, the monomial embedding has logical error rate \bbSeventyTwoPowerMonoLER{} with 95\% CI [\bbSeventyTwoPowerMonoLERLo{}, \bbSeventyTwoPowerMonoLERHi{}], whereas the biplanar embedding has \bbSeventyTwoPowerBiLER{} with 95\% CI [\bbSeventyTwoPowerBiLERLo{}, \bbSeventyTwoPowerBiLERHi{}], a factor of \bbSeventyTwoPowerLERRatio{} at fixed code, schedule, decoder, and local noise. The analytical geometry metrics produce the same ordering. On $S_{\mathrm{ref}}$, the weighted exposure under the regularized algebraic kernel is \bbSeventyTwoMonoExposure{} for the monomial embedding and \bbSeventyTwoBiExposure{} for the biplanar embedding, and the maximum aggregated retained pair probability seen by a data location is also larger in the monomial embedding (\bbSeventyTwoPowerAggProbMono{}) than in the biplanar embedding (\bbSeventyTwoPowerAggProbBi{}).

\begin{figure*}[!htbp]
\centering
\includegraphics[width=\textwidth]{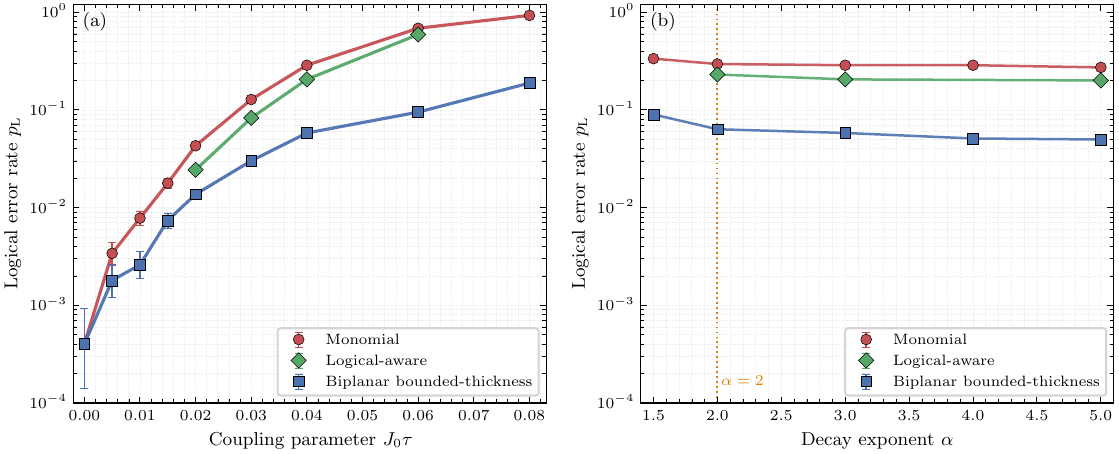}
\caption{BB72 $X$-sector logical error rate under the regularized power-law kernel. (a) Sweep in $J_0\tau$ at fixed $\alpha=3$ and $p=10^{-3}$. (b) Sweep in $\alpha$ at fixed $J_0\tau=0.04$ and $p=10^{-3}$. The monomial and biplanar embeddings are shown on the full sampled grids; the logical-aware embedding is shown on the available subsets $J_0\tau\in\{0.02,0.03,0.04,0.06\}$ (a) and $\alpha\in\{2,3,5\}$ (b).}
\label{fig:bb72-power}
\end{figure*}

\paragraph*{Exposure--LER scatter across kernels.}
A \emph{baseline operating point} is a tuple $(p,\kappa,J_0\tau,\phi)$ at which the monomial and biplanar embeddings are both sampled; points involving the logical-aware embedding are excluded from the baseline set. Figure~\ref{fig:scatter} plots the BB72 reference-support exposure $\Ew_\phi^X(L_{\mathrm{ref}})$ against the observed logical error rate for \baselineCorrelationN{} such points spanning the crossing, power-law, and exponential kernels. The baseline data give Spearman rank correlation $\rho_\mathrm{S}=\baselineSpearmanRho$ ($p$-value $=\baselineSpearmanP$). The scatter shows several roughly parallel branches rather than a single curve, because $p_L$ also depends on the local depolarizing floor $p$ and the kernel family $\kappa$. Within each kernel family the correlation is tighter, namely $\rho_\mathrm{S}=1.000$ (crossing, $n=15$), $0.923$ (exponential, $n=12$), and $0.893$ (power-law, $n=90$), all with $p$-value $<10^{-4}$. Restricting to fixed $p=10^{-3}$ gives $\rho_\mathrm{S}=0.965$ ($n=59$). At matched operating points where two embeddings share the same $(p,\kappa,J_0\tau)$ and differ only in layout, 64 of 65 non-tied pairs are concordant. Appendix~\ref{app:numerics} (Fig.~\ref{fig:bb72-by-kernel}) splits the scatter by kernel class. The logical-aware points, shown as stars and excluded from the quoted $\rho$, lie on the same trend.

\begin{figure}[!htbp]
\centering
\includegraphics[width=\columnwidth]{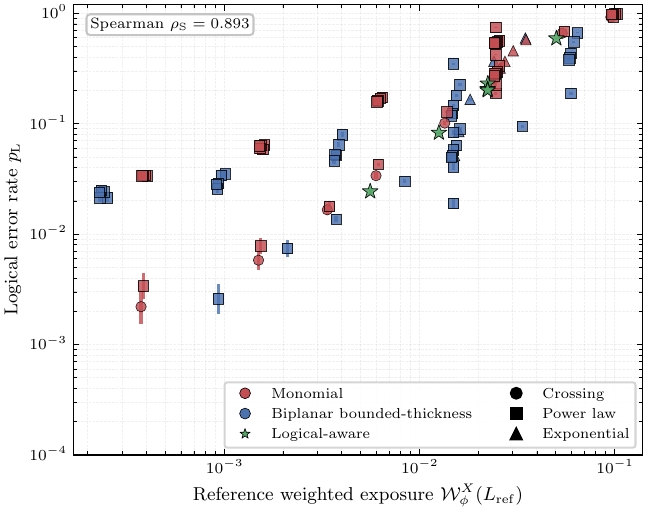}
\caption{BB72 reference-support exposure versus observed logical error rate. Colors indicate embedding family, and marker shape indicates kernel family. The quoted Spearman coefficient uses the \baselineCorrelationN{} baseline points with at least ten failures and excludes the logical-aware points.}
\label{fig:scatter}
\end{figure}

\paragraph*{Exposure ordering across many layouts.}
To extend the test beyond the two reference embeddings, we simulated $22$ distinct single-layer BB72 layouts at the fixed operating point $(p,\alpha,J_0\tau)=(10^{-3},3,0.04)$: the monomial layout, the logical-aware layout, and $20$ random row-permutation layouts. Each layout used $10{,}000$ shots under identical code, schedule, decoder, and local noise. Figure~\ref{fig:many-layout} plots the maximum family exposure $J_\kappa$ against the observed logical error rate for all $22$ layouts. The logical-aware layout ($J_\kappa=0.0224$, $p_L=0.212$) and the monomial layout ($J_\kappa=0.0303$, $p_L=0.284$) have the two lowest exposures and the two lowest logical error rates. All $20$ random layouts have higher exposures ($J_\kappa\in[0.043,0.054]$) and higher logical error rates ($p_L\in[0.44,0.62]$). The Spearman correlation across all $22$ layouts is $\rho_\mathrm{S}=0.552$ (Spearman $p$-value $=7.8\times10^{-3}$); the moderate value reflects scatter within the random ensemble, which occupies a narrow exposure band. The separation between optimized and random layouts is complete, so weighted exposure separates good layouts from bad ones even when, within a narrow exposure band, it does not fully order them.

\begin{figure}[!htbp]
\centering
\includegraphics[width=\columnwidth]{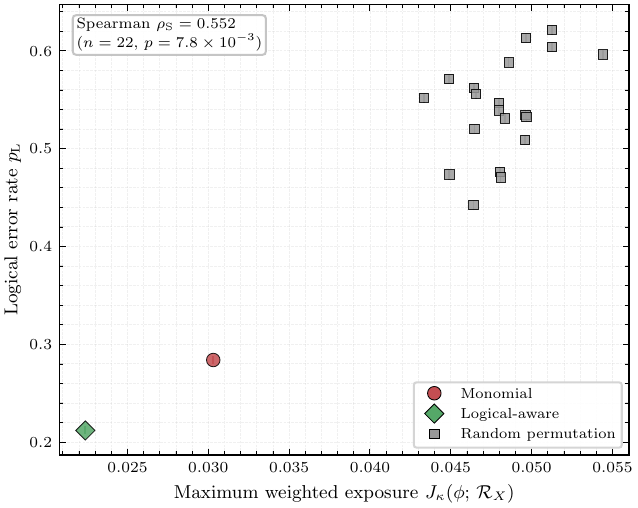}
\caption{Many-layout BB72 validation at fixed $(p,\alpha,J_0\tau)=(10^{-3},3,0.04)$. Each point is one of $22$ single-layer four-column layouts ($10{,}000$ shots, BP+OSD). Colored markers: monomial and logical-aware layouts. Gray squares: $20$ random row-permutation layouts.}
\label{fig:many-layout}
\end{figure}

\subsection{Logical-aware claim on BB72}
\label{subsec:la}

\begin{figure*}[!tbp]
\centering
\includegraphics[width=\textwidth]{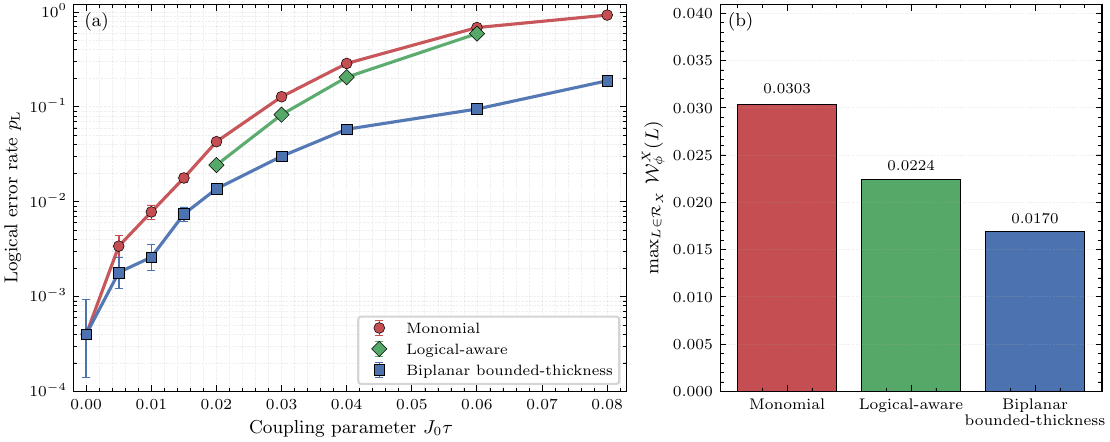}
\caption{BB72 logical-aware design validation in the $X$ sector for the power-law kernel at $\alpha=3$ and $p=10^{-3}$. (a) Stochastic coupling window $J_0\tau\in\{0.02,0.03,0.04,0.06\}$. (b) Deterministic max-exposure audit over the weight-$6$ pure-$q(L)$ family.}
\label{fig:la-window}
\end{figure*}

Corollary~\ref{cor:certified-Jkappa} predicts that minimizing the logical-aware objective $J_\kappa$ over the pure-$q(L)$ quotient bounds the worst-support pair-event incidence. The numerical question is whether the same minimization also lowers the observed logical error rate. On the BB72 weight-$6$ pure-$q(L)$ family, the monomial embedding has maximum exposure \bbSeventyTwoLAMaxExposureMono{}, the logical-aware embedding has \bbSeventyTwoLAMaxExposureLA{}, and the biplanar embedding has \bbSeventyTwoLAMaxExposureBi{}. The logical-aware embedding therefore reduces the worst-case exposure by \bbSeventyTwoLAImprovementPct\% relative to the monomial baseline. Across the full 36-support family, the mean exposure drops from $0.0212$ (monomial) to $0.0187$ (logical-aware) to $0.0120$ (biplanar), and the min-to-max range narrows from $[0.0096, 0.0303]$ to $[0.0113,0.0224]$ under the logical-aware layout. The raw crossing count can increase ($542$ vs.\ $522$), confirming that exposure, not crossings, is the optimization target.

The BB72 stochastic window in Fig.~\ref{fig:la-window} follows the equivalent analytical improvement. At $J_0\tau=0.04$, the logical-aware layout yields \bbSeventyTwoPowerLAFourLER{} with 95\% CI [\bbSeventyTwoPowerLAFourLERLo{}, \bbSeventyTwoPowerLAFourLERHi{}], compared with \bbSeventyTwoPowerMonoLER{} for the monomial layout and \bbSeventyTwoPowerBiLER{} for the biplanar layout. Across the four tested couplings $J_0\tau\in\{0.02,0.03,0.04,0.06\}$, the logical-aware embedding lies strictly between the monomial and biplanar curves, with total logical-error-rate reductions relative to monomial of approximately $43\%$, $35\%$, $28\%$, and $14\%$, respectively. The gain is largest in the moderate-correlation window and weakens as the monomial curve approaches saturation.

Panel~(b) of Fig.~\ref{fig:bb72-power} shows the same pattern across the tested exponent range $\alpha\in\{2,3,5\}$ at fixed $(J_0\tau,p)=(0.04,10^{-3})$: the logical-aware embedding remains strictly between the monomial and biplanar embeddings at every tested $\alpha$. The finite-coupling gain is smaller than the biplanar gain, in line with the deterministic exposure ordering.

\subsection{Scaling check on BB144}
\label{subsec:bb144}

Figure~\ref{fig:bb144} shows the analogous BB144 $J_0\tau$ and physical-error-rate sweeps in the $X$ sector. The same embedding hierarchy holds.

The geometry penalty matches the BB72 ordering and is larger at the reference operating point. At $(J_0\tau,\alpha,p)=(0.04,3,10^{-3})$, the monomial embedding has logical error rate \bbOneFortyFourPowerMonoLER{} with 95\% CI [\bbOneFortyFourPowerMonoLERLo{}, \bbOneFortyFourPowerMonoLERHi{}], whereas the biplanar embedding has \bbOneFortyFourPowerBiLER{} with 95\% CI [\bbOneFortyFourPowerBiLERLo{}, \bbOneFortyFourPowerBiLERHi{}], a factor of \bbOneFortyFourPowerLERRatio{}. The BB144 dataset is smaller than the BB72 dataset and is reported only as a check on the embedding hierarchy at the next benchmark size.

\begin{figure*}[!htbp]
\centering
\includegraphics[width=\textwidth]{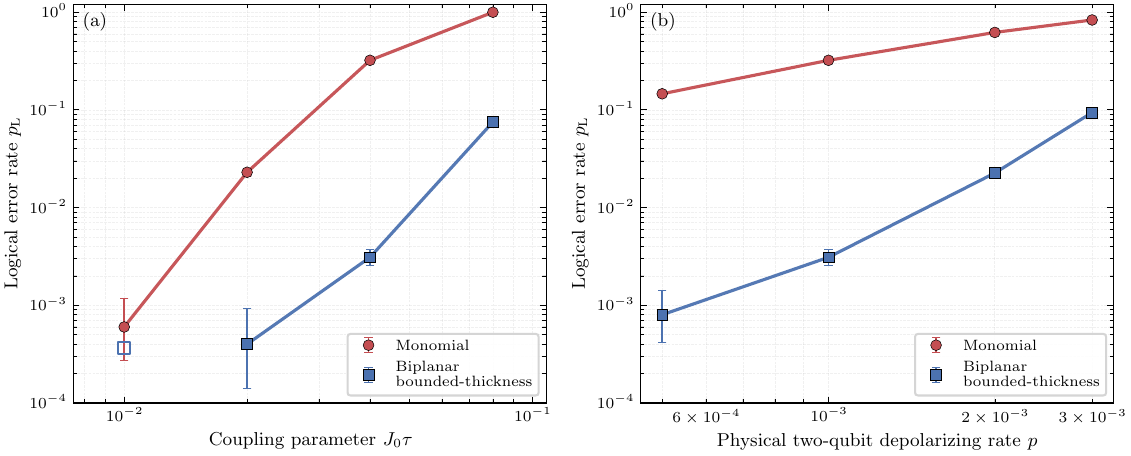}
\caption{BB144 $X$-sector scaling check. (a) Sweep in $J_0\tau$ at fixed $\alpha=3$ and $p=10^{-3}$. (b) Sweep in $p$ at fixed $J_0\tau=0.04$ and $\alpha=3$. Open markers denote 95\% upper bounds for zero-failure points.}
\label{fig:bb144}
\end{figure*}

Appendix~\ref{app:numerics} collects additional BB72 kernel sweeps, the BB72 phase-diagram and exposure-by-kernel diagnostics, the BB90/BB108 supporting slices, and a direct biplanar scaling comparison.

\section{Discussion and outlook}
\label{sec:discussion}

With the code and extraction schedule fixed, routed geometry changes the leading correlated-fault structure, and the logical performance with it. The model developed here keeps only the inter-block interaction component of a same-tick perturbation. Within that model, the monomial embedding incurs a higher worst-case pair-event burden and a higher logical error rate than the biplanar embedding across the explored BB72 and BB144 parameter window. A logical-aware two-swap local search over single-layer embeddings reduces both quantities relative to the monomial baseline.

The results depend on two modeling choices. The first is the phenomenological kernel $\kappa(d)$, which is constrained but not identified by public superconducting-hardware data (Appendix~\ref{app:hardware}). The second is the independent-pair treatment of the retained coefficients. Given those two choices, the remaining derivation---twirl, retained channel, support graph, and exposure metric---is closed in form. The retained model also admits a maximum-a-posteriori (MAP) decoder reduction (Appendix~\ref{app:decoder}), so geometry-aware priors can be propagated to the decoder.

These results align with the circuit-centric view of fault tolerance developed in recent work on residual-error metrics and spacetime codes~\cite{strikis_sec_2026,aitchison_beri_2025,pesah_spacetime_2025}, in which the routed extraction circuit helps determine the relevant noise structure. The findings also connect the implementation problem to AKP long-range noise~\cite{akp_2006}: for planar layouts with regularized algebraic kernels, the same $\alpha>2$ condition that governs two-dimensional summability also governs the aggregate pair couplings in the model. The biplanar hierarchy observed on BB72 and BB144 is compatible with BB memory proposals across multiple architecture variants---local, modular, multilayer, and neutral-atom~\cite{berthusen_local_2025,shaw_terhal_2025,strikis_berent_2023,mathews_multilayer_2025,liang_openboundaries_2025,wang_mueller_2026,poole_rydberg_2025,zhou_louvre_2025,berthusen_adaptive_2025,he_extractors_2025,yoder_tourdegross_2025}---and with the recent BB experiment on long-range-coupled superconducting hardware~\cite{wang_demo_2025}.

Corollary~\ref{cor:certified-Jkappa} bounds worst-support pair-event incidence, but translating that bound into a monotone logical-error ordering for arbitrary decoders would require additional structure, such as a stochastic dominance condition on the per-support pair-event distribution or a decoder-monotonicity guarantee.

\subsection{Limitations}

The microscopic model retains only the inter-block interaction term of the general two-block decomposition (Appendix~\ref{app:two-block}). The additive-local component $J_1(\hat P_e+\hat P_{e'})$, which contributes independent single-block faults rather than correlated pairs, is omitted. Treating both terms simultaneously with separate kernel profiles would generalize the model, but it is not pursued here. The model retains only two-block pair couplings and uses a Pauli-twirl approximation, thereby omitting multi-block correlations and coherent accumulation across rounds. The kernel family $\kappa(d)$ is phenomenological rather than derived from a device-specific electromagnetic model, so the absolute size of the geometry penalty depends on the assumed profile; public superconducting-hardware data constrain the effective crosstalk scale and decay envelope (Appendix~\ref{app:hardware}). The computational study centers on BB72 and BB144 under a single production decoder (BP+OSD); although Appendix~\ref{app:decoder} gives correlation-aware constructions for the retained model, no matched circuit-level decoder comparison is included. The logical-aware design program is implemented only for the four-column single-layer architecture. The AKP-type compatibility criterion is asymptotic.

\subsection{Future directions}

Appendix~\ref{app:decoder} shows how to build correlation-aware decoders for the retained model. Recent correlated and BB-specific decoding work~\cite{maan_correlated_2026,sahay_matching_2026} suggests that geometry-aware priors can be incorporated into production decoders. The same analytical framework also applies to modular, open-boundary, multilayer, and neutral-atom qLDPC extraction schemes with explicit routing geometry~\cite{strikis_berent_2023,liang_openboundaries_2025,mathews_multilayer_2025,poole_rydberg_2025}. On the design side, the logical-aware program can be extended to larger BB families and beyond pure-$q(L)$ objectives, and Theorem~\ref{thm:first-order-la} points to $C_D(\phi)$ as the sharper weak-correlation objective for a fixed decoder.

Several directions would strengthen the microscopic-to-retained bridge. Deriving the effective inter-block coupling matrix $J_{e,e'}(\phi)$ from a device-level Hamiltonian via Schrieffer--Wolff or black-box quantization methods would replace the phenomenological kernel $\kappa(d)$ with a controlled reduction, moving the microscopic front end from model-dependent to derived. The schedule-level retained reduction is already controlled by Theorem~\ref{thm:schedule-prop}, Corollary~\ref{cor:linearized}, and Eq.~\eqref{eq:cycle-compose}. The decoder-aware coefficient $C_D(\phi)$ from Theorem~\ref{thm:first-order-la} has been evaluated at a single operating point (Appendix~\ref{app:decoder}); extending the evaluation to the full parameter grid and to the retained MAP decoder of Appendix~\ref{app:decoder} would give a sharper design objective than $J_\kappa$ across the entire sweep.

Several additional directions remain open. Treating both interaction and additive-local terms simultaneously with independent kernel profiles would generalize the model. Multi-round coherent accumulation beyond the per-round twirl is a second extension. Experimental calibration of $\kappa(d)$ from same-platform crosstalk data would constrain the kernel quantitatively, with Appendix~\ref{app:hardware} giving a preliminary anchoring. Replacing the point-to-point closest-approach separation $d_\phi(e,e')$ with a path-integrated coupling $\int\!\!\int\kappa(|\gamma_e(s)-\gamma_{e'}(t)|)\,ds\,dt$ over the routed curves would parallel mutual-inductance and mutual-capacitance integrals in superconducting circuit design~\cite{kosen_crosstalk_2024,barrett_flux_2023}. Finally, extensions to non-CSS codes and to beyond-Pauli twirls would broaden the applicable code and noise families.

\begin{acknowledgments}
The author thanks Armands Strikis for suggesting the crossing-based noise model that initiated this work and for guidance and discussions throughout, and Tam\'as Noszk\'o for suggesting the logical-aware embedding optimization. Simulations have been performed using resources provided by the Laboratory for Scientific Computing (LSC) at the Cavendish Laboratory, University of Cambridge.
\end{acknowledgments}

\section*{Data and code availability}
The simulation pipeline, processed Monte Carlo data, and figure-generation scripts are archived at \url{https://doi.org/10.5281/zenodo.19337541} and developed at \url{https://github.com/angelodibella/works}.

\appendix

\section{Claim taxonomy and notation}
\label{app:tables}

Table~\ref{tab:claim-taxonomy} classifies the paper's claims by evidential status. Table~\ref{tab:notation} collects the main symbols.

\begin{table}[!htbp]
\caption{Claim taxonomy.}\label{tab:claim-taxonomy}
\footnotesize
\renewcommand{\arraystretch}{1.2}%
\begin{ruledtabular}
\begin{tabular}{ll}
\textbf{Status} & \textbf{Claim} \\
\colrule
Proved & General Pauli-twirl (Thm.~\ref{thm:general-twirl}) \\
& $w_{\mathrm{eff}}=|S|-\nu$ (Thm.~\ref{thm:deff}) \\
& Complete-graph $w_{\mathrm{eff}}=\lceil|S|/2\rceil$ (Cor.~\ref{cor:complete-weff}) \\
& Two-sided exposure bound (Prop.~\ref{prop:union}) \\
& Summability $\alpha>2$ (Lem.~\ref{lem:summability}) \\
\hline
Controlled & Weight-$\le 2$ truncation (Thm.~\ref{thm:schedule-prop}) \\
& AKP-type condition $J_0\tau\eta_{\max}<\eta_0^{\mathrm{blk}}$ \\
\hline
Model & Retained coefficients $b_{ij}^\sigma$ (Eq.~\eqref{eq:retained-channel}) \\
& Logical-aware objective $J_\kappa$ \\
& Linearized retained single/pair sampling model \\
\hline
Empirical & $\bbSeventyTwoPowerLERRatio{}\times$ mono/bi LER ratio, BB72, $\alpha\!=\!3$ (Sec.~\ref{subsec:exposure-ordering}) \\
& $\rho_\mathrm{S}=0.893$, 101 baseline points (Sec.~\ref{subsec:exposure-ordering}) \\
& $\rho_\mathrm{S}=0.965$ at fixed $p\!=\!10^{-3}$, 59 points (Sec.~\ref{subsec:exposure-ordering}) \\
& 64/65 matched pairs concordant (Sec.~\ref{subsec:exposure-ordering}) \\
& $26\%$ worst-case exposure reduction (Sec.~\ref{subsec:la}) \\
& $C_D$ ordering matches $J_\kappa$; pilot, one point (App.~\ref{app:decoder}) \\
\end{tabular}
\end{ruledtabular}
\end{table}

\begin{table}[!htbp]
\caption{Notation summary.}\label{tab:notation}
\footnotesize
\renewcommand{\arraystretch}{1.4}%
\begin{ruledtabular}
\begin{tabular}{ll}
$\bb{n,k,d}$ & Code with $n$ qubits, $k$ logical qubits, distance $d$ \\
$\phi$ & Routed embedding \\
$d_\phi(e,e')$ & Routed separation between gate blocks \\
$\kappa(d)$ & Proximity kernel (dimensionless, $\kappa(0)=1$) \\
$J_0,\,\tau$ & Coupling scale, gate-block duration \\
$\theta(d)$ & Dimensionless phase $\tau J_0\kappa(d)$ \\
$p(d)$ & Twirled pair-fault probability $\sin^2\theta(d)$ \\
$\eta_e^{(t)}(\phi)$ & Same-tick block exposure of gate block $e$ \\
$A_t$ & Set of simultaneously active block pairs in tick $t$ \\
$\Theta_t$ & Tickwise coupling norm, $\Theta_t^2=\sum_a\theta_a^2$ \\
$m_t$ & Collision multiplicity (per Pauli monomial) \\
$q_\kappa(d)$ & Retained sector coefficient \\
$\Pgroup_k$ & $k$-qubit Pauli group (modulo phases) \\
$\mathcal T$ & Pauli twirl \\
$\mathcal N_\phi^\sigma$ & Full Pauli-twirled data channel (sector $\sigma$) \\
$\mathcal N_{\phi,\le 2}^\sigma$ & Retained single-and-pair data channel \\
$C_\phi^\sigma$ & Weighted correlation graph (sector $\sigma$) \\
$\nu_\phi^\sigma(L)$ & Matching number on support of $L$ \\
$\Ew_\phi^\sigma(L)$ & Weighted exposure on support of $L$ \\
$\eta_{\max}(\phi)$ & Worst-case block exposure \\
$\eta_0^{\mathrm{blk}}$ & AKP gate-block threshold constant \\
$J_\kappa(\phi;\,\RX)$ & Logical-aware objective (max over $\RX$) \\
$C_D(\phi)$ & First-order decoder-aware coefficient \\
\end{tabular}
\end{ruledtabular}
\end{table}

\section{General two-block decomposition and the $J_1/J_2$ regime split}
\label{app:two-block}

This appendix records the general decomposition of any Hermitian two-block perturbation into local and inter-block pieces, identifies the two first-order regimes (additive-local and interaction), and fixes the projection formulas that assign an amplitude to each regime. The body of the paper retains only the interaction component (Assumption 2); the construction here records what that choice discards.

\subsection{Unique local--interaction split}

\begin{proposition}[General two-block decomposition]
\label{prop:two-block-decomp}
Let $\mathcal{H}_e$ and $\mathcal{H}_{e'}$ be the Hilbert spaces of two disjoint active gate blocks with dimensions $D_e$ and $D_{e'}$, and let $\hat K$ be any Hermitian operator on $\mathcal{H}_e\otimes\mathcal{H}_{e'}$. Then there exists a unique decomposition
\begin{equation}
\hat K = c\,\I + \hat A_e\otimes\I_{e'} + \I_e\otimes\hat B_{e'} + \hat C_{e,e'},
\label{eq:two-block-decomp}
\end{equation}
in which $c\in\mathbb{R}$, the local operators $\hat A_e$ and $\hat B_{e'}$ are traceless Hermitian on their respective blocks, and $\hat C_{e,e'}$ is Hermitian with vanishing partial traces,
\begin{equation}
\operatorname{tr}_{e'}\hat C_{e,e'}=0, \qquad \operatorname{tr}_e\hat C_{e,e'}=0.
\label{eq:C-partial-traces}
\end{equation}
\end{proposition}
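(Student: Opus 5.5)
The plan is to build the decomposition explicitly from normalized partial traces, then prove uniqueness by applying the same trace maps to the difference of two decompositions. Write $D:=D_eD_{e'}$.

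\emph{Existence.} Set
\begin{equation*}
c:=\frac{\operatorname{tr}\hat K}{D},\qquad
\hat A_e:=\frac{1}{D_{e'}}\operatorname{tr}_{e'}\hat K-c\,\I_e,\qquad
\hat B_{e'}:=\frac{1}{D_e}\operatorname{tr}_{e}\hat K-c\,\I_{e'}.
\end{equation*}
Then define $\hat C_{e,e'}$ as the remainder,
\begin{equation*}
\hat C_{e,e'}:=\hat K-c\,\I-\hat A_e\otimes\I_{e'}-\I_e\otimes\hat B_{e'}.
\end{equation*}
The required properties follow in three checks.
\begin{itemize}
\item Hermiticity of each piece follows because partial traces preserve Hermiticity, and $c$ is real since $\operatorname{tr}\hat K$ is real for Hermitian $\hat K$.
\item Tracelessness of $\hat A_e$ follows from $\operatorname{tr}_e\operatorname{tr}_{e'}\hat K=\operatorname{tr}\hat K=cD$, which gives $\operatorname{tr}\hat A_e=cD/D_{e'}-cD_e=0$. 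The same computation gives $\operatorname{tr}\hat B_{e'}=0$.
\item For the partial traces of $\hat C_{e,e'}$, use $\operatorname{tr}_{e'}(\hat X_e\otimes\I_{e'})=D_{e'}\hat X_e$ and $\operatorname{tr}_{e'}(\I_e\otimes\hat Y_{e'})=(\operatorname{tr}\hat Y_{e'})\I_e$. These give
\begin{equation*}
\operatorname{tr}_{e'}\hat C_{e,e'}=\operatorname{tr}_{e'}\hat K-cD_{e'}\I_e-D_{e'}\hat A_e-(\operatorname{tr}\hat B_{e'})\I_e .
\end{equation*}
Substituting the definitions and $\operatorname{tr}\hat B_{e'}=0$ shows this vanishes. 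The $\operatorname{tr}_e$ condition follows symmetrically.
\end{itemize}

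\emph{Uniqueness.} Suppose two decompositions exist. Their difference gives
\begin{equation*}
0=c'\I+\hat A'_e\otimes\I+\I\otimes\hat B'_{e'}+\hat C',
\end{equation*}
with the same tracelessness and partial-trace constraints on the primed pieces. Apply the maps in order.
\begin{itemize}
\item Taking the full trace kills every term except $c'D$, because $\operatorname{tr}\hat C'=\operatorname{tr}_e\operatorname{tr}_{e'}\hat C'=0$. Hence $c'=0$.
\item Taking $\operatorname{tr}_{e'}$ leaves $D_{e'}\hat A'_e=0$, so $\hat A'_e=0$.
\item Taking $\operatorname{tr}_e$ likewise gives $\hat B'_{e'}=0$.
\end{itemize}
It then follows that $\hat C'=0$, so the decomposition is unique.

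An equivalent view: in an orthogonal Hermitian operator basis $\{\I,\sigma_\mu\}$ on each block (for example Paulis when the $D$'s are powers of two), the four pieces correspond to the four blocks of coefficients $(0,0)$, $(\mu,0)$, $(0,\nu)$, $(\mu,\nu)$. The decomposition is then the orthogonal direct sum under the Hilbert--Schmidt inner product, and uniqueness is immediate. I would mention this only as a remark, since the trace-map argument works for arbitrary dimensions.

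There is no real obstacle in this argument. The one step needing care is keeping the normalization factors $D_e$ and $D_{e'}$ straight in the partial-trace identities, especially that $\operatorname{tr}_{e'}(\I_e\otimes\hat B_{e'})$ contributes $\operatorname{tr}\hat B_{e'}$ times the identity rather than $D_{e'}$ times it.
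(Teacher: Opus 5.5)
Your proposal is correct and follows the paper's proof essentially step for step. It uses the same normalized partial-trace definitions of $c$, $\hat A_e$, $\hat B_{e'}$, with $\hat C_{e,e'}$ as the remainder, and the same uniqueness argument: apply the full trace, $\operatorname{tr}_{e'}$, and $\operatorname{tr}_e$ in turn to the difference of two decompositions. Your explicit checks of $\operatorname{tr}\hat A_e=0$ and $\operatorname{tr}_{e'}\hat C_{e,e'}=0$ simply spell out what the paper states ``by construction.''
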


\begin{proof}
Define
\begin{align}
c &= \frac{1}{D_e D_{e'}}\operatorname{tr}\hat K, \\
\hat A_e &= \frac{1}{D_{e'}}\operatorname{tr}_{e'}\hat K - c\,\I_e, \\
\hat B_{e'} &= \frac{1}{D_e}\operatorname{tr}_e\hat K - c\,\I_{e'}, \\
\hat C_{e,e'} &= \hat K - c\,\I - \hat A_e\otimes\I_{e'} - \I_e\otimes\hat B_{e'},
\end{align}
with $\operatorname{tr}_e$ and $\operatorname{tr}_{e'}$ the partial traces over the respective subsystems. Partial traces preserve Hermiticity, so all four pieces are Hermitian. By construction, $\operatorname{tr}\hat A_e=0$, $\operatorname{tr}\hat B_{e'}=0$, and both partial traces of $\hat C_{e,e'}$ vanish. For uniqueness, suppose
\begin{equation}
0 = \tilde c\,\I + \tilde{\hat A}\otimes\I + \I\otimes\tilde{\hat B} + \tilde{\hat C}
\end{equation}
with the same constraints. Taking the total trace gives $\tilde c=0$; taking $\operatorname{tr}_{e'}$ gives $\tilde{\hat A}=0$; taking $\operatorname{tr}_e$ gives $\tilde{\hat B}=0$; and therefore $\tilde{\hat C}=0$.
\end{proof}

Proposition~\ref{prop:two-block-decomp} separates any geometry-induced perturbation into a global phase $c$, local block responses $\hat A_e$, $\hat B_{e'}$, and an inter-block interaction $\hat C_{e,e'}$. Whether the local or the inter-block component dominates is set by the microscopic coupling mechanism.

\subsection{First-order local-field reduction}

\begin{corollary}[First-order local-field reduction]
\label{cor:local-field}
If the same-tick geometry effect is a state-independent stray field of amplitude $g(d)$ to which each block responds linearly through Hermitian operators $\hat R_e$ and $\hat R_{e'}$, then to first order in $g(d)$ the inter-block term vanishes and
\begin{widetext}
\begin{equation}
\hat K(d)=g(d)\left(\hat R_e\otimes\I_{e'}+\I_e\otimes\hat R_{e'}\right)+\;c(d)\,\I+O\!\left(g(d)^2\right).
\label{eq:local-field}
\end{equation}
\end{widetext}
Choosing the dominant response channel on each block, $\hat R_e=\hat P_e$ and $\hat R_{e'}=\hat P_{e'}$, and identifying $g(d)=J_0\kappa(d)$, gives the additive-local Hamiltonian $J_0\kappa(d)(\hat P_e+\hat P_{e'})$, the $J_1$-only limit of Eq.~\eqref{eq:two-parameter-family} below.
\end{corollary}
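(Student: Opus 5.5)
We prove Corollary~\ref{cor:local-field} by expanding the geometry-induced Hamiltonian in the stray-field amplitude and then applying the unique split of Proposition~\ref{prop:two-block-decomp}.

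\emph{Step 1: perturbative expansion.} The hypothesis of a state-independent stray field with linear block response means $\hat K(d)$ is a smooth function of $g=g(d)$ with $\hat K(0)$ a multiple of the identity, absorbed into $c(d)$. Its first-order coefficient is the sum of the two block responses. Taylor expansion then gives
\begin{equation*}
\hat K(d)=c(d)\,\I+g\left(\hat R_e\otimes\I_{e'}+\I_e\otimes\hat R_{e'}\right)+O(g^2).
\end{equation*}
The key physical input is that each block couples to the external field separately. The field is a c-number rather than a dynamical degree of freedom shared by the blocks, so no operator product $\hat R_e\otimes\hat R_{e'}$ can appear at linear order.

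\emph{Step 2: reading off the components.} Apply the projection formulas of Proposition~\ref{prop:two-block-decomp} to the first-order part. Take $c=\operatorname{tr}\hat K/(D_eD_{e'})$ and compute the partial traces. This gives $\hat A_e=g(\hat R_e-\operatorname{tr}\hat R_e\,\I_e/D_e)$, with the analogous expression for $\hat B_{e'}$. Substituting into the definition of $\hat C_{e,e'}$, the first-order terms cancel identically, so $\hat C_{e,e'}=O(g^2)$. Uniqueness in Proposition~\ref{prop:two-block-decomp} ensures this is the interaction component rather than an artifact of the bookkeeping. The trace parts of $\hat R_e$ and $\hat R_{e'}$ shift $c(d)$ and do not affect the operator form of Eq.~\eqref{eq:local-field}.

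\emph{Step 3: identification.} Set $\hat R_e=\hat P_e$ and $\hat R_{e'}=\hat P_{e'}$. Both are traceless, so $c(d)$ is unchanged by this choice. Identifying $g(d)=J_0\kappa(d)$ then yields the additive-local form $J_0\kappa(d)(\hat P_e+\hat P_{e'})$, which is the $J_1$-only limit.

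The main obstacle is Step 1: justifying that the response is genuinely linear and separable. This has to be stated as the content of the hypothesis rather than derived, and any cross term must be tracked explicitly as $O(g^2)$. The remaining steps are direct substitution.
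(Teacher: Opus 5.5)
Your argument is correct and matches the paper's. Both treat the separability of the first-order response (a c-number field coupling to each block on its own) as the content of the hypothesis, so the linear term is a sum of single-block operators and the interaction component $\hat C_{e,e'}$ of Proposition~\ref{prop:two-block-decomp} vanishes at first order. Your partial-trace computation showing $\hat C_{e,e'}=O(g^2)$, with the trace parts of $\hat R_e$ and $\hat R_{e'}$ absorbed into $c(d)$, simply spells out the paper's one-line observation that a first-order term on block $e$ must take the form $\hat A_e(d)\otimes\I_{e'}$.
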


\begin{proof}
A state-independent field acts on each block independently. The first-order perturbation on block $e$ cannot act nontrivially on $\mathcal{H}_{e'}$, so it takes the form $\hat A_e(d)\otimes\I_{e'}$, and likewise for $e'$. Linearity in the weak field gives proportionality to $g(d)$, and Hermiticity forces the response operators to be Hermitian.
\end{proof}

\subsection{Two-parameter subfamily and its $O(\theta^2)$ consequences}

\begin{remark}
\label{rem:general-family}
Projecting a general perturbation $\hat K$ onto the chosen block channels $\hat P_e$ and $\hat P_{e'}$ yields a two-parameter subfamily
\begin{equation}
\hat K(d)=J_1(d)\!\left(\hat P_e+\hat P_{e'}\right)+J_2(d)\,\hat P_e\otimes\hat P_{e'}.
\label{eq:two-parameter-family}
\end{equation}
The remaining Pauli components of $\hat K$ are neglected because, for each gate type, only one Pauli channel propagates through the Clifford schedule to the relevant CSS sector. Other components either commute with the stabilizers or map to the opposite sector. Under the Pauli twirl, the $J_1$ (additive-local) term produces only independent single-block faults, so geometry-induced pair data faults appear from this term only at $O(\theta^4)$, through joint two-block events. The $J_2$ (interaction) term produces correlated pair faults directly at $O(\theta^2)$. The body of the paper retains the interaction term alone, which dominates the correlated-fault budget in the interaction-dominated regime. Corollary~\ref{cor:local-field} gives the opposite (additive-local) limit.
\end{remark}

The coefficients $J_1$ and $J_2$ are projections of $\hat K$ onto the chosen block channels. For any two-block perturbation $\hat K$ on $\mathcal{H}_e\otimes\mathcal{H}_{e'}$,
\begin{equation}
J_1(d) = \frac{1}{D_e D_{e'}}\operatorname{tr}\!\left[\hat K(d)\,(\hat P_e\otimes\I_{e'})\right]
\label{eq:J1-proj}
\end{equation}
and
\begin{equation}
J_2(d) = \frac{1}{D_e D_{e'}}\operatorname{tr}\!\left[\hat K(d)\,(\hat P_e\otimes\hat P_{e'})\right],
\label{eq:J2-proj}
\end{equation}
with $\operatorname{tr}$ the trace over both subsystems. Assumption~2 of the main text sets $J_1=0$ and identifies $J_2(d)=J_0\kappa(d)$. In a measured or simulated perturbation, Eqs.~\eqref{eq:J1-proj}--\eqref{eq:J2-proj} determine both amplitudes from the extracted effective perturbation and indicate which regime applies.

The four-qubit two-block motif in Appendix~\ref{app:microscopic} tests the decomposition numerically. For the pure stray-drive perturbation ($J_2=0$) the inter-block norm vanishes identically. For the mixed regime ($J_2/J_1=0.1$) it is an order of magnitude smaller than the local pieces but dominates the leading correlated-pair sector, at $O(\theta^2)$ versus $O(\theta^4)$. The simulations in the main text use the pure interaction model ($J_1=0$).

\section{Supplementary geometry figures}
\label{app:geometry}

This appendix collects the supplementary geometry figures cited in the main text.

\subsection{Toric-base placement}

Figure~\ref{fig:toric-base} gives the toric-base placement rule used in the numerical bounded-thickness construction. The routing-layer assignment is applied only after the base-plane placement is fixed.

\begin{figure}[!htbp]
\centering
\includegraphics[width=0.98\columnwidth]{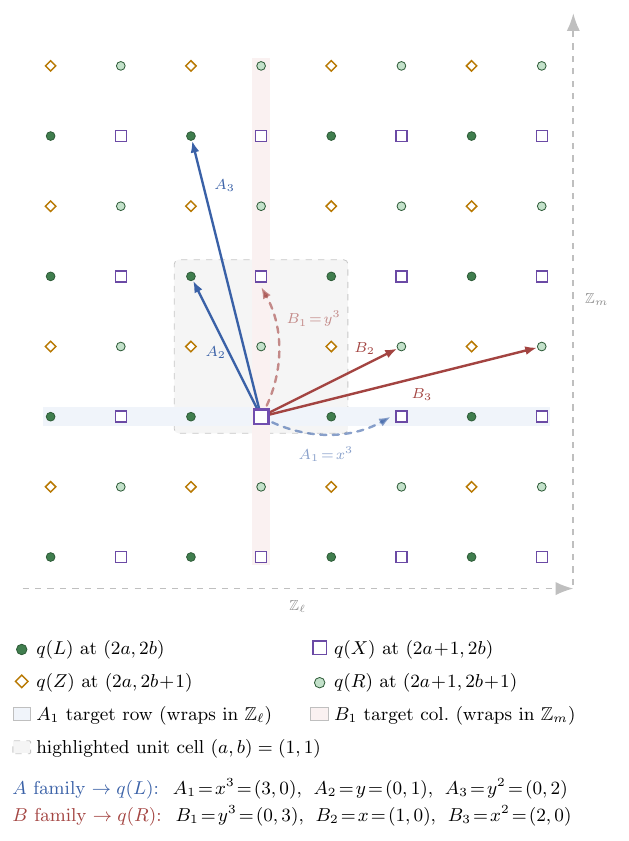}
\caption{Toric-base placement rule underlying the biplanar bounded-thickness embedding. The displayed lattice patch uses the coordinate convention $q(L):(2a,2b)$, $q(X):(2a{+}1,2b)$, $q(Z):(2a,2b{+}1)$, $q(R):(2a{+}1,2b{+}1)$. The highlighted $q(X)$ site at cell $(a,b)=(1,1)$ shows the $A$-family (blue, to $q(L)$) and $B$-family (red, to $q(R)$) shift vectors for the BB72 code. Dashed arrows indicate terms that wrap via periodic boundaries; the blue and red stripes highlight the target row and column, respectively. Routing-layer assignment is applied only after this base-plane placement has been fixed.}
\label{fig:toric-base}
\end{figure}

\section{Microscopic motif diagnostics}
\label{app:microscopic}

The two microscopic diagnostics cited in Sec.~\ref{sec:model} are presented here.

Figure~\ref{fig:two-block-norms} tests the two-block decomposition (Proposition~\ref{prop:two-block-decomp}) on a four-qubit motif (two simultaneous \textsc{cnot}s). For the pure stray-drive perturbation ($J_2=0$), the inter-block norm vanishes identically; for the mixed regime ($J_2/J_1=0.1$), it is an order of magnitude smaller than the local pieces but dominates the leading correlated-pair sector at $O(\theta^2)$ versus $O(\theta^4)$.

\begin{figure}[!htbp]
\centering
\includegraphics[width=\columnwidth]{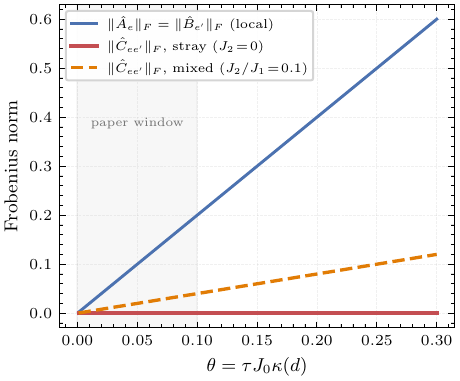}
\caption{Two-block decomposition norms from Proposition~\ref{prop:two-block-decomp} on a four-qubit motif (two simultaneous \textsc{cnot}s). The local norms $\lVert\hat A_e\rVert_F=\lVert\hat B_{e'}\rVert_F$ grow linearly with $\theta$ (blue). For the stray-drive perturbation ($J_2=0$), the inter-block norm $\lVert\hat C_{e,e'}\rVert_F$ vanishes identically (red, on the $x$-axis). For the mixed perturbation ($J_2/J_1=0.1$), $\hat C_{e,e'}$ is nonzero but $10\times$ smaller than the local pieces (orange dashed). The shaded band marks the coupling window used in the circuit-level simulations.}
\label{fig:two-block-norms}
\end{figure}

Figure~\ref{fig:weight-audit} tests the schedule-level propagation result (Theorem~\ref{thm:schedule-prop}) on a minimal one-round subcircuit (three data qubits connected to one ancilla by three consecutive \textsc{cnot}s). The ancilla is eliminated via Kraus decomposition, and the resulting data channel is Pauli-twirled using
\begin{equation}
p_P=\frac{1}{4^n}\sum_i\left|\operatorname{tr}\hat P\hat K_i\right|^2.
\end{equation}
At $\theta=0$ the baseline channel has $M_0=M_3=0.5$ and $M_1=M_2=0$, reflecting the even-parity ($ZZZ$) stabilizer projection. Panel~(a) shows the geometry-induced weight-$1$ mass tracking $\sin^2\theta$ to within $2\%$ for the stray drive, confirming faithful twirl propagation; $M_2=M_1$ by subcircuit symmetry. Panel~(b) shows $M_{\ge 3}(\theta)$ decreasing monotonically: the geometry-induced increment $\Delta M_{\ge 3}<0$, so the crosstalk generates only weight-$1$ and weight-$2$ mass in this minimal setting.

\begin{figure*}[!htbp]
\centering
\includegraphics[width=0.88\textwidth]{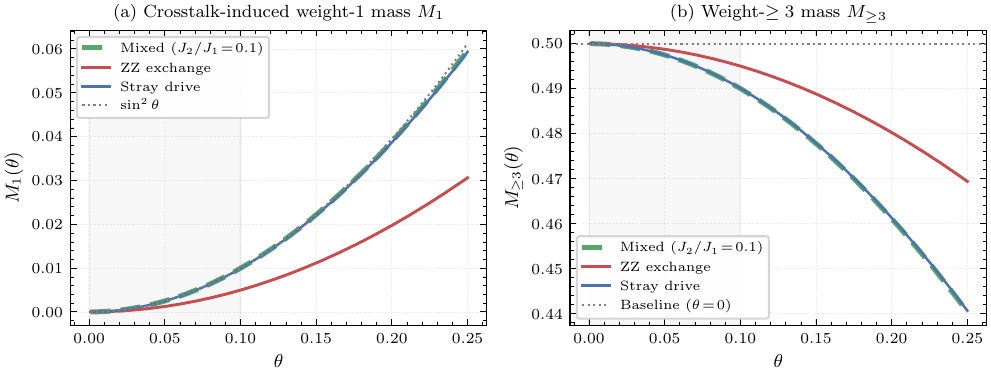}
\caption{Pauli-weight spectrum on a three-data-qubit one-round subcircuit after ancilla elimination and Pauli twirl. The baseline channel ($\theta=0$) has $M_0=M_3=0.5$ from the stabilizer projection; all geometry-induced mass appears at weights $1$ and $2$. (a)~Weight-$1$ mass $M_1(\theta)$ for three coupling types; the dotted line is $\sin^2\theta$. The weight-$2$ mass $M_2=M_1$ by the subcircuit symmetry. (b)~Total weight-$\ge 3$ mass $M_{\ge 3}(\theta)$, starting at $0.5$ (dotted) and decreasing: the geometry-induced increment $\Delta M_{\ge 3}<0$, so the crosstalk does not generate new higher-weight contributions.}
\label{fig:weight-audit}
\end{figure*}

\section{Supplementary numerical diagnostics}
\label{app:numerics}

This appendix collects the supplementary numerical diagnostics cited in the main text.

\subsection{Additional BB72 sweeps}

Figure~\ref{fig:bb72-additional} gathers three additional BB72 diagnostics: the exponential-kernel range sweep, the physical-error-rate sweep at fixed algebraic kernel, and a phase-diagram heat map of the monomial-to-biplanar logical-error-rate ratio at $p=3\times10^{-3}$. Across the sampled window, the biplanar embedding has a lower logical error rate.

\begin{figure*}[!htbp]
\centering
\includegraphics[width=\textwidth]{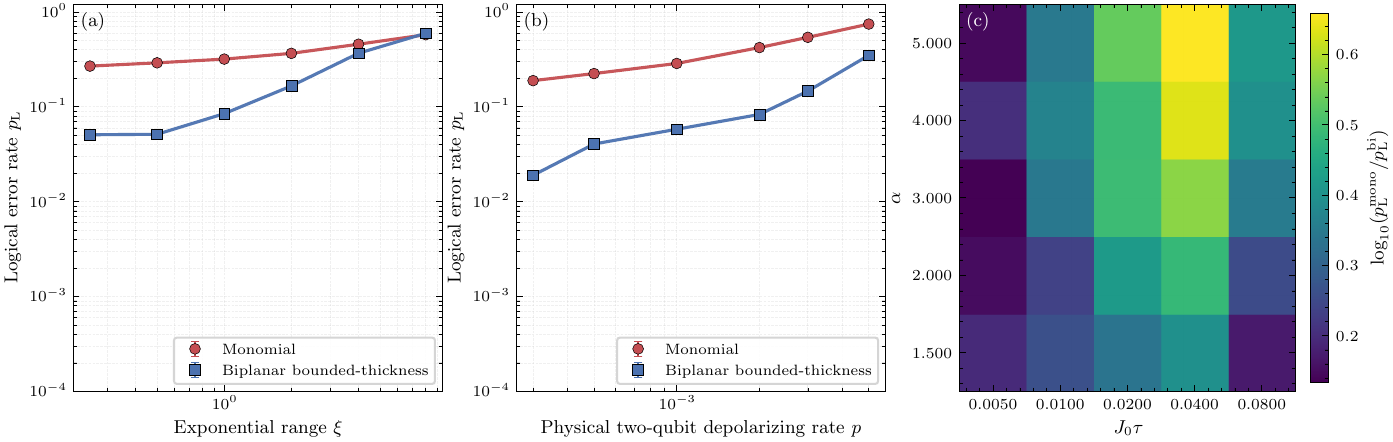}
\caption{Supplementary BB72 diagnostics in the $X$ sector. (a) Exponential-kernel sweep at fixed $J_0\tau=0.04$ and $p=10^{-3}$. (b) Physical-error-rate sweep for the regularized algebraic kernel at fixed $J_0\tau=0.04$ and $\alpha=3$. (c) Heat map of $\log_{10}(\mathrm{LER}_{\mathrm{mono}}/\mathrm{LER}_{\mathrm{bi}})$ across the available $(J_0\tau,\alpha)$ grid at $p=3\times10^{-3}$.}
\label{fig:bb72-additional}
\end{figure*}

Figure~\ref{fig:bb72-phase-diagram} shows the absolute logical error rate of both embeddings across the full $(J_0\tau,\alpha)$ plane. At fixed coupling, the monomial embedding varies little with $\alpha$, whereas the biplanar embedding suppresses logical error rate more strongly as the decay exponent moves above the summability threshold $\alpha=2$.

\begin{figure*}[!htbp]
\centering
\includegraphics[width=\textwidth]{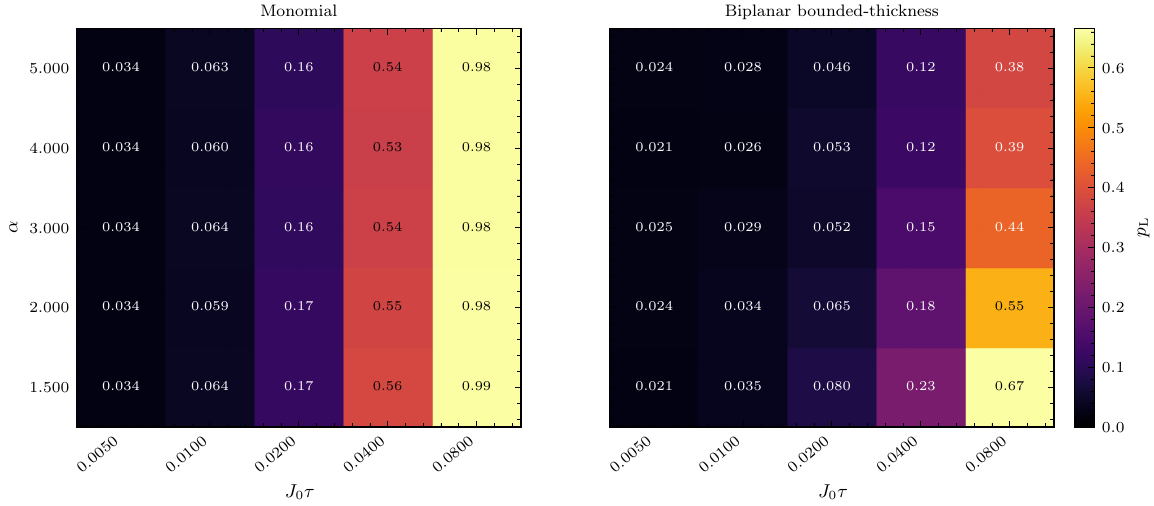}
\caption{BB72 phase-diagram heatmaps of the logical error rate across the $(J_0\tau,\alpha)$ parameter plane at $p=3\times10^{-3}$ in the $X$ sector. Left: monomial embedding. Right: biplanar embedding. Both panels use the same color scale. Cell annotations are the total logical error rate $p_L$.}
\label{fig:bb72-phase-diagram}
\end{figure*}

Figure~\ref{fig:bb72-by-kernel} separates the BB72 weighted-exposure scatter by kernel family. The crossing-kernel points provide the high-contrast diagnostic regime, whereas the positive-kernel points show the smoother exposure-controlled regime used in the main text.

\begin{figure*}[!htbp]
\centering
\includegraphics[width=\textwidth]{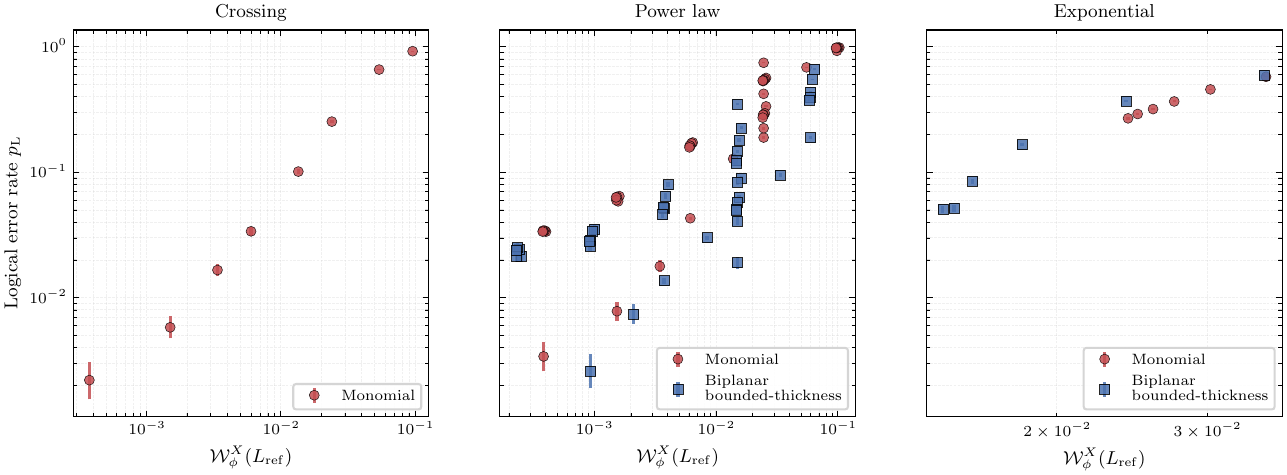}
\caption{BB72 weighted-exposure scatter split by kernel family (cf.\ the combined scatter in the main text, Fig.~\ref{fig:scatter}). Left: crossing kernel, which occupies the high-contrast validation regime. Center: power-law kernel, where both embeddings span a wide exposure range. Right: an exponential kernel that covers a narrower exposure window. Within each kernel class, the monotonic exposure--LER trend persists, confirming that the branch structure visible in the combined scatter arises from pooling different kernel families rather than from a breakdown of the exposure metric.}
\label{fig:bb72-by-kernel}
\end{figure*}

\subsection{Supporting evidence from BB90 and BB108}

Figure~\ref{fig:family-slices} gives the lightweight BB90 and BB108 $J_0\tau$ slices. They preserve the same embedding ordering as BB72 and BB144, but they are used only as supporting evidence.

The intermediate geometry audit also shows that, on BB90, a simple maximum-exposure score can be anisotropic enough that the biplanar embedding need not have lower exposure than the monomial embedding on every worst-case metric over supports, even though the sampled logical-error-rate hierarchy still favors bounded thickness. BB90 and BB108 are therefore treated as supplementary evidence.

\begin{figure*}[!htbp]
\centering
\includegraphics[width=\textwidth]{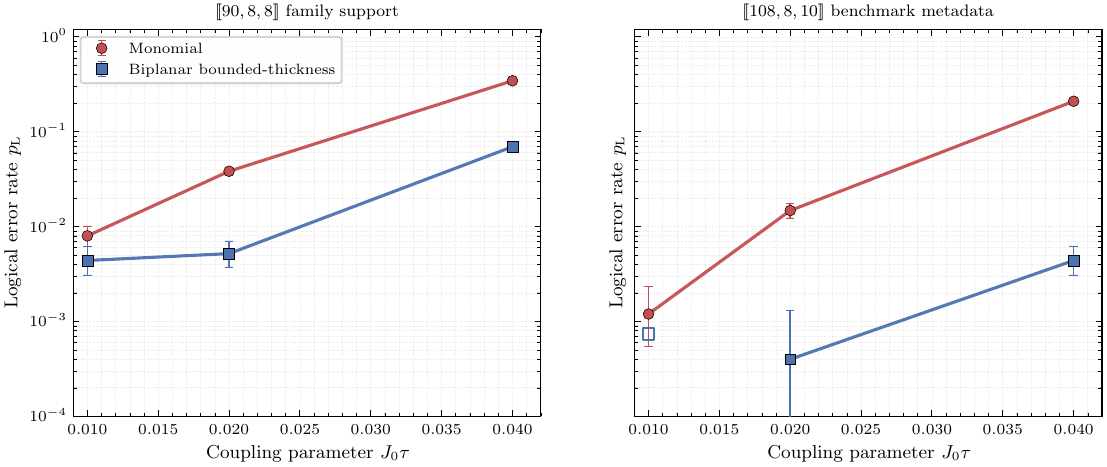}
\caption{Supporting evidence from BB90 and BB108 in the $X$ sector for the regularized algebraic kernel at $\alpha=3$ and $p=10^{-3}$. These slices support the same qualitative embedding hierarchy as the main BB72 and BB144 results.}
\label{fig:family-slices}
\end{figure*}

\subsection{Biplanar scaling diagnostic}

Figure~\ref{fig:biplanar-scaling} compares the biplanar BB72 and BB144 sweeps directly. It is included as a compact operating window diagnostic; no asymptotic conclusion is drawn from it.

\begin{figure}[!htbp]
\centering
\includegraphics[width=\columnwidth]{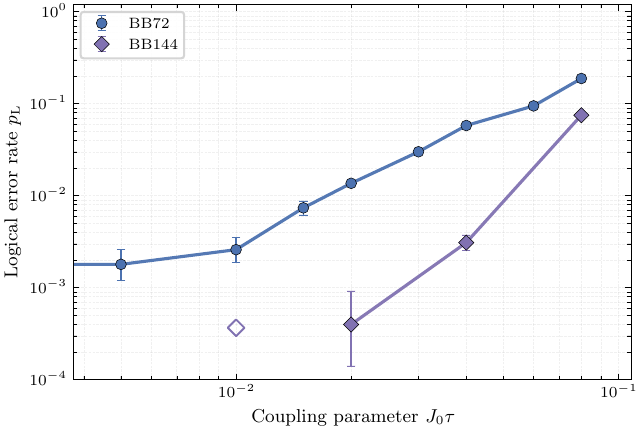}
\caption{Supplementary biplanar comparison between BB72 and BB144 under the regularized algebraic kernel at $\alpha=3$ and $p=10^{-3}$. The figure is included only as an operating-window diagnostic; no asymptotic threshold claim is extracted from it.}
\label{fig:biplanar-scaling}
\end{figure}

\subsection{Robustness checks}

Two reduced slices verify that the embedding hierarchy does not rely on tied local rates or on the choice of the $X$ sector.

\emph{Untied local rates.} Setting $p_{\mathrm{cnot}}=p_{\mathrm{prep}}=p_{\mathrm{meas}}=10^{-3}$ and $p_{\mathrm{idle}}=10^{-4}$ leaves the ordering unchanged: the logical error rates at the reference operating point are $0.250$ (monomial), $0.175$ (logical-aware), and $0.052$ (biplanar), compared with the tied-rate baseline $0.287$, $0.205$, and $0.058$.

\emph{$Z$-sector validation.} Decoding the $Z$ sector instead of the $X$ sector at the same operating point gives logical error rates $0.288$ (monomial), $0.216$ (logical-aware), and $0.059$ (biplanar), so the geometry penalty is not specific to the $X$ sector.

\section{Further logical-aware design details}
\label{app:la-details}

This appendix records two additional logical-aware details: a BB108 benchmark-metadata note and the thickness-two control calculation.

\subsection{Benchmark distance metadata versus pure-\texorpdfstring{$q(L)$}{q(L)} minimum weight}

The literature benchmark is \bb{108,8,10}, whereas the minimum weight among pure-$q(L)$ representatives relevant to the logical-aware objective is $12$. The first is a full code property; the second is a restricted-family property used by the design program of Sec.~\ref{sec:logical-aware}. We keep them separate throughout.

\subsection{Thickness-two extension}

Admissible thickness-two embeddings behave differently. Let $\mathfrak E_{\mathrm{bi}}$ be the family of bounded-thickness embeddings that preserve the BB layer split and are planar within each relevant same-round layer. Then the crossing-local objective is trivial on $\mathfrak E_{\mathrm{bi}}$.

\begin{proposition}[Triviality of the crossing-local objective on admissible bounded-thickness families]
For every $\phi\in\mathfrak E_{\mathrm{bi}}$, the implemented crossing-local correlation graph in the relevant $X$-sector rounds is empty. Hence $J_{\times}(\phi;\,\RX)=0$ for all $\phi\in\mathfrak E_{\mathrm{bi}}$.
\end{proposition}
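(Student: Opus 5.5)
The plan is to unwind the definition of the crossing-local correlation graph and show that no edge can appear for any $\phi\in\mathfrak E_{\mathrm{bi}}$. Under the crossing kernel $\kappa_\times(d)=\mathbf 1_{d=0}$ of Eq.~\eqref{eq:crossing-kernel}, the per-pair phase $\theta_a=\tau J_0\kappa_\times(d_\phi(e,e'))$ vanishes unless $d_\phi(e,e')=0$. The twirled pair probability $p(d)=\sin^2\theta(d)$ of Corollary~\ref{cor:twirl} is then zero for every positive separation. By Corollary~\ref{cor:linearized}, each retained coefficient $b_{ij}^X(\phi)$ is a sum over same-round block pairs $a$ of $M^X_{ij,a}\sin^2\theta_a$. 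In the implemented (linearized) model the $O(\Theta_t^4)$ remainder is itself generated only by products of nonzero $\theta_a$'s, so it also vanishes when all $\theta_a=0$. Hence $b_{ij}^X(\phi)>0$ requires at least one sector-relevant same-round pair of routed blocks at zero separation.

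The key geometric step is to show that in $\mathfrak E_{\mathrm{bi}}$ no two simultaneously active sector-relevant blocks are at distance zero. I split this into two cases.

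\emph{Same layer, same round.} Membership in $\mathfrak E_{\mathrm{bi}}$ requires the same-round routes in each relevant layer to be planar, hence non-crossing. The blocks within a tick are also disjoint (Assumption~1), so they share no endpoints. Two routed curves with no common endpoint and no crossing are disjoint compact sets, so their closest approach is strictly positive.

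\emph{Different layers.} Two rounds from different layers are not simultaneously active, so by Assumption~2 they contribute no pair term. Even if one considers them, the layer spacing $2h>0$ separates them.

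The first case is where I expect the real work to lie. One must be careful that ``planar within each layer'' is read as disjointness of the routed curves (including the layer-access segments), not merely the absence of the order-reversal crossings of Eq.~\eqref{eq:cross-criterion}. I would state this as the defining property of $\mathfrak E_{\mathrm{bi}}$ and note that the lane offsets $\varepsilon_i$ keep the polylines distinct.

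It follows that every $b_{ij}^X(\phi)$ for the three relevant $B$ rounds is zero, so $C_\phi^X$ has no edges. Consequently $C_\phi^X[S]$ is edgeless for every support $S=\supp L$ with $L\in\RX$, and $\nu_\phi^X(L)=0$. Taking the maximum over $\RX$ gives $J_\times(\phi;\RX)=0$. This is consistent with the earlier statement that the biplanar implementation induces no crossing edges on $S_{\mathrm{ref}}$.
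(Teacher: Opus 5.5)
Your proposal is correct and follows the paper's own argument: admissibility makes each relevant same-round layer planar, so no same-round block pair sits at zero separation, the crossing kernel gives zero weight to every candidate pair edge, and $J_\times=0$. Your extra steps make explicit what the paper's two-line proof leaves implicit. These are the vanishing of the retained coefficients (including the remainder, since $\Theta_t=0$ makes the tick unitary the identity) and the compactness argument that disjoint, non-crossing routes have strictly positive closest approach.
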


\begin{proof}
By definition of admissibility, each relevant same-round routed layer is planar, so no same-round projected route pairs cross. The crossing kernel, therefore, assigns zero weight to every candidate pair edge.
\end{proof}

The thickness-two objective on $\mathfrak E_{\mathrm{bi}}$ is therefore a distance-decay quantity such as the upper bound in Eq.~\eqref{eq:bi-surrogate-proof}. The present audit shows only modest additional gains from optimizing within the bounded-thickness family, so the main text keeps the default biplanar embedding fixed.

\section{Decoder mismatch and correlation-aware decoding}
\label{app:decoder}

The correlation-aware BP+OSD (CBP+OSD) construction below is formulated for the retained model rather than for the full circuit-level channel of the multi-round simulation, so no decoder-performance comparison is made here.

\subsection{Decoder-mismatch theorem}

Take $H=(1,1,1,1)$ over $\F$ and suppose the measured syndrome is $1$. Under an independent and identically distributed (iid) prior with bit-flip probability $p<1/2$, maximum-likelihood (ML) decoding prefers any weight-$1$ error to any weight-$3$ error. Under the correlated prior
\begin{equation}
\mu_{\mathrm{corr}}(e)\propto \mu_{\mathrm{iid}}(e)\exp\!\left[J\sum_{i<j}e_i e_j\right],
\end{equation}
the ordering can be reversed.

\begin{theorem}[Factorized-prior ML and correlated-prior MAP can disagree]
\label{thm:mismatch}
For the single-check code above, if
\begin{equation}
J>\frac{2}{3}\log\left(\frac{1-p}{p}\right),
\end{equation}
then correlated-prior MAP decoding chooses a weight-$3$ error while factorized-prior ML decoding chooses a weight-$1$ error.
\end{theorem}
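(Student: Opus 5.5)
The proof reduces to comparing unnormalized weights on the syndrome-consistent errors, which are exactly the odd-weight vectors in $\F^4$. Since $H=(1,1,1,1)$, these are the four weight-$1$ errors and the four weight-$3$ errors. Both priors assign a weight that depends only on the Hamming weight $w=\wt e$. I therefore compare the two weight classes, noting that both decoders are invariant under permutations of the four bits.

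\emph{Factorized prior.} The iid prior is
\begin{equation*}
\mu_{\mathrm{iid}}(e)=p^{w}(1-p)^{4-w}.
\end{equation*}
The ratio of a weight-$3$ error to a weight-$1$ error is
\begin{equation*}
\frac{p^3(1-p)}{p(1-p)^3}=\Bigl(\frac{p}{1-p}\Bigr)^2,
\end{equation*}
which is $<1$ because $p<1/2$. Hence ML under the factorized prior picks a weight-$1$ error.

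\emph{Correlated prior.} The factor $\exp[J\sum_{i<j}e_ie_j]$ equals $\exp[J\binom{w}{2}]$. It is $1$ for $w=1$ and $e^{3J}$ for $w=3$. The correlated weight ratio of weight-$3$ to weight-$1$ is therefore
\begin{equation*}
e^{3J}\Bigl(\frac{p}{1-p}\Bigr)^2.
\end{equation*}
Taking logarithms, this ratio exceeds $1$ exactly when
\begin{equation*}
3J>2\log\frac{1-p}{p},
\end{equation*}
which is the stated hypothesis. Under that hypothesis every weight-$3$ error strictly dominates every weight-$1$ error, so the MAP choice among syndrome-consistent errors has weight $3$. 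The normalization constant cancels from both comparisons, so it plays no role.

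There is no real obstacle. The only points needing care are checking that the odd-weight set is exactly $\{w=1\}\cup\{w=3\}$ (the weight-$5$ case does not exist for $n=4$), and noting that the decisions are strict, so ties within a class do not affect which weight class is chosen.
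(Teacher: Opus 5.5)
Your proof is correct and follows the paper's argument essentially step for step. You restrict to the syndrome-consistent weight-$1$ and weight-$3$ errors, show the iid ratio $(p/(1-p))^2<1$, and use the factor $e^{3J}$ from the three occupied pairs to arrive at exactly the paper's condition $2\log\frac{p}{1-p}+3J>0$; your remarks on the missing weight-$5$ case and the cancelling normalization constant make explicit what the paper leaves implicit.
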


\begin{proof}
Any odd-weight error has syndrome $1$. Under the iid prior, the log-probability difference between a weight-$3$ error and a weight-$1$ error is
\begin{equation}
\log\frac{p^3(1-p)}{p(1-p)^3}=2\log\frac{p}{1-p}<0,
\end{equation}
so the weight-$1$ configuration is preferred. Under the correlated prior, the weight-$3$ configuration gains an additional factor $e^{3J}$ because it has three occupied pairs, whereas the weight-$1$ configuration has none. The weight-$3$ error is preferred if and only if
\begin{equation}
2\log\frac{p}{1-p}+3J>0,
\end{equation}
which is the stated condition.
\end{proof}

\subsection{Exact augmented decoding for the retained single-and-pair model}

Let $x_i\in\{0,1\}$ denote a retained single fault on data location $i$, and let $y_a\in\{0,1\}$ denote a retained pair fault on edge $a=(i_a,j_a)$ of the retained correlation graph. Collect them into the latent fault-location vector
\begin{equation}
z=\left(x_1,\ldots,x_n,y_1,\ldots,y_m\right)^\top.
\end{equation}
If $F$ maps latent single and pair locations to data errors, then the observed syndrome obeys
\begin{equation}
s=HFz=\widetilde H z,
\qquad
\widetilde H:=HF.
\end{equation}
Under an \emph{independent retained-location prior},
\begin{equation}
\Pr(z)\propto \prod_i u_i^{x_i}\left(1-u_i\right)^{1-x_i}\prod_a v_a^{y_a}\left(1-v_a\right)^{1-y_a}.
\end{equation}
MAP decoding then reduces to weighted decoding on the augmented matrix.

\begin{theorem}[Exact reduction to weighted decoding]
\label{thm:augmented}
Under the independent retained-location prior, the MAP estimate is
\begin{equation}
\hat z=\argmin_{\widetilde H z=s}\left(\sum_i \lambda_i x_i+\sum_a \lambda_{i_aj_a}y_a\right),
\label{eq:augmented-map}
\end{equation}
where
\begin{equation}
\lambda_i=\log\frac{1-u_i}{u_i},
\qquad
\lambda_{i_aj_a}=\log\frac{1-v_a}{v_a}.
\end{equation}
The corresponding data estimate is $\hat e=F\hat z$.
\end{theorem}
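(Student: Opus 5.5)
The approach is to write the MAP objective as a negative log-posterior over the latent vector $z$. The syndrome constraint then becomes a hard restriction to the affine set $\{z:\widetilde H z=s\}$, and what remains is a linear objective. Throughout, assume $0<u_i,v_a<1$, so that all logarithms are finite. Locations with $u_i=0$ or $v_a=0$ can be dropped from $z$; for such a location the convention $\lambda=+\infty$ is consistent.

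\emph{Step 1 (posterior).} By Bayes' rule,
\begin{equation*}
\Pr(z\mid s)\propto \Pr(s\mid z)\Pr(z).
\end{equation*}
Under the retained model the syndrome is deterministic given $z$, namely $s=\widetilde H z$ with $\widetilde H=HF$. Hence $\Pr(s\mid z)=\mathbf 1[\widetilde H z=s]$. The MAP estimate is therefore the maximizer of $\Pr(z)$ over the feasible set $\{z\mid\widetilde H z=s\}$. The noiseless-syndrome form $s=\widetilde H z$ is taken from the statement. Measurement faults, if desired, would enter as additional latent coordinates of $z$ without changing the argument.

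\emph{Step 2 (linearize the log-prior).} Take $-\log$ of the independent retained-location prior. For each binary coordinate,
\begin{equation*}
-\log\bigl[u^{x}(1-u)^{1-x}\bigr]=-\log(1-u)+x\log\frac{1-u}{u}.
\end{equation*}
Summing over all coordinates gives
\begin{equation*}
-\log\Pr(z)=\mathrm{const}+\sum_i\lambda_i x_i+\sum_a\lambda_{i_aj_a}y_a.
\end{equation*}
The constant is $-\sum_i\log(1-u_i)-\sum_a\log(1-v_a)$, together with the normalization. It is independent of $z$. Maximizing $\Pr(z)$ on the feasible set is therefore equivalent to minimizing the weighted objective of Eq.~\eqref{eq:augmented-map} on that set, which proves the first claim.

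\emph{Step 3 (data estimate).} The data error is a deterministic function of the latent locations, $e=Fz$. The data estimate induced by $\hat z$ is therefore $\hat e=F\hat z$.

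\emph{Main obstacle.} The only subtle point is interpretive. The result is MAP over the latent vector $z$, not over the data error $e$. Several $z$ can map to the same $e$, for example two singles versus one pair on the same edge. A MAP estimate over $e$ would require summing the probabilities of all preimages, and its maximizer need not equal $F\hat z$. I would state explicitly that the theorem concerns latent-configuration MAP. I would also note that this coincides with data-level MAP only when $F$ is injective on the relevant configurations.
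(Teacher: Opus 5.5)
Your proposal is correct and follows essentially the same route as the paper: the deterministic syndrome $\widetilde H z=s$ restricts the posterior to an affine set, and the negative logarithm of the factorized Bernoulli prior is linear in $z$ up to a $z$-independent constant, which gives the stated weights $\lambda_i$ and $\lambda_{i_aj_a}$. Your extra remarks are sound clarifications that the paper leaves implicit: the handling of degenerate probabilities, and the point that this is MAP over the latent vector $z$ rather than over $e=F z$. One small caveat there is that injectivity of $F$ is sufficient, but not strictly necessary, for latent-MAP and data-MAP to coincide.
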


\begin{proof}
Taking the negative logarithm of the independent retained-location prior yields the linear objective in Eq.~\eqref{eq:augmented-map} up to an additive constant. The syndrome constraint is $\widetilde H z=s$. Therefore, MAP decoding is equivalent to the stated constrained weighted minimization.
\end{proof}

Without pair variables, Theorem~\ref{thm:augmented} reduces to ordinary weighted decoding and hence to standard BP+OSD. If one truncates the retained graph by discarding sufficiently weak pair edges, the same theorem applies to the truncated retained model.

One can also work directly on the original data-error bits with the approximate pairwise-Ising correction,
\begin{equation}
\mu_{\mathrm{pair}}(e)\propto \mu_{\mathrm{iid}}(e)\exp\left(\sum_{i<j}J_{ij}e_i e_j\right).
\end{equation}
For a single pair factor coupling bits $i$ and $j$, the outgoing log-likelihood-ratio correction is
\begin{equation}
\Lambda_{f_{ij}\to i}(\ell)=\log(1+e^{-\ell})-\log(1+e^{J_{ij}-\ell}),
\end{equation}
where the incoming log-likelihood ratio from bit $j$ is
\begin{equation}
\ell=\log\frac{\Pr(e_j=0)}{\Pr(e_j=1)}
\end{equation}
(positive $\ell$ favours $e_j=0$). This approximation is useful algorithmically but is not needed for the results reported here.

Theorem~\ref{thm:mismatch} shows that factorized-prior ML and correlated-prior MAP decoding can disagree even in a four-bit example. Factorized-prior BP+OSD is therefore an appropriate baseline for the retained simulations, but it is not generally MAP-optimal when pair correlations are explicitly retained.

\subsection{Decoder-aware first-order refinement}

The objective $J_{\kappa}$ is intrinsic to the retained geometry model and does not depend on a decoder. In the weak-correlation regime, one can sharpen the design problem by expanding the logical failure probability to first order in the geometry-induced pair strength. Fix an embedding $\phi$ and a deterministic decoder $D$. Let $\xi$ collect all retained fault variables independent of a coupling parameter $\lambda$, and let $A(\phi)$ be the set of retained pair locations. For each $a\in A(\phi)$, let $Y_a\sim \mathrm{Bernoulli}(\lambda\bar v_a(\phi))$ independently, and let $F_D^{\phi}(\xi,Y)\in\{0,1\}$ indicate decoder failure.

\begin{theorem}[Weak-correlation first-order ordering]
\label{thm:first-order-la}
For every fixed embedding $\phi$, deterministic decoder $D$, and $\lambda\in[0,\,(\max_a \bar v_a(\phi))^{-1}]$,
\begin{equation}
p_L^D(\phi,\lambda)
:=
\mathbb E\left[F_D^{\phi}(\xi,Y)\right]
=
p_{L,0}^D(\phi)+\lambda C_D(\phi)+O(\lambda^2),
\end{equation}
where $p_{L,0}^D(\phi):=\mathbb E[F_D^{\phi}(\xi,0)]$, $\Delta_a^D(\phi):=\mathbb E[F_D^{\phi}(\xi,e_a)-F_D^{\phi}(\xi,0)]$, and
\begin{equation}
C_D(\phi):=\sum_{a\in A(\phi)} \bar v_a(\phi)\,\Delta_a^D(\phi).
\end{equation}
If two embeddings $\phi$ and $\phi'$ have the same baseline $p_{L,0}^D$ and satisfy $C_D(\phi')<C_D(\phi)$, then $p_L^D(\phi',\lambda)<p_L^D(\phi,\lambda)$ for all sufficiently small $\lambda>0$.
\end{theorem}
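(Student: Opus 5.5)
The plan is to condition on $\xi$ and expand exactly in the finitely many independent Bernoulli variables $Y_a$. Since $A(\phi)$ is finite and $F_D^\phi$ takes values in $\{0,1\}$, I first write, for each fixed $\xi$,
\begin{equation*}
\mathbb E_Y\!\left[F_D^\phi(\xi,Y)\right]=\sum_{B\subseteq A(\phi)}\prod_{a\in B}\lambda\bar v_a\prod_{a\notin B}\bigl(1-\lambda\bar v_a\bigr)\,F_D^\phi(\xi,\mathbf 1_B),
\end{equation*}
where $\mathbf 1_B$ is the indicator vector of $B$. The restriction $\lambda\le(\max_a\bar v_a)^{-1}$ is exactly what makes each $\lambda\bar v_a\in[0,1]$, so these are genuine probabilities. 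The right-hand side is a polynomial in $\lambda$ of degree at most $|A(\phi)|$. Taking the expectation over $\xi$, which is independent of $\lambda$, preserves this polynomial structure, and the coefficients stay bounded because $0\le F\le 1$.

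Next I read off the low-order coefficients. The $\lambda^0$ term comes only from $B=\varnothing$ and equals $\mathbb E[F_D^\phi(\xi,0)]=p_{L,0}^D(\phi)$. The $\lambda^1$ term has two sources. The sets $B=\{a\}$ contribute $\lambda\bar v_a F(\xi,e_a)$. The linear part of $\prod_a(1-\lambda\bar v_a)$ in the $B=\varnothing$ term contributes $-\lambda\sum_a\bar v_a F(\xi,0)$. Summing these and averaging over $\xi$ gives $\lambda\sum_a\bar v_a\Delta_a^D(\phi)=\lambda C_D(\phi)$.

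All remaining coefficients multiply $\lambda^k$ with $k\ge2$. Each is bounded in absolute value by a constant depending only on $|A(\phi)|$ and $\max_a\bar v_a$, roughly $\binom{|A|}{k}2^k(\max_a\bar v_a)^k$. Hence the remainder is $O(\lambda^2)$ uniformly on the stated interval, which proves the expansion.

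For the ordering statement, I subtract the two expansions, using the common baseline $p_{L,0}^D$. This gives
\begin{equation*}
p_L^D(\phi',\lambda)-p_L^D(\phi,\lambda)=\lambda\bigl(C_D(\phi')-C_D(\phi)\bigr)+R(\lambda),
\end{equation*}
with $|R(\lambda)|\le K\lambda^2$ for a constant $K$ obtained by combining the two remainder constants. Whenever $0<\lambda<\bigl(C_D(\phi)-C_D(\phi')\bigr)/K$, and $\lambda$ also lies within both admissible intervals, the difference is strictly negative.

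The only step needing care is making the $O(\lambda^2)$ constant explicit and uniform, since the ordering conclusion requires a genuine bound rather than a formal expansion. Finiteness of $A(\phi)$ together with boundedness of $F$ is what delivers this. I would state the constant explicitly so that "sufficiently small" becomes a concrete threshold.
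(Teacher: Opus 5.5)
Your proposal is correct and takes the same route as the paper's proof, which expands by the number of pair events (zero-pair probability $1-\lambda\sum_a\bar v_a+O(\lambda^2)$, single-pair probabilities $\lambda\bar v_a+O(\lambda^2)$, multi-pair configurations $O(\lambda^2)$) and collects terms. Your version is more careful: writing the expectation as an exact polynomial in $\lambda$ yields a uniform $O(\lambda^2)$ remainder constant on the admissible interval, and you derive the ordering claim explicitly, whereas the paper leaves that step implicit; note that both arguments tacitly assume $\xi$ is independent of $Y$.
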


\begin{proof}
Expanding in $\lambda$: the zero-pair probability is $1-\lambda\sum_a\bar v_a+O(\lambda^2)$ and each single-pair probability is $\lambda\bar v_a+O(\lambda^2)$; all multi-pair configurations are $O(\lambda^2)$. Collecting terms gives the stated expansion.
\end{proof}

A pilot finite-difference estimate at $(p,\alpha)=(10^{-3},3)$ using BP+OSD with $10{,}000$ shots gives $C_D\approx 0.82$ (monomial), $0.45$ (logical-aware), and $0.26$ (biplanar), matching the intrinsic exposure ordering with a sharper monomial-to-biplanar ratio ($3.2\times$ for $C_D$ versus $1.8\times$ for $J_\kappa$).

\section{Public-data-informed priors for \texorpdfstring{$\kappa$}{kappa}}
\label{app:hardware}

The kernel $\kappa(d)$ and coupling scale $J_0$ are free parameters of the model. This appendix uses publicly available superconducting-qubit crosstalk data to verify that the paper's operating window is hardware-plausible and that the embedding hierarchy persists across the resulting kernel range. The public datasets constrain an effective geometry-dependent crosstalk scale and decay envelope; they do not by themselves distinguish the inter-block interaction mechanism ($\hat P_e\otimes\hat P_{e'}$) used in the simulations from other crosstalk channels such as additive-local stray fields.

\subsection{Amplitude scale}

Kosen et al.\ report xy-drive crosstalk on a 25-qubit flip-chip processor with average values $-39.4\pm 3.7$\,dB and $-37.4\pm 3.9$\,dB across two device variants, and a worst-case value of $-27$\,dB~\cite{kosen_crosstalk_2024}. Interpreting each as a spurious rotation during a $\pi/2$ target pulse via the proxy-angle map
\begin{equation}
\theta_{\mathrm{xtalk}}=\frac{\pi}{2}\,10^{x_{\mathrm{dB}}/20}
\end{equation}
gives
\begin{equation}
\theta_{\mathrm{avg}}\approx 0.017\text{--}0.021\,\mathrm{rad},
\qquad
\theta_{\mathrm{worst}}\approx 0.070\,\mathrm{rad}.
\end{equation}
Separately, the Kunlun BB processor~\cite{wang_demo_2025} reports simultaneous-CZ error $0.98\%$ versus isolated-CZ error $0.73\%$; mapping the excess $\Delta p\approx 0.25\%$ through $\theta\approx\sqrt{\Delta p}$ gives a proxy angle of $\theta\approx 0.050$\,rad. Both estimates fall within the paper's sweep window $J_0\tau\in[0.02,\,0.08]$, so the explored coupling range is consistent with public superconducting-hardware crosstalk data.

\subsection{Kernel shape and decay length}

Barrett et al.\ fit DC flux crosstalk on a 16-qubit flip-chip array to the shifted reciprocal law $c(d)=100/(ad+1)+c_0$ with $a=178.2$\,mm$^{-1}$ and $c_0=0.264\%$~\cite{barrett_flux_2023}. Normalizing to $\kappa_{\mathrm{DC}}(d)=c(d)/c(0)$ and converting to the paper's pitch units via a physical pitch $\delta\in[0.2,\,0.5]$\,mm yields
\begin{equation}
\kappa_{\mathrm{DC}}(1\,\text{pitch})\approx 0.01\text{--}0.03.
\end{equation}
Fitting an exponential $e^{-d/\xi}$ to this one-pitch value gives $\xi\approx 0.25$--$0.30$ pitch units. For fast-flux pulses the same paper reports crosstalk roughly $100\times$ smaller in amplitude, further reducing the effective coupling scale but not directly constraining the normalized decay length $\xi$ without additional shape assumptions. Conversely, Kosen's xy-drive trend of $-1$\,dB/mm corresponds to an amplitude attenuation length $\xi_{\mathrm{mm}}=20/({\ln 10})\approx 8.7$\,mm, giving $\xi\approx 17$--$43$ pitch units for $\delta\in[0.2,\,0.5]$\,mm; drive leakage therefore decays much more slowly than flux leakage.

Aguila et al.\ demonstrate that active Z-line compensation can reduce flux crosstalk from $56.5$\textperthousand{} to $0.13$\textperthousand{}~\cite{aguila_flux_2025}, a factor of $\sim\!400$, providing a lower bound on the compensated-floor regime.

\subsection{Comparison with existing simulation data}

Figure~\ref{fig:proxy-kernel} shows dedicated BB72 Monte Carlo results at $J_0\tau=0.04$ and $p=10^{-3}$ for the exponential kernel at four decay lengths: $\xi=0.25$ (flux-like), $1.0$, $4.0$, and $8.0$ pitch units. These cover the flux-like end of the hardware-informed range and extend toward, but do not fully reach, the drive-like regime ($\xi\gtrsim 17$). At $\xi=0.25$, the biplanar embedding has a logical error rate of $0.051$ versus $0.287$ for monomial (a factor of $5.6$); at $\xi=1.0$, the ratio is $3.3$. As $\xi$ increases toward the drive-like regime, the kernel flattens, and nearly all pairs contribute equally regardless of layout, so the gap narrows: at $\xi=8$ the two embeddings give comparable logical error rates ($\sim\!0.6$). The embedding hierarchy holds throughout the simulated range and is strongest in the steep-decay regime, where layer separation has the largest geometric effect.

\begin{figure}[!htbp]
\centering
\includegraphics[width=\columnwidth]{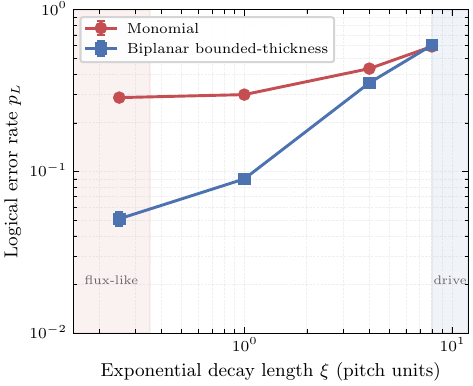}
\caption{BB72 logical error rate at $J_0\tau=0.04$ and $p=10^{-3}$ as a function of exponential decay length $\xi$. The shaded bands mark the hardware-informed regimes: flux-like ($\xi\approx 0.25$, from Barrett DC-flux data) and the lower edge of the drive-like range ($\xi\gtrsim 17$, from Kosen xy-crosstalk data; the simulated range extends to $\xi=8$). The bounded-thickness embedding outperforms the monomial embedding across the steep-decay range and converges at large $\xi$ where the kernel is nearly flat. $2000$ shots per point.}
\label{fig:proxy-kernel}
\end{figure}

\bibliography{refs}

\end{document}